\newcommand{\nats}                  {{\mathbb N}}
\newcommand{\Mt} {\mathbb{M}}
\newcommand{\botop}{\ensuremath{\bot\mkern-14mu\top}}
\newcommand{\MyArc}[1][1]{%
    \begin{scope}[#1]
        \fill[white]
            (0,0) arc (270:180:4*\ArW) -- (0,4*\ArW) -- cycle ;
        \draw   (0,0) arc (270:180:4*\ArW) ;
        \fill[white]
            (0,0) arc (90:180:4*\ArW) -- (0,-4*\ArW) -- cycle ;
        \draw (0,0) arc (90:180:4*\ArW) ;
    \end{scope}
}
\newcommand{\MyArcRv}[1][1]{%
    \begin{scope}[#1]
        \fill[white] (0,4*\ArW) arc (0:-90:4*\ArW)
            arc (90:0:4*\ArW) -- cycle ;
        \draw (0,4*\ArW) arc (0:-90:4*\ArW)
            arc (90:0:4*\ArW) ;
    \end{scope}
}
\tikzset{%
    Arrow width/.store in=\ArW,
    Arrow width=.8pt,
    pics/.cd,
    arc east/.style={code = {\MyArc[xshift=.5*\pgflinewidth]}},
    arc west/.style={code = {\MyArc[rotate=180,xshift=.5*\pgflinewidth]}},
    arc rv east/.style={code = {\MyArcRv[xshift=.5*\pgflinewidth]}},
    arc rv west/.style={code = {\MyArcRv[rotate=180,xshift=.5*\pgflinewidth]}}
}
\NewDocumentCommand{\imparrow}{
    D<>{} % above
%    O{1.4em} %length
    O{.9em} %length
    m % pattern
    D<>{} % below
    }{%
    \def\Law{}
    \def\Raw{}
    \def\Style{}
    \IfBeginWith{#3}{<}{\def\Law{pic{arc west}}}{}
    \IfBeginWith{#3}{>}{\def\Law{pic{arc rv west}}}{}
    \IfEndWith{#3}{>}{\def\Raw{pic{arc east}}}{}
    \IfEndWith{#3}{<}{\def\Raw{pic{arc rv east}}}{}
    \IfSubStr[1]{#3}{=}{\def\Style{double}}{}
    %\scalerel*{
    \tikz [baseline=-.5ex,line width=.4pt,double distance=1.4pt]
        \draw[\Style] (-3pt,0) (0,0) \Law
            -- node[below=-2pt] {#1} node[above=-2pt] {#4}
            (#2,0) \Raw
            ++(3pt,0);
     }
\NewDocumentCommand{\imparrowlong}{
    D<>{} % above
%    O{1.4em} %length
    O{1.1em} %length
    m % pattern
    D<>{} % below
    }{%
    \def\Law{}
    \def\Raw{}
    \def\Style{}
    \IfBeginWith{#3}{<}{\def\Law{pic{arc west}}}{}
    \IfBeginWith{#3}{>}{\def\Law{pic{arc rv west}}}{}
    \IfEndWith{#3}{>}{\def\Raw{pic{arc east}}}{}
    \IfEndWith{#3}{<}{\def\Raw{pic{arc rv east}}}{}
    \IfSubStr[1]{#3}{=}{\def\Style{double}}{}
    %\scalerel*{
    \tikz [baseline=-.5ex,line width=.4pt,double distance=1.4pt]
        \draw[\Style] (-3pt,0) (0,0) \Law
            -- node[below=-2pt] {#1} node[above=-2pt] {#4}
            (#2,0) \Raw
            ++(3pt,0);
     }
\newcommand{\imp}[2]{#1{\!\!{\imparrow{=>}\!\!}}#2}
  \newcommand{\impscript}[2]{#1{\scriptsize\!\!{\imparrow{=>}\!\!}}#2}
\newcommand{\dcoimp}[2]{#2{\!\!{\imparrow{=<}\!\!}}#1}
  \newcommand{\dcoimpscript}[2]{#2{\scriptsize\!\!{\imparrow{=<}\!\!}}#1}
\newcommand{\biimp}[2]{#1{\!\!{\imparrow{<=>}\!\!}}#2}
  \newcommand{\biimpscript}[2]{#1{\scriptsize\!\!{\imparrowlong{<=>}\!\!}}#2}
\DeclareMathOperator*{\ifelse}{\textsc{if}}
\newcommand{\conn}{{\copyright}}
\newcommand{\val}{v}
\newcommand{\cB}{\mathcal{B}}
\newcommand{\fibS}{\bullet}
\newcommand{\fib}[2]{#1\fibS#2}
\newcommand{\EqDiv}[3]{#2 \dashv\vdash_{#1} #3}
\newcommand{\TWO}{\mathbbm{2}}
\theoremstyle{definition}
\newtheorem{definition}{\vspace{1mm}Definition}[section]
\theoremstyle{plain}% default
\newtheorem{lemma}[definition]{\vspace{1mm}Lemma}
\newtheorem{theorem}[definition]{\vspace{1mm}Theorem}
\newtheorem{corollary}[definition]{\vspace{1mm}Corollary}
\newtheorem{proposition}[definition]{\vspace{1mm}Proposition}
\theoremstyle{theorem}% default
\newtheorem{example}[definition]{\vspace{1mm}Example}
\newtheorem{remark}[definition]{\vspace{1mm}Remark}
\newcommand{\cL}{\mathcal{L}}
\newcommand{\cC}{\mathcal{C}}
\newcommand{\der}    						  {\vdash}
\newcommand{\sH}                              {\mathscr{H}}
\newcommand{\Nat} {\mathbf{N}}
\newcommand{\tuple}[1]                         {{\langle #1\rangle}}
\newcommand{\nmatrix}{Nmatrix}
\newcommand{\nmatrices}{Nmatrices}
\newcommand{\nvalued}{Nvalued}
\DeclareMathOperator*{\sub}{\mathsf{sub}}
\DeclareMathOperator*{\var}{\mathsf{var}}
\DeclareMathOperator*{\head}{\mathsf{head}}
\DeclareMathOperator*{\skel}{\mathsf{skel}}
\newcommand{\ou}          {\vee}
\newcommand{\e}          {\wedge}
\newcommand{\bbtop}{\top\!\!\!\!\top}
\newcommand{\bbbot}{\bot\!\!\!\!\bot}
\newcommand{\bbbotop}{\botop\!\!\!\!\botop}
\newcommand{\bt}{\mathbf{t}}
\newcommand{\Val}{\textrm{Val}}
\newcommand{\ignore}[1]{}
\date{}
\title{Combining fragments of classical logic: \\
When are interaction principles needed?
\footnote{We thank the anonymous referees for their invaluable comments on an earlier version of this paper. This research was done under the scope of R\&D Unit 50008, funded by the applicable financial framework (FCT/MEC, UID/EEA/50008/2013, through national funds and when applicable co-funded by FEDER/PT2020), and is part of the MoSH initiative of SQIG at Instituto de Telecomunicações. 
The third author was partially funded by CNPq.}}
\author{%
Carlos Caleiro \\
  \multicolumn{1}{p{1\textwidth}}{\centering
  \texttt{ccal@math.tecnico.ulisboa.pt}\\
  Dep. Mathematics, IST, Universidade de Lisboa, Portugal, \\
  and SQIG--Instituto de Telecomunica{\c c}{\~o}es}
\and 
S\'ergio Marcelino \\
  \multicolumn{1}{p{1\textwidth}}{\centering
  \texttt{smarcel@math.tecnico.ulisboa.pt}\\
  Dep. Mathematics, IST, Universidade de Lisboa, Portugal, \\ 
  and SQIG--Instituto de Telecomunica{\c c}{\~o}es}
\and
Jo\~ao Marcos \\
  \multicolumn{1}{p{1\textwidth}}{\centering
  \texttt{jmarcos@dimap.ufrn.br}\\
  Lo.L.I.T.A.\ and DIMAp, UFRN, Brazil}
}
\begin{document}

\maketitle

\begin{abstract}
%\textbf{Option 1 (`logical fission')}: We investigate the possibility of breaking a fragment of classical logic in two parts and precisely characterize the circumstances in which the given fragment may be obtained by a common minimal conservative extension of both parts.
%
%\textbf{Option 2}: 
%\noindent
We investigate the combination of fragments of classical logic as a way of 
%minimally 
conservatively 
extending a given Boolean logic by the addition of new connectives, and we precisely characterize the circumstances in which such a combination produces the corresponding fragment of classical logic over the signature containing connectives from both fragments given as input.
%When the resulting combination fails this criterion, it means that connectives in one fragment need to interact with connectives on the other fragment, giving rise to interaction principles.
If the thereby produced combined fragment is only incompletely characterized by the components given as input, this means that connectives from one component need to interact with connectives from the other component, giving rise to interaction principles. 
The main contributions strongly rely on the (well-known) description of the 2-valued clones made by Post, on the (not so well-known) axiomatization procedures for 2-valued matrices laid out by Rautenberg, and on Avron's non-deterministic matrices, which have (recently) been used to produce a significant advance on the understanding of the semantics of fibring. 
%
%\textbf{Option 3}: In this paper we make use of the recent understanding of fibring semantics to study the logics characterized by joining the Hilbert calculli of two disjoint fragments of classical logic. Various scenarios can occur, it may result in the corresponding joint fragment of classical logic, or we it may be a weaker (subclassical) logic, in which case it may be still have a finite matricial semantics, or not. Furthermore, we characterize the exact the conditions in which the resulting logic is still a fragment of classical logic.

%{\color{red}improve, should mention clearly the }
\end{abstract}

%{\bf %Merging logics: a complete/exaustive analysis of the Boolean fragments\\
%
% Merging logics: the case of disjoint Boolean fragments}  
%
%This paper deals with Hilbert calculi, and how hard it is to compose them. All illustrations concern merging fragments of classical logic. We also provide a thorough analysis of the situation of fragments of classical logic.\\
%
%The paper is an outgrowth of the theory of fibred logics, somewhat of a colimit of FroCos and WoLLIC papers, and can be seen as a tribute to Avron's \nmatrices{}{}, Post's classification, and Rautenberg's axiomatizations!\\
%
%We should advise the reader: browse through 2, but focus primarily on 3 going back to 2 for any details on the notions involved. Then 4 includes the results that systematize the examples. Appendix includes some tech lemmas needed for the proofs.\\
%
%{\color{red} Coloured stuff is not ready!!!}

\section{Introduction}

\noindent 
In what concerns the extensibility of the language of a given logic by some~new connective respecting certain inferential patterns, one of the main criteria invoked in justifying, granting intelligibility, or acknowledging the legitimacy of such an extension is the `conservativeness restraint'.  According to such restraint, the addition of a new connective together with its corresponding characterizing rules should not allow for novel inferences to arise using exclusively the original language, involving formulas deprived of such connective.  
Arguably, an equally important but much less discussed criterion involves the %{\color{red}unjustified addition}
 {possible emergence}, through such extension, of `interaction principles' involving the newly added connective and other connectives from the original language extended therewith.

The most common proof formalisms used in the literature in discussing how rules give meaning to the connectives they govern, originated from the landmark work of Gentzen \cite{gen:34}, typically allow for interaction to arise in rather unexpected ways.
For an example, one might recall that logics containing conjunction and disjunction often have as algebraic counterparts some variety of lattices or another.  However, the existence of non-distributive lattices does not seem to be matched in a natural way by logics whose disjunction does not distribute over conjunction.  Quite to the contrary, the canonical presentations of the latter connectives in natural deduction or sequent calculi in general enjoy distributivity as an artifact that is produced by the very choice of proof formalisms (cf.~\cite{jybcon,hum:and-or:2015}). 
% \cite{bez:con:11}
Excessive interaction might also be held responsible for `collapsing phenomena' in which two connectives turn out to be indistinguishable when their rules are put together for the definition of a single logic containing both connectives.
There is for instance a well-known debate in the literature about the presentation of a logic containing both a classical and an intuitionistic implication (cf.~\cite{LdCer:AHer:combining:1996}).  The common arguments according to which these two implications would necessarily coincide are however based either on the (incorrect) assumption that the minimal logic that contains two standard implications enjoys an unrestricted version of the Deduction Metatheorem, or on some (incidental or artificial) demand for 
% Gentzen-style properties of these connectives 
other meta-properties that are expressed in a Gentzen-style formalism
(cf.~\cite{Gabbay:weaving1,cal:ram:07:collapsing,metafib}). 
%\cite{Gabbay:overviewFroCoS}
%In the present paper we show sufficient necessary and sufficient conditions for the collapse to occur (see conclusion)

The main known mechanisms for combining logics often differ on how they deal with conservativeness and interaction.  Among such combination mechanisms, \textit{fibring} fares well on both fronts: unintended interaction is unlikely to arise through fibring, and the fibring of two logics containing no quasi theorems (formulas that follow from whatsoever non-empty set of premises) is always conservative over each component (cf.~\cite{charac-conserv:JLC}).  
Within the scope of such a combination mechanism the ideas concerning the addition of a new connective to a given logic can be made clear and distinct, and the related questions may be given precise answers.
It is worth noting, in particular, that the smallest logic that conservatively extends both the `logic of conjunction' and the `logic of disjunction' is not distributive (cf.~\cite{charac-conserv:JLC}), and also noting that the smallest logic that conservatively extends both the logics of classical implication and of intuitionistic implication does not actually necessitate the collapse between the latter connectives (cf.~\cite{cal:ram:07:clas-int}).  In fact, the results in the present paper imply that it is even plausible to have two non-collapsing copies of classical implication cohabiting the same logic.
In both the above mentioned examples, and in many others, the corresponding joint fragments of classical logic can be recovered by the addition of inference rules capturing the emerging interaction principles.

A neat characterization of fibring is given by way of Hilbert calculi: the combination of two logics, each one characterized by a certain set of inference rules, is produced by the union of these sets of rules.
In contrast ---and in a sense precisely for being so frugal on what concerns interaction principles--- fibring resisted admitting a straightforward semantics (see~\cite{handbook,cutandpaste} for an overview).  
Indeed, among other phenomena to be discussed in the present contribution, it is worth noting that one could very well happen to fiber the logics of two connectives with 2-valued semantics and end up giving origin to a logic with no finite-valued semantics whatsoever, even if non-determinism were allowed.
Nonetheless, after an important theoretical advance contributed by~\cite{smar:ccal:17}, we now know that a semantics for disjoint fibring may be given through a powerful and elegant technology that makes use of non-deterministic semantics.  This technology is applied in the present paper to the combination of fragments of classical logic, as a way of illustrating how rich is the problem that the new semantics allows solving.

The paper is organized as follows. In Section~\ref{prel} we recall a number of necessary definitions and facts regarding logics, their semantics and axiomatizations. In particular, we introduce logical matrices and Nmatrices, as well as some important properties and operations on them.
We put special emphasis on classical logic, and on Post's characterization of Boolean clones. We also recall the essential mechanism of fibring, and we prove some useful results about fibred logics and their derived connectives.  
Several fundamental facts about disjoint fibrings of fragments of classical logic and the characterizations of the resulting logics are then proved in Section~\ref{sec:merging}, along with several illustrative examples.  
The general plan draws to a close, in Section~\ref{sec:summary}, by proving the main announced results concerning the combination of fragments of classical logic and by a recollection of what has been accomplished along the way towards attaining the stated goals.  This is followed in Section~\ref{sec:conclusion} by some pointers to directions for future research.

\section{Preliminaries}\label{prel}

\noindent 
This section contains the main definitions, fixes the notation for the rest of the paper, recalls several important notions and well-known results, makes some remarks, and presents a few new simple and useful facts. %

{%\color{Purple}
\subsection{Syntax}

%\noindent
A propositional \emph{signature} $\Sigma$ is a family $\{\Sigma^{(k)}\}_{k\in \nats}$ of sets, where each $\Sigma^{(k)}$ contains the \emph{$k$-place connectives} of $\Sigma$. 
% As usual, we shall use set notation associated with signatures. Namely, 
To simplify notation, 
%we write $\Sigma=\varnothing$ when $\Sigma^{(k)}=\varnothing$ for all $k\in\nats$, 
we express the fact that $\conn\in\Sigma^{(k)}$ for some $k\in\nats$ by simply writing $\conn\in\Sigma$, and we write $\Sigma_1\cup\Sigma_2$ (resp., $\Sigma_1\cap\Sigma_2$) to denote the signature~$\Sigma$ such that $\Sigma^{(k)}=\Sigma_1^{(k)}\cup\Sigma_2^{(k)}$ (resp., $\Sigma^{(k)}=\Sigma_1^{(k)}\cap\Sigma_2^{(k)}$) for all $k\in \nats$.  We also write $\Sigma_1\subseteq\Sigma_2$ when $\Sigma_1^{(k)}\subseteq\Sigma_2^{(k)}$ for all $k\in \nats$.
The signatures~$\Sigma_1$ and~$\Sigma_2$ are said to be \emph{disjoint} when $\Sigma_1\cap\Sigma_2=\varnothing$.
The language $L_{\Sigma}(P)$ is the carrier of the absolutely free $\Sigma$-algebra generated over a given 
%denumerable 
set of sentential variables $P$. Elements of $L_{\Sigma}(P)$ are called \emph{formulas}.  Given a formula $\varphi\in L_{\Sigma}(P)$, we denote by $\var(\varphi)$ (resp.\ $\sub(\varphi)$) the set of variables (resp.\ subformulas) of~$\varphi$, recursively defined as usual; the extension of $\var$ and $\sub$ from formulas to sets thereof is defined as one would expect. We say that two (sets of) formulas \emph{share no variables} if their underlying sets of variables are disjoint. 
If $\varphi\notin P$ we say that~$\varphi$ is \emph{compound},  and we denote by $\head(\varphi)$ its outermost connective.
As usual, given a $1$-place connective~$\conn$, we define the possible \emph{nestings} of~$\conn$ as $\conn^0p:=p$ and $\conn^{i+1}p:=\conn(\conn^i p)$.
When appropriate, given any symbol~$\mathsf{s}$, we will use $\overline{\mathsf{s}}^k$ to denote a sequence of~$k$ consecutive occurrences of $\mathsf{s}$.

A \emph{substitution} is a mapping $\sigma:P\longrightarrow  L_\Sigma(P)$,
% Given $\varphi\in L_{\Sigma}(P)$, 
uniquely extendable into an endomorphism $\cdot^\sigma:L_\Sigma(P)\longrightarrow  L_\Sigma(P)$.
% We denote by $\varphi^\sigma$ the result of applying to~$\varphi$ the free extension of~$\sigma$. 
Given $\Gamma\subseteq L_{\Sigma}(P)$, we denote by~$\Gamma^\sigma$ the set $\{\varphi^\sigma:\varphi\in\Gamma\}$. 
We take a \emph{$k$-place derived connective} $\lambda p_1\dots p_k.\,\varphi$, also denoted by $\varphi(p_1,\dots,p_k)$ when convenient, to be a formula~$\varphi\in L_\Sigma(\{p_1,\ldots,p_k\})$. 
%such that $\var(\varphi)\subseteq\{p_1,\dots,p_k\}$. 
% Often, if $\conn\in\Sigma^{{k}}$, we not distinguish it from the derived connective $\lambda p_1\dots p_k.\conn(p_1\dots p_k)$. 
%
Given two signatures~$\Xi$ and~$\Sigma$, a (\emph{homophonic}) \emph{translation} $\bt:\Xi\longrightarrow  L_\Sigma(P)$ is a mapping that assigns to each $k$-place connective $\xi\in\Xi$ a formula $\bt(\xi)\in L_\Sigma(\{p_1,\dots,p_k\})$ (understood as a derived $k$-place connective $\lambda p_1\dots p_k.\,\bt(\xi)$). Such translation extends naturally into a function $\bt:L_\Xi(P)\longrightarrow  L_\Sigma(P)$, defined by setting $\bt(p):=p$ for $p\in P$, and $\bt(\xi(\psi_1,\ldots,\psi_k)):=\bt(\xi)(\bt(\psi_1),\dots,\bt(\psi_k))$ for $\xi\in\Xi^{(k)}$. We use $\mathbf{id}_\Sigma:\Sigma\longrightarrow  L_\Sigma(P)$ to refer to the \emph{identity} translation defined by setting $\mathbf{id}_\Sigma(\conn):=\conn(p_1,\dots,p_k)$ for each $k$-place connective $\conn\in\Sigma$. Given disjoint signatures~$\Xi_1$ and~$\Xi_2$, and translations $\bt_1:\Xi_1\longrightarrow  L_{\Sigma_1}(P)$ and $\bt_2:\Xi_2\longrightarrow  L_{\Sigma_2}(P)$, we use $\bt_1\cup\bt_2:\Xi_1\cup\Xi_2\longrightarrow  L_{\Sigma_1\cup\Sigma_2}(P)$ to denote their union.

Given signatures $\Sigma\subseteq\Xi$, let $X_{\Sigma}:=\{x_\varphi:\varphi\in L_{\Xi}(P)\setminus P\mbox{ and }\head(\varphi)\notin\Sigma\}$ be a new set of sentential variables. Using $X_{\Sigma}$ to see as `monoliths' the formulas from~$\Xi$ whose heads are alien to~$\Sigma$, we can represent in $L_{\Sigma}(P\cup X_{\Sigma})$ the $\Sigma$-\emph{skeleton} of any formula $\varphi\in L_{\Xi}(P)$ by setting $\skel_{\Sigma}(p):=p$ if $p\in P$, and setting for each connective $\conn\in\Xi^{(k)}$:
\smallskip

\noindent
$
\skel_{\Sigma}(\conn(\varphi_1,\dots,\varphi_k)):= 
\begin{cases}
    \conn(\skel_{\Sigma}(\varphi_1),\dots,\skel_{\Sigma}(\varphi_k)),& \text{if }\conn\in\Sigma \\
    x_{\conn(\varphi_1,\dots,\varphi_k)},              & \text{otherwise.}
\end{cases}
$
\smallskip

{ 
\noindent
It is handy to note here that $\sub(\skel_\Sigma(\varphi))\subseteq\skel_\Sigma(\sub(\varphi))$. This implies, given $\Gamma\subseteq L_{\Xi}(P)$, that $\skel_\Sigma(\Gamma)$ is closed under subformulas whenever $\Gamma$ is closed under subformulas. }
%\begin{itemize}
%\item $\skel_{\Sigma_1}(\conn(\varphi_1,\dots,\varphi_k))=\conn(\skel_{\Sigma_1}(\varphi_1),\dots,\skel_{\Sigma_1}(\varphi_k))$ if $\conn\in\Sigma$;
%\item $\skel_{\Sigma_1}(\conn(\varphi_1,\dots,\varphi_k))=x_{\conn(\varphi_1,\dots,\varphi_k)}$ if $\conn\notin\Sigma$.
%\end{itemize}

% As usual, given a logic $\cL=\tuple{\Sigma,\der}$ and $\{\varphi_1,\dots,\varphi_n,\psi\}\cup\bigcup_{1\leq i\leq k}\Gamma_i\subseteq L_\Sigma(P)$, we write $\Gamma_1,\dots,\Gamma_k,\varphi_1,\dots,\varphi_n\der\psi$ to mean 

%{\color{red}(S: melhorar isto, maybe put here the `logic of a connective')
%Given $\Sigma'\supseteq \Sigma$, we consider $\cL^{\Sigma'}=\tuple{\Sigma',\der^{\Sigma'}}$ where $\Gamma\der^{\Sigma'} \varphi$ if and only if there exist $\Delta\cup \{\varphi\}\subseteq L_{\Sigma}(P)$ and $\sigma:P\longrightarrow  L_{\Sigma'}(P)$} such that $\Delta\der \varphi$, $\Delta^\sigma=\Gamma$ and $\varphi^\sigma=\varphi$.

\subsection{Logics} 
 
A \emph{logic}~$\cL$ is a structure $\tuple{\Sigma,\der}$, where $\Sigma$ is a {signature} and $\der\ \subseteq 2^{L_{\Sigma}(P)}\times L_{\Sigma}(P)$ is a substitution-invariant (Tarskian) consequence relation over $L_{\Sigma}(P)$. The set $\Gamma\subseteq L_{\Sigma}(P)$ is called an \emph{$\cL$-theory} whenever $\Gamma$ is closed under $\der$, that is $\Gamma^\der:=\{\varphi:\Gamma\der \varphi\}\subseteq \Gamma$. 
We obtain an equivalence relation $\EqDiv{\cL}{}{}$ on sets of formulas of~$\cL$ by defining $\Gamma,\Delta\subseteq L_{\Sigma}(P)$ as (\emph{logically}) \emph{equivalent} when $\Gamma^\der=\Delta^\der$.
An $\cL$-theory~$\Gamma^\der$ is said to be \emph{trivial} if $(\Gamma^\sigma)^\der= L_{\Sigma}(P)$ for every substitution $\sigma:P\longrightarrow  L_\Sigma(P)$, and otherwise said to be \emph{non-trivial}.
Two connectives $\conn_1,\conn_2\in\Sigma^{(k)}$ for some $k\in\nats$ are said to be \emph{indistinguishable} in a logic $\cL=\tuple{\Sigma,\der}$ provided that $\EqDiv{\cL}{\varphi}{\bt(\varphi)}$ for every $\varphi\in L_\Sigma(P)$, where $\bt:\Sigma\to L_\Sigma(P)$ is the translation that replaces 
every occurrence of $\conn_1$ with $\conn_2$, that is, $\bt(\conn_1)=\conn_2(p_1,\dots,p_k)$ and $\bt(\conn)=\conn(p_1,\dots,p_j)$ for every connective $\conn\in\Sigma^{(j)}\setminus\{\conn_1\}$ and every $j\in\nats$.

Let~$\varphi(p_1,\dots,p_k)$ be some $k$-place {derived connective}.  
If $\varphi(p_1,\ldots,p_k)\der p_j$ for some $1\leq j\leq k$, we say that~\emph{$\varphi$ is projective on its $j$-th component}.  Such a derived connective is called a \emph{projection-conjunction} if it is logically equivalent to its set of projective components, i.e., if there is some $J\subseteq\{1,2,\ldots,k\}$ such that (i) $\varphi(p_1,\ldots,p_k)\der p_j$ for every $j\in J$ and (ii) $\{p_j:j\in J\}\der \varphi(p_1,\ldots,p_k)$.  
In case $\varphi(p_1,\ldots,p_k)\der p_{k+1}$, we say that~$\varphi$ is \emph{bottom-like}.
We will call~$\varphi$ \emph{top-like} if $\varnothing\der \varphi(p_1,\ldots,p_k)$. 
Do note that the latter is a particular case of projection-conjunction (take $J=\varnothing$).
% following FroCoS paper
Another particular case of projection-conjunction is given by the \emph{affirmation} connective $\lambda p_1.\,p_1$.
A derived connective that is neither top-like nor bottom-like will here be called \emph{significant}; 
if in addition it is not a projection-conjunction, we will call it \emph{very significant}. 
Note that failing to be very significant means being either bottom-like or a projection-conjunction. 
In case $p_1,\dots,p_k\der \varphi(p_1,\ldots,p_k)$, we will say that~$\varphi$ is \emph{truth-preserving}. Obviously, all projection-conjunctions are truth-preserving.

\subsection{Hilbert calculi}

%An Hilbert system is a tuple $\sH=\tuple{\Sigma,R}$ where $R\subseteq 2^{L_{\Sigma}(P)}\times L_{\Sigma}(P)$ is the set of rules.
%From $\sH$ we extract $\cL_\sH$ with derivations COPY SOMEWHERE! (no need for derivations...closure..)
A \emph{Hilbert calculus}~$\sH$ %\index{Hilbert calculus}
is a structure $\tuple{\Sigma,R}$ where $\Sigma$ is a signature, and $R\subseteq 2^{L_\Sigma(P)}\times L_\Sigma(P)$ is a set of so-called \emph{inference rules}. Given $\tuple{\Delta,\psi}\in R$, we refer to $\Delta$ as the set of \emph{premises} and to $\psi$ as the \emph{conclusion} of the rule.  When $\Delta$ is empty, $\psi$ is dubbed an \emph{axiom}. 
		 %A rule is said to be \emph{finitary} if it has a finite set of premises, and $\sH$ is said to be \emph{finitary} of all the rules in $R$ are finitary. 
		 %Note that we do not impose finitariness, in this paper, though most of the examples used are finitary.
		An inference rule $\tuple{\Delta,\psi}\in R$ is often denoted by $\frac{\Delta}{\psi}$, or simply by $\frac{\;\psi_1\;\dots\;\psi_n\;}{\psi}$ if $\Delta=\{\psi_1,\dots,\psi_n\}$ is finite, or by $\frac{}{\;\psi\;}$ if $\Delta=\varnothing$.

	It is well-known that 
	a Hilbert calculus $\sH := \tuple{\Sigma,R}$ induces a logic $\cL_\sH:=\tuple{\Sigma,\der_{\!\sH}}$ such that, for each $\Gamma\subseteq L_{\Sigma}(P)$, $\Gamma^{\,{\der_{\!\sH}}}$ is the least set that contains $\Gamma$ and is closed under all applications of instances of the inference rules in $R$, that is, if $\frac{\Delta}{\psi}\in R$ and $\sigma:P\longrightarrow  L_{\Sigma}(P)$ is such that $\Delta^\sigma\subseteq \Gamma^{\,{\der_{\!\sH}}}$ then $\psi^\sigma\in\Gamma^{\,{\der_{\!\sH}}}$.  
Such definition of a \emph{logic induced by a Hilbert calculus} is meant to capture the `schematic character' of inference rules.
	%Of course, this definition induces a logic $\cL^{\Sigma'}=\tuple{\Sigma',{\der_{\!\sH}}}$ over $P'$ for each $\Sigma\subseteq\Sigma'$.
	  %, in line with what we established in Rem.~\ref{ex-ugly}. By abuse of notation, we dub all such logics as ${\cL_{\!\sH}}$. 

%Given $\sH = \tuple{\Sigma,R}$ and $\Sigma'\supseteq \Sigma$, note that $\cL_\sH^{\Sigma'}=\cL_{\sH'}$ where $\sH'=\tuple{\Sigma',R}$. 
}

%{\color{Purple}
\subsection{Logical matrices and \nmatrices{}{}}\label{semantics}

An \emph{\nmatrix{}~$\Mt$ over  a signature~$\Sigma$} is a structure $\tuple{V,D,\cdot_{\Mt}}$ where\footnote{$\tuple{V,\cdot_{\Mt}}$ is a multi-algebra, see~\cite{MR0146103,hyperstructure}.}
$V$ is a
set (of \emph{truth-values}), $D\subseteq V$ is the set of \emph{designated} values and, 
for each $\conn\in \Sigma^{(k)}$, $\cdot_{\Mt}$ gives the interpretation $\conn_{\Mt}:V^k\longrightarrow  2^V\setminus \{\varnothing\}$ of~$\conn$ in $\Mt$.  
%the interpretation in $\Mt$, $\cdot_{\Mt}$ 
%
We use~$U$ to shall refer to the set~$V\setminus D$ of \emph{undesignated} values. 
Henceforth we will assume that we are dealing only with {non-degenerate} \nmatrices{}{}, in the sense that $D\neq\varnothing$ and $U\neq\varnothing$. Clearly, such restriction will only leave out a couple of uninteresting logics. When $D$
%{\color{red} (resp.\ $U$)}
is a singleton we will say that~$\Mt$ is \emph{unitary}. % {\color{red} (resp.\ \emph{co-unitary})}. 
The traditional, and deterministic, notion of (\emph{logical}) \emph{matrix} is recovered by considering \nmatrices{}{} for which the image of every tuple of values through~$\conn_{\Mt}$ is a singleton, in which case we often drop the braces from the set notation.

A \emph{valuation over~$\Mt$} is a mapping $v:L_{\Sigma}(P)\longrightarrow  V$ such that for each $\conn\in \Sigma^{(k)}$ we have
%and $\varphi_1,\ldots,\varphi_n\in L_{\Sigma}(P)$,
%and $\overline{a}\in A^n$, 
$v(\conn(\varphi_1,\ldots,\varphi_n))\in \conn_{\Mt}(v(\varphi_1),\ldots,v(\varphi_k))$. We denote by $\Val_P(\Mt)$ the set of all valuations on $L_{\Sigma}(P)$ over $\Mt$.
It is often useful to work with partially defined valuations, i.e., valuations defined only for a certain subset~$\Gamma$ of the language. This is perfectly usual when dealing with logical matrices, as one only needs to define the value of the sentential variables in $\var(\Gamma)$, for then the corresponding valuation extends uniquely to the full language. 
In \nmatrices{}{}, the same effect can be achieved by defining a valuation for a set of formulas~$\Gamma$ that is closed under subformulas, and demanding that it respects the interpretation of connectives, that is, $v(\conn(\varphi_1,\dots,\varphi_n))\in{\conn}_\Mt(v(\varphi_1),\dots,v(\varphi_n))$ for every compound formula $\conn(\varphi_1,\dots,\varphi_n)\in\Gamma$. Such a partial valuation, which we dub a \emph{$\Gamma$-partial valuation}, can always be extended to a valuation over the full language~(cf.~\cite{Avron:MVS-WnH}). 

As usual, we say that a valuation~$v$ over~$\Mt$ \emph{satisfies} a formula~$\varphi$ (resp.\ a set of formulas~$\Gamma$) if $v(\varphi)\in D$ (resp.\ $v(\Gamma)\subseteq D$).
We say that $\Gamma\der_\Mt \varphi$ if every valuation over $\Mt$ that satisfies~$\Gamma$ also satisfies~$\varphi$. It is well known that $\cL_\Mt:=\tuple{\Sigma,\der_\Mt}$ induces a logic, and we call it the \emph{logic characterized by~$\Mt$}.
% {\color{Green}and if a matrix is trivial then $\cL_\Mt$ is also trivial, in the sense that %all  formulas are theorems or 
%every formula follows from any given formula. In what follows we will consider only non-trivial matrices, thus only leaving out (two) trivial logics for every signature, which is not dramatic since any fibring involving such logics results also in one of these logics, whose semantics we already know.}
If~$\Mt$ is a finite \nmatrix{} (i.e., its underlying set of truth-values is finite) then $\cL_\Mt$ is said to be \emph{finitely-\nvalued{}}, or \emph{$k$-\nvalued{}} if $\Mt$ has exactly $k$ truth-values; when~$\Mt$ is a finite logical matrix then~$\cL_\Mt$ is said more simply to be \emph{finitely-valued}, or \emph{$k$-valued}. A~logic $\cL$ is said to be (\emph{deterministically}) \emph{many-valued} if $\cL=\cL_\Mt$ for some logical matrix~$\Mt$ (cf.~\cite{mar:09a:full}).

Given the schematic character of inference rules in Hilbert calculi, we will say that about a valuation~$v$ that it \emph{respects an inference rule $\frac{\Delta}{\psi}$} if, for every substitution $\sigma:P\longrightarrow  L_\Sigma(P)$, we have that $v(\Delta^\sigma)\subseteq D$ implies $v(\psi^\sigma)\in D$.

%Let $\conn$ be a $k$-place connective.  
Consider the signature $\Sigma$ such that $\Sigma^{(k)}=\{\conn\}$ and $\Sigma^{(j)}=\varnothing$ for $j\neq k$.
We will denote by $\bbtop_\conn$ the logic induced, equivalently, by the matrix $\Mt^{\bbtop}_\conn:=\tuple{\{0,1\},\{1\},\cdot_{\bbtop}}$ where $\conn_{\bbtop}(a_1,\dots,a_k)=1$ for all $a_1,\dots,a_k\in\{0,1\}$, or by the Hilbert calculus with the single axiom $\frac{}{\conn(p_1,\dots,p_k)}$, and we will denote by $\bbbot_\conn$ the logic induced, equivalently, by the matrix $\Mt^{\bbbot}_\conn:=\tuple{\{0,1\},\{1\},\cdot_{\bbbot}}$ where $\conn_{\bbbot}(a_1,\dots,a_k)=0$  for all $a_1,\dots,a_k\in\{0,1\}$, or by the Hilbert calculus with the single rule $\frac{\conn(p_1,\dots,p_k)}{p_{k+1}}$. 
It is easy to see that in the former case the $k$-place connective~$\conn$ is a top-like connective, and that in the latter case it is a bottom-like connective.
%In addition, by $\bbbotop_\conn$ we will denote the logic of a \emph{free} 2-valued connective, equivalently definable 
In addition, by $\bbbotop_\conn$ we will denote the logic of an \emph{unrestrained connective} induced, equivalently, 
by the $2$-valued \nmatrix{} $\Mt^{\bbbotop}_\conn:=\tuple{\{0,1\},\{1\},\cdot_{\bbbotop}}$ where ${\conn}_{\bbbotop}(a_1,\dots,a_k)=\{0,1\}$ for all $a_1,\dots,a_k\in\{0,1\}$, or by the Hilbert calculus with the empty set of rules.
%We denote by $\bbtop_\conn$ the logic of a {top-like} $k$-place connective, definable, equivalently, by the matrix 
% $\Mt^{\bbtop}_\conn=\tuple{\{0,1\},\{1\},\cdot_{\bbtop}}$ where $\conn_{\bbtop}(a_1,\dots,a_k)=1$ for all $a_1,\dots,a_k\in\{0,1\}$, or by the Hilbert calculus with the rule 
%$\frac{}{\conn(p_1,\dots,p_k)}$. By $\bbbot_\conn$ we denote the logic of a {bottom-like} $k$-place connective, definable, equivalently, by the matrix $\Mt^{\bbbot}_\conn=\tuple{\{0,1\},\{1\},\cdot_{\bbbot}}$ where $\conn_{\bbbot}(a_1,\dots,a_k)=0$  for all $a_1,\dots,a_k\in\{0,1\}$, or by the Hilbert calculus with the rule 
%$\frac{\conn(p_1,\dots,p_k)}{p_{k+1}}$. 
%Let $\bbbotop_\conn$ be the logic of an \emph{absolutely free} connective, definable, equivalently, by the \nmatrix{} 
%$\Mt^{\bbbotop}_\conn=\tuple{\{0,1\},\{1\},\cdot_{\bbbotop}}$ where ${\conn}_{\bbbotop}(a_1,\dots,a_k)=\{0,1\}$ for all $a_1,\dots,a_k\in\{0,1\}$, or by the Hilbert calculus with the empty set of rules.

\subsection{Some useful operations on (N)matrices}
\label{sec:useful}

Let $\Xi,\Sigma$ be signatures, $\bt:\Xi\longrightarrow  L_\Sigma(P)$ be a translation, and $\Mt:=\tuple{V,D,\cdot_{\Mt}}$ be a logical matrix over~$\Sigma$.
% (an \nmatrix{} will not do, in general). 
Then we may say that $\Mt$ \emph{induces an interpretation $\xi_\Mt:V^k\longrightarrow  V$ under~$\bt$} to each connective $\xi\in\Xi$, defined in the case of a $k$-place connective by setting $\xi_\Mt(a_1,\dots,a_k):=v(\bt(\xi))$ where $v$ is any valuation such that $v(p_i)=a_i$ for $1\leq i\leq k$. We denote by $\Mt^\bt$ the matrix over~$\Xi$ with the same truth-values and designated values as $\Mt$, where each $\xi\in\Xi$ receives its interpretation induced under $\bt$. Is is clear that  $\Val_P(\Mt^\bt)=\{v\circ\bt:v\in\Val_P(\Mt)\}$.

Let $\kappa\in\nats\cup\{\omega\}$, with $\kappa>1$. An \nmatrix{} $\Mt:=\tuple{V,D,\cdot_{\Mt}}$ over~$\Sigma$ is said to be \emph{$\kappa$-saturated} if for any sets $\Gamma,\Delta\subseteq L_\Sigma(P)$ with $|\Delta|\leq \kappa$, if $\Gamma\not\der_\Mt\psi$ for each $\psi\in\Delta$ then there exists a valuation~$v$  over $\Mt$ such that $v(\Gamma)\subseteq D$ and $v(\Delta)\subseteq U$. We say that~$\Mt$ is \emph{saturated} if it is $\omega$-saturated (more generally, we might talk about $\kappa$-saturation, where~$\kappa$ the cardinality of the underlying language). 
Note that in a saturated \nmatrix{}~$\Mt$ every $\cL_\Mt$-theory is precisely characterized by a valuation, that is, for every $\cL_\Mt$-theory~$\Gamma$ there is a valuation~$v$ over~$\Mt$ such that $\Gamma=\{\varphi\in L_\Sigma(P):v(\varphi)\in D\}$.
% $v(\varphi)\in D$ if and only if $\varphi\in\Gamma$. 
Clearly, if $\Mt$ is $\kappa$-saturated then so is $\Mt^\bt$.

The \emph{$n$-power of~$\Mt$} is the \nmatrix{} $\Mt^n:=\tuple{V^n,D^n,\cdot_n}$ where, for each $k$-place connective $\conn\in\Sigma$, we have $\conn_{n}(\alpha_1,\ldots,\alpha_k)=\{\alpha\in V^n:\pi_i(\alpha)\in \conn_{\Mt}(\pi_i(\alpha_1),\ldots,\pi_i(\alpha_k)) \textrm{ for } 1\leq i\leq n\}$, where each $\pi_i:V^n\longrightarrow  V$ denotes the corresponding $i$-th projection. Note that $\Val_P(\Mt^n)=\Val_P(\Mt)^n$, that is, a valuation on~$\Mt^n$ is just an $n$-tuple of valuations on~$\Mt$.
%Note moreover, in particular, that any $n$-power of a logical matrix is in principle an \nmatrix{}. 
From~\cite{smar:ccal:17} we know that $\Mt^n$ is $n$-saturated and $\cL_{\Mt}=\cL_{\Mt^n}$, for every \nmatrix{}~$\Mt$.
 Given a translation $\bt:\Xi\longrightarrow  L_\Sigma(P)$, it is straightforward to see that $(\Mt^\bt)^n=(\Mt^n)^\bt$ for every $n\in\nats\cup\{\omega\}$, $n>1$. 

Let $\Sigma_1$ and $\Sigma_2$ be disjoint signatures.
% with $\Sigma_1\cap\Sigma_2=\varnothing$.
Given \nmatrices{}{} $\Mt_1:=\tuple{V_1,D_1,\cdot_{\Mt_1}}$ over~$\Sigma_1$ and $\Mt_2:=\tuple{V_2,D_2,\cdot_{\Mt_2}}$ over~$\Sigma_2$, %let $C=\{1,2\}^*$,
their \emph{strict product} $\Mt_1\star\Mt_2$ is the \nmatrix{} over $\Sigma_1\cup \Sigma_2$ defined by $\tuple{V_{12},D_{12},\cdot_{\star}}$ where $V_{12}=(D_1\times D_2) \cup (U_1\times U_2)$, $D_{12}=D_1\times D_2$, and for each $k$-place $\conn\in\Sigma_1\cup \Sigma_2$,
$$\conn_{\star}((a_1,b_1),\ldots,(a_k,b_k)):=\begin{cases}
 \{(a,b)\in V_{12}:
a  \in \conn_{\Mt_1}(a_1,\ldots,a_k)\}, \mbox { if }\conn\in \Sigma_1\\[.1cm]
 \{(a,b)\in V_{12}\,:
b \in \conn_{\Mt_2}(b_1\,\ldots,b_k)\}, \mbox { if }\conn\in \Sigma_2
\end{cases}
$$
\noindent
Note that a valuation $v$ over $\Mt_1\star\Mt_2$ 
has two projections $\pi_1(v)$ and $\pi_2(v)$ which (under the obvious restrictions to $L_{\Sigma_1}(P)$ and $L_{\Sigma_2}(P)$) are valuations over~$\Mt_1$ and~$\Mt_2$. 
We know from~\cite{smar:ccal:17} that $\Mt_1\star\Mt_2$ is saturated when both $\Mt_1$ and $\Mt_2$ are saturated.
{%\color{Purple}

The following lemma
 is very useful in practice, as it tells us how to build in a component-wise manner valuations in an \nmatrix{} obtained by strict product. 
Recall that given a $\Sigma$-\nmatrix{} $\Mt$, if $v$ is a $\Gamma$-partial valuation over $\Mt$ with $\Gamma\subseteq L_\Sigma(P)$, and we are given a sentential variable $p\notin\var(\Gamma)$, then~$v$ may always be extended to a $(\Gamma\cup\{p\})$-partial valuation $v'$ by assigning $v'(p)=a$ for any truth-value~$a$ in the set of truth-values, chosen to be designated, or undesignated, if desired.

\begin{lemma}\label{mergingvals}
Let $\Sigma_1$ and $\Sigma_2$ be disjoint signatures, let ${\Mt}_1$ be a $\Sigma_1$-\nmatrix{} and let ${\Mt}_2$ be a $\Sigma_2$-\nmatrix{}.
Further, let $\Gamma\subseteq L_{\Sigma_1\cup\Sigma_2}(P)$ be closed under subformulas, and take~$v_1$ as a $\skel_{\Sigma_1}(\Gamma)$-partial valuation over ${\Mt}_1$, and~$v_2$ as a $\skel_{\Sigma_2}(\Gamma)$-partial valuation over ${\Mt}_2$.

If the following \emph{compatibility} condition holds:
\begin{itemize}
\item[] $v_1(\skel_{\Sigma_1}(\varphi))\in D_1$ iff $v_2(\skel_{\Sigma_2}(\varphi))\in D_2$ for all $\varphi\in \Gamma$,
\end{itemize}
then setting $v(\varphi)=(v_1(\skel_{\Sigma_1}(\varphi)),v_2(\skel_{\Sigma_2}(\varphi)))$, for $\varphi\in \Gamma$, defines a $\Gamma$-partial valuation over ${\Mt}_1\star {\Mt}_2$. 
\end{lemma}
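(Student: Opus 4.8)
The plan is to verify directly the two defining requirements of a $\Gamma$-partial valuation over $\Mt_1\star\Mt_2$: first, that the proposed $v$ really takes its values inside the value set $V_{12}$; and second, that $v$ respects the interpretation of every connective on the compound formulas belonging to $\Gamma$. Since $\Gamma$ is closed under subformulas and $\skel_{\Sigma_i}(\varphi)$ lies in the domain $\skel_{\Sigma_i}(\Gamma)$ of $v_i$, the map $v$ is at least defined on all of $\Gamma$.

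First I would establish well-definedness. For $\varphi\in\Gamma$ the pair $v(\varphi)=(v_1(\skel_{\Sigma_1}(\varphi)),v_2(\skel_{\Sigma_2}(\varphi)))$ lies a priori only in $V_1\times V_2$; to land in $V_{12}=(D_1\times D_2)\cup(U_1\times U_2)$ its two coordinates must be simultaneously designated or simultaneously undesignated. This is exactly what the compatibility hypothesis asserts, so the first requirement is discharged precisely by that condition.

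Next comes the respect-of-connectives step, where the only real care is needed. Take a compound formula $\conn(\varphi_1,\dots,\varphi_k)\in\Gamma$; by subformula-closure each $\varphi_i\in\Gamma$, so every $v(\varphi_i)$ is defined. As $\Sigma_1$ and $\Sigma_2$ are disjoint, $\conn$ lies in exactly one of them; suppose $\conn\in\Sigma_1$, the other case being symmetric. The two skeleton maps then behave asymmetrically at the head: $\skel_{\Sigma_1}$ sees through it, giving $\skel_{\Sigma_1}(\conn(\varphi_1,\dots,\varphi_k))=\conn(\skel_{\Sigma_1}(\varphi_1),\dots,\skel_{\Sigma_1}(\varphi_k))$, whereas $\skel_{\Sigma_2}$ collapses the whole formula into a single monolith variable $x_{\conn(\varphi_1,\dots,\varphi_k)}$. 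Now $\skel_{\Sigma_1}(\Gamma)$ is closed under subformulas (as recalled in the excerpt), so $v_1$, being a partial valuation over $\Mt_1$ on that set, respects $\conn$ there; hence $v_1(\skel_{\Sigma_1}(\conn(\varphi_1,\dots,\varphi_k)))\in\conn_{\Mt_1}(v_1(\skel_{\Sigma_1}(\varphi_1)),\dots,v_1(\skel_{\Sigma_1}(\varphi_k)))$. Reading off the definition of $\conn_\star$ for $\conn\in\Sigma_1$, membership of $v(\conn(\varphi_1,\dots,\varphi_k))$ in $\conn_\star(v(\varphi_1),\dots,v(\varphi_k))$ reduces to just two conditions: that the pair belongs to $V_{12}$, already secured in the first step, and that its first coordinate lies in $\conn_{\Mt_1}$ applied to the first coordinates $v_1(\skel_{\Sigma_1}(\varphi_i))$ of the $v(\varphi_i)$, which is exactly what the previous display gives. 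The second coordinate is left entirely unconstrained by $\conn_\star$, matching the fact that $\skel_{\Sigma_2}$ turned the formula into a fresh variable on which $v_2$ carries no connective constraint.

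The hard part, if one can call it that, is purely this bookkeeping of the asymmetric behaviour of the two skeletons at the head connective: one expands it and supplies the genuine multi-algebra constraint on its coordinate, while the other hides it and contributes only the $V_{12}$-membership constraint furnished by compatibility. Once that asymmetry is clearly laid out, both the $\conn\in\Sigma_1$ and the $\conn\in\Sigma_2$ cases close immediately, and $v$ is a $\Gamma$-partial valuation over $\Mt_1\star\Mt_2$ as claimed.
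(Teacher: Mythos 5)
Your proof is correct and takes essentially the same route as the paper's: the compatibility condition is used exactly to place each pair $(v_1(\skel_{\Sigma_1}(\varphi)),v_2(\skel_{\Sigma_2}(\varphi)))$ in $V_{12}$, and the respect-of-connectives check for $\conn\in\Sigma_1$ (the other case being symmetric) reduces, via the definition of $\conn_\star$, to the constraint that $v_1$ already satisfies as a partial valuation on $\skel_{\Sigma_1}(\Gamma)$. Your explicit bookkeeping of the asymmetric behaviour of the two skeleton maps at the head connective merely spells out in more detail what the paper's displayed computation does implicitly.
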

\begin{proof}
The compatibility condition guarantees that for each $\varphi\in \Gamma$ the pair $(v_1(\skel_{\Sigma_1}(\varphi)),v_2(\skel_{\Sigma_2}(\varphi)))$  is a truth-value of ${\Mt}_1\star {\Mt}_2$. One just needs to check that the interpretation of connectives is respected. Assume, without loss of generality, that $\varphi=\conn(\varphi_1,\dots,\varphi_n)\in \Gamma$ with $\conn\in\Sigma_1$. Since $v_1$ is a $\skel_{\Sigma_1}(\Gamma)$-partial valuation over ${\Mt}_1$ we know that $v_1(\skel_{\Sigma_1}(\varphi))\in\widetilde{\conn}(v_1(\skel_{\Sigma_1}(\varphi_1)),\dots,\linebreak v_1(\skel_{\Sigma_1}(\varphi_n)))$. Therefore, 

$v(\varphi)=(v_1(\skel_{\Sigma_1}(\varphi)),v_2(\skel_{\Sigma_2}(\varphi)))$

$\in\widetilde{\conn}((v_1(\skel_{\Sigma_1}(\varphi_1)),v_2(\skel_{\Sigma_2}(\varphi_1))),\dots,(v_1(\skel_{\Sigma_1}(\varphi_n)),v_2(\skel_{\Sigma_2}(\varphi_n))))$

$=\widetilde{\conn}(v(\varphi_1),\dots,v(\varphi_n))$.
\end{proof}
\noindent 
Hereupon, the $\Gamma$-partial valuation $v$ built as in the proof of the above lemma will be denoted by $v_1\star v_2$.

Take a valuation $v$ over $\Mt_1\star\Mt_2$. If we understand now $\pi_1(v)$ and $\pi_2(v)$ as transformed into functions $\pi_i(v):L_{\Sigma_i}(P\cup X_i)\longrightarrow V_i$ in the obvious way, then it is clear that they are compatible in the above sense, and that $v=\pi_1(v)\star\pi_2(v)$. 
In other words, 
$\Val_P(\Mt_1\star\Mt_2)=\{v_1\star v_2: v_1\in\Val_{P\cup X_{\Sigma_1}}(\Mt_1)\textrm{ is compatible with }\linebreak v_2\in\Val_{P\cup X_{\Sigma_2}}(\Mt_2)\}$.
}

\subsection{Classical logic}\label{classical}

Classical logic, in any desired signature $\Sigma$, is $2$-valued. We shall denote by $\TWO_\Sigma$ the matrix $\tuple{\{0,1\},\{1\},\cdot_\TWO}$ where $\conn_\TWO=\widetilde{\conn}:\{0,1\}^k\longrightarrow \{0,1\}$ is the Boolean function associated to each $k$-place Boolean connective $\conn\in\Sigma$.

The most common Boolean connectives, namely $\top$ and $\bot$ ($0$-place), $\neg$ ($1$-place), $\e,\ou$ and $\imp{}{}$ ($2$-place) have their interpretations given through the following tables.
%
%\begin{figure}[!htb]
\begin{center}%    \centering
    \begin{minipage}{0.18\textwidth}
        \centering
          \begin{tabular}{c}
    %\hline
    $\widetilde{\top}$  \\ 
    \hline
    $1$ \\
    \mbox{}
   % $-1$ & $0$   \\
 %   \hline
  \end{tabular}
  \quad
          \begin{tabular}{c}
    %\hline
    $\widetilde{\bot}$   \\ 
    \hline
    $0$\\
    \mbox{}
   % $-1$ & $0$   \\
 %   \hline
  \end{tabular}
    \end{minipage}%
    \begin{minipage}{0.18\textwidth}
        \centering
          \begin{tabular}{c | c c}
    %\hline
    &  $\widetilde{\neg}$   \\ 
    \hline
    $0$ & $1$  \\ %\hline
     $1$  &$0$  %\hline
   % $-1$ & $0$   \\
 %   \hline
  \end{tabular}
    \end{minipage}    
    \begin{minipage}{0.2\textwidth}
        \centering
   \begin{tabular}{c | c c}
    %\hline
    $\widetilde{\e}$&  $0$ & $1$   \\ 
    \hline
    $0$ & $0$ &  $0$ \\ %\hline
     $1$  &$0$  &  $1$  %\hline
   % $-1$ & $0$   \\
 %   \hline
  \end{tabular}     
    \end{minipage}
    \begin{minipage}{0.2\textwidth}
        \centering
    \begin{tabular}{c | c c}
    %\hline
    $\widetilde{\ou}$&  $0$ & $1$   \\ 
    \hline
    $0$ & $0$ &  $1$ \\ %\hline
     $1$  &$1$  &  $1$  %\hline
   % $-1$ & $0$   \\
 %   \hline
  \end{tabular}        
    \end{minipage}
    \begin{minipage}{0.2\textwidth}
        \centering
    \begin{tabular}{c | c c}
    %\hline
    $\widetilde{\smash{\imp{}{}}}$&  $0$ & $1$   \\ 
    \hline
    $0$ & $1$ &  $1$ \\ %\hline
     $1$  &$0$  &  $1$  %\hline
   % $-1$ & $0$   \\
 %   \hline
  \end{tabular}        
    \end{minipage}
\end{center}
%\end{figure}
%
Valuations over $\TWO_\Sigma$ are dubbed \emph{bivaluations}. We use $\cB_\Sigma=\cL_{\TWO_\Sigma}$ to denote the \emph{$\Sigma$-fragment} of classical logic, and use $\der_{\cB_\Sigma}$ to denote the associated consequence relation.

Hilbert calculi for the corresponding one-connective fragments of classical logic are well known, or may be systematically obtained from sections~2 and~3 of~\cite{Rautenberg1981}. Possible axiomatizations for the above mentioned connectives are listed below:
\smallskip

\noindent
\scalebox{.95}{
\hspace{-3mm}
\begin{tabular}{ll}
  {[$\cB_\top$]} & $\frac{}{\;\;\top\;\;}\ignore{{}^{{}_{{}_{\mathsf{t}1}}}}$\\[2mm]
  {[$\cB_\bot$]} & $\frac{\;\;\bot\;\;}{p}\ignore{{}^{{}_{{}_{\mathsf{b}1}}}}$\\[2mm]
  {[$\cB_\neg$]} & 
    $\frac{p}{\;\;\neg\neg p\;\;}\ignore{{}^{{}_{{}_{\mathsf{n}1}}}}\quad
    \frac{\;\;\neg \neg p\;\;}{p}\ignore{{}^{{}_{{}_{\mathsf{n}2}}}}\quad 
    \frac{\;\;p\quad  \neg p\;\;}{q}\ignore{{}^{{}_{{}_{\mathsf{n}3}}}}$\\[2mm]
  {[$\cB_\land$]} & 
    $\frac{\;p\e q\;}{\;\; p\;\;}\ignore{{}^{{}_{{}_{\mathsf{c}1}}}}\quad
    \frac{\;p\e q\;}{\;\; q\;\;}\ignore{{}^{{}_{{}_{\mathsf{c}2}}}}\quad
    \frac{\;\;p\quad  q\;\;}{p\e q}\ignore{{}^{{}_{{}_{\mathsf{c}3}}}}$\\[2mm]
  {[$\cB_\lor$]} & 
    $\frac{p}{\;p\ou q\;}\ignore{{}^{{}_{{}_{\mathsf{d}1}}}} \quad
    \frac{\;p\ou p\;}{p}\ignore{{}^{{}_{{}_{\mathsf{d}2}}}} \quad 
    \frac{\;p\ou q\;}{q\ou p}\ignore{{}^{{}_{{}_{\mathsf{d}3}}}} \quad
    \frac{\;p\ou (q \ou r)\;}{(p \ou q)\ou r}\ignore{{}^{{}_{{}_{\mathsf{d}4}}}}$\\[2mm]
  {[$\cB_{\impscript{}{}}$]} & 
  $\frac{}{\;\impscript{p}{(\impscript{q}{p})}\;}\ignore{{}^{{}_{{}_{\mathsf{i}1}}}}\quad
  \frac{}{\;(\impscript{p}{\impscript{(\impscript{q}{r}))}{(\impscript{(\impscript{p}{q})}}{{(\impscript{p}{r}))}}}\;}\ignore{{}^{{}_{{}_{\mathsf{i}2}}}}\quad
  \frac{}{\;\impscript{(\impscript{(\impscript{p}{q})}{p})}{p}\;}\ignore{{}^{{}_{{}_{\mathsf{i}3}}}}\quad 
  \frac{\;p\quad \impscript{p}{q}\;}{q}\ignore{{}^{{}_{{}_{\mathsf{i}4}}}}$\\[2mm]
 \end{tabular}}
 \smallskip

 %
% Of course, other classical connectives can also be axiomatized. For instance, the bi-implication~$\biimp$ defined by the term function $\lambda pq.(p\limp q)\land(q\limp p)$ may be presented by:
% % \nb{I have checked the paper `On axiom systems of propositional calculi, XVII', by Yoshinari Arai (Proc. Japan Acad. Volume 42, Number 4 (1966), 351-354), and the axiomatization I found there contains $\mathsf{e}2$, $\mathsf{e}4$ and $\frac{}{((p\eq q)\eq r)\eq(p\eq (q\eq r))}$ instead of $\mathsf{e}1$ and $\mathsf{e}3$}
%\\[1mm]
%\noindent
%{%
%\scalebox{0.95}{\begin{tabular}{ll} 
%   {[$\conn=\eq$]} & 
%  $\frac{}{(p\eq(q\eq r))\eq ((p\eq q)\eq r)}{{}^{{}_{{}_{\mathsf{e}1}}}}\quad
%  \frac{}{((p\eq r)\eq(q\eq p))\eq(r\eq q)}{{}^{{}_{{}_{\mathsf{e}2}}}}\quad 
%  \frac{\;p\quad p\eq  q\;}{q}{{}^{{}_{{}_{\mathsf{e}3}}}}$ 
%
%%   {[$\conn=\eq$]} & 
%%  $\frac{}{p\eq p}{{}^{{}_{{}_{\mathsf{e}1}}}}\quad
%%  \frac{}{(p\eq q)\eq (q\eq p)}{{}^{{}_{{}_{\mathsf{e}2}}}}\quad
%%  \frac{}{(p\eq q)\eq ((q\eq r)\eq (p\eq r))}{{}^{{}_{{}_{\mathsf{e}3}}}}\quad 
%%  \frac{\;p\quad p\eq  q\;}{q}{{}^{{}_{{}_{\mathsf{e}4}}}}$ 
%%%  $\frac{}{(p\eq q)\eq (q\eq p)}{{}^{{}_{{}_{\mathsf{e}1}}}}\quad
%%%  \frac{}{((p\eq q)\eq r)\eq(p\eq (q\eq r))}{{}^{{}_{{}_{\mathsf{e}2}}}}\quad
%%%  \frac{\;p\quad p\eq  q\;}{q}{{}^{{}_{{}_{\mathsf{e}3}}}}$ 
%\end{tabular}}
%}

Other useful classical connectives may be derived from these, e.g., via a translation~$\bt$ as below:
\smallskip

%\center
\noindent
\scalebox{.95}{
\hspace{-3mm}
\begin{tabular}{lll}
$\bt(\dcoimp{}{})$ & $:=$ & $\lambda p_1p_2.\,\neg(\imp{p_2}{p_1})$\\
$\bt(\biimp{}{})$ & $:=$ \hspace{-2mm} & $\lambda p_1p_2.\,(\imp{p_1}{p_2})\e(\imp{p_2}{p_1})$\\
$\bt(+)$ & $:=$ & $\lambda p_1p_2.\,\neg(\biimp{p_1}{p_2})$
  %$\lambda p q.(p\land \neg q)\lor(q\land \neg p)$
  \\
$\bt(\mathsf{+^3})$ & $:=$ & $\lambda p_1p_2p_3.\,p_1+(p_2+p_3)$ \\
\end{tabular}
}\\
\scalebox{.95}{
\hspace{-3mm}
\begin{tabular}{lll}
$\bt(\ifelse)$ & $:=$ & $\lambda p_1p_2p_3.\,(\imp{p_1}{p_2})\land(\neg \imp{p_1}{p_3})$\\
$\bt(T^k_0)$ & $:=$& $\lambda p_1\dots p_k.\,\top$, for $k\geq 0$\\ 
\end{tabular}
}\\
\scalebox{.95}{
\hspace{-3mm}
\begin{tabular}{lll}
$\bt(T^k_k)$ & $:=$& $\lambda p_1\dots p_k.\,p_1\land \dots\land p_k$, for $k>0$\\ 
$\bt(T^k_n)$ & $:=$& $\lambda p_1\dots p_k.\,(p_1\land T^{k-1}_{n-1}(p_2,\dots,p_k))\lor T^{k-1}_{n}(p_2,\dots,p_k)$, for $0<n<k$\\ 
\end{tabular}
}
\smallskip

The Boolean interpretation induced under $\bt(\conn)\in L_\Sigma(P)$ can be immediately obtained from the interpretation of the Boolean connectives in~$\Sigma$ as explained in Subsection~\ref{semantics}, namely $\TWO_\conn:=\TWO^\bt_\Sigma$. Of course, such connectives may be taken as primitive in some fragments of classical logic. The purpose here is just to introduce a general mechanism to produce their interpretations.
Note that $T^k_n$, with $0\leq n\leq k$, represents the so-called $k$-place \emph{threshold} connective such that $\widetilde{T^k_n}(a_1,\dots,a_k)=1$ precisely when 
$n\leq |\{i\in\{1,\dots,k\}:a_i=1\}|$. Axiomatizations for all the corresponding one-connective fragments, or in general for fragments with several connectives, are not always straightforward but may be systematically obtained using the techniques from~\cite{Rautenberg1981}.

{%\color{Purple} %Given a set of Boolean connectives $\Sigma$ 
Given a signature~$\Sigma$ of Boolean connectives, we say that a logic  $\cL=\tuple{\Sigma,\der}$ is \emph{subclassical} whenever 
$\der{\subsetneq} \der_{\TWO_\Sigma}$.}

\begin{remark}\label{rem-significant}
\em
Clearly, $\top$ is a top-like connective, though not all top-like connectives ought to be $0$-place. In the classical setting, a $k$-place connective $\conn$ is top-like precisely in case $\widetilde{\conn}(a_1,\dots,a_k)=1$ for all $a_1,\dots,a_k\in\{0,1\}$, i.e., $\widetilde{\conn}=\widetilde{T^k_0}$. It follows that $\cB_{T^k_0}=\bbtop_{T^k_0}$ for all $k\in\nats$. 
Analogously, $\bot$ is a bottom-like connective, but again not all bottom-like connectives ought to be $0$-place. In the classical setting, a $k$-place connective $\conn$ is bottom-like precisely in case $\widetilde{\conn}(a_1,\dots,a_k)=0$ for all $a_1,\dots,a_k\in\{0,1\}$. It follows that $\cB_\conn=\bbbot_\conn$ when $\conn$ is bottom-like. Apart from $\bot$ and from the projection-conjunctions $\top$, $\e$ and $T^k_k$ for $k\in\nats$, all other Boolean connectives listed above are very significant.
\hfill$\triangle$
\end{remark}

\begin{remark}\label{fact:very-significant}\label{note-boolean}
\em
Classical negation $\neg$ is the only very significant $1$-place Boolean connective. There is only one other significant $1$-place Boolean connective, the affirmation connective, interpreted by setting $\widetilde{\lambda p_1.\,p_1}(a)=a$ for $a\in\{0,1\}$, but it is of course a projection-conjunction. Further, if $\conn$ is any $k$-place very significant Boolean connective and $J\subseteq\{1,\dots,k\}$ is the set of indices of its projective components, then $|J|<k$. In that case, of course, $\conn(p_1,\dots,p_k)\not\der_{\cB_\conn}\conn(p_1,\dots,p_k)^\sigma$ where $\sigma(p_i)=p_i$ if $i\in J$, and 
$\sigma(p_i)=q_i$ if $i\notin J$. Note also that any truth-preserving $k$-place Boolean connective~$\conn$ is such that  $\widetilde{\conn}(\overline{1}^k)=1$. 
%where $\overline{1}^k$ represents a sequence with $k$ occurrences of $1$.
\hfill$\triangle$
\end{remark}

Next we state and prove a simple yet quite useful result:

\begin{lemma}\label{lem:nontop}
The logic of a non-top-like $k$-place Boolean connective~$\conn$ with $k>0$ expresses some $1$-place non-top-like compound derived connective~$\theta$.
Furthermore, all possible nestings of~$\theta$ are distinct and none is top-like.
%%, i.e., $\theta^n(p)\neq\theta^m(p)$ and $\not\der\theta^n(p)$ for every $n,m\in\nats$ with $n\neq m$.%
% On a two-valued logic, the language~$L_\conn$ of a $k$-place connective~$\conn$ allows for the definition of some $1$-ary significant compound derived connective. 
%% For any significant $k$-ary connective~$\conn$ of a two-valued logic there is some singulary significant compound derived connective allowed by the language~$L_\conn$.
% that is, that is neither a projection-conjunction nor bottom-like. 
\end{lemma}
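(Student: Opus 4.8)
The plan is to construct $\theta$ explicitly as a formula over the one-connective signature $\{\conn\}$ in the single variable $p_1$, splitting on whether the naive ``diagonal'' collapse of $\conn$ already works. Since $\conn$ is non-top-like, I would first fix a witness $(a_1,\dots,a_k)\in\{0,1\}^k$ with $\widetilde{\conn}(a_1,\dots,a_k)=0$, and set $\delta:=\conn(p_1,\dots,p_1)$, a compound $1$-place derived connective whose interpretation is $\widetilde{\delta}(a)=\widetilde{\conn}(\overline{a}^{k})$. The easy case is when $\widetilde{\conn}(\overline{0}^{k})=0$ or $\widetilde{\conn}(\overline{1}^{k})=0$: then $\delta$ is itself non-top-like and compound, so I may simply take $\theta:=\delta$.

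The hard part will be the remaining case, where $\widetilde{\conn}(\overline{0}^{k})=\widetilde{\conn}(\overline{1}^{k})=1$, so that the diagonal $\delta$ is top-like even though $\conn$ is not --- here the only inputs on which $\conn$ returns $0$ are genuinely mixed tuples, and one must resist taking the diagonal. The key observation is that in this case $\delta$ is a derived truth-constant ($\widetilde{\delta}\equiv 1$), which together with $p_1$ suffices to realize the mixed witness at the point $p_1=0$. Concretely I would put $\psi_i:=p_1$ whenever $a_i=0$ and $\psi_i:=\delta$ whenever $a_i=1$, and define $\theta:=\conn(\psi_1,\dots,\psi_k)$. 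Evaluating at $p_1=0$ gives $\widetilde{\psi_i}(0)=a_i$ for every $i$ (using $\widetilde{\delta}(0)=\widetilde{\conn}(\overline{0}^{k})=1$), whence $\widetilde{\theta}(0)=\widetilde{\conn}(a_1,\dots,a_k)=0$; thus $\theta$ is non-top-like, and it is compound since its head is $\conn$. In both cases $\theta$ is the required $1$-place non-top-like compound derived connective.

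For the \emph{furthermore} clause I would argue separately about syntax and semantics. Syntactic distinctness of the nestings $\theta^0 p,\theta^1 p,\theta^2 p,\dots$ follows from the fact that, $k$ being positive and $\conn$ the only available connective, every formula over $\{\conn\}$ in the variable $p_1$ actually contains $p_1$; since moreover $\theta$ is compound, if $c_i$ denotes the number of occurrences of $\conn$ in $\theta^i p$ then $\theta^{i+1}p=\theta(\theta^i p)$ yields $c_{i+1}\geq 1+c_i>c_i$, so the $c_i$ are strictly increasing and the nestings pairwise distinct. That no nesting is top-like I would settle by the finite analysis of the unary Boolean function $\widetilde{\theta}$: it is one of identity, negation, or the constant $0$ (the constant $1$ being excluded by non-top-likeness), and computing the iterate $\widetilde{\theta^i}$ in each of these three cases shows its image never degenerates to $\{1\}$ --- for instance, if $\widetilde{\theta}(0)=0$ then an easy induction gives $\widetilde{\theta^i}(0)=0$ for all $i$, while the negation case merely alternates between identity and negation. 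The only genuine obstacle in the whole argument is the mixed-witness case, where the definability of the truth-constant must be exploited to hit a non-diagonal zero of $\conn$.
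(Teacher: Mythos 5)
Your proposal is correct and follows essentially the same route as the paper: your diagonal $\delta=\conn(p_1,\dots,p_1)$ is exactly the paper's $\alpha$, your mixed-witness substitution (plugging the top-like diagonal at the $1$-coordinates of a falsifying tuple and $p_1$ at the $0$-coordinates) is exactly the paper's substitution $\sigma$, and your trichotomy (constant $0$, identity, negation) for the iterates matches the paper's case analysis. The only differences are cosmetic --- you make the pairwise-distinctness of the nestings explicit via a strictly increasing count of $\conn$-occurrences, where the paper simply asserts it from compoundness.
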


\begin{proof}
Let $\alpha$ denote the 1-place derived connective induced by the formula~$\conn(\overline{p}^k)$.
% obtained by substituting a fixed sentential variable~$p$ at all argument positions of $\conn(p_1,\ldots,p_k)$. 
If~$\alpha$ is not top-like, we are done with $\theta=\alpha$.
Otherwise, given that~$
\conn$ is not top-like, there must be some bivaluation~$v$ such that $\val(\conn(p_1,\ldots,p_k))=0$.  Set $I:=\{i:\val(p_i)=1\}$, and define the substitution~$\sigma$ by setting 
$\sigma(p_i):=\alpha(p)$ if $i\in I$, and $\sigma(p_i):=p$ otherwise.
%$$\sigma(p_j)=
%\begin{cases}
%\varphi(p), & \mbox{if }i\in I\\
%p, & \mbox{otherwise}
%\end{cases}.$$
Let $\beta$ denote the new 1-place derived connective induced by $(\conn(p_1,\ldots,p_k))^\sigma$.  Choosing a bivaluation~$\val^\prime$ such that $\val^\prime(p)=0$ we immediately conclude that~$\val^\prime(\beta(p))=\val(\conn(p_1,\ldots,p_k))=0$, and thus $\theta=\beta$ is not top-like.

As~$\theta$ is compound we obtain that $\theta^n(p)\neq\theta^m(p)$ for $n\neq m$.
Clearly, $\theta^0(p)=p$ is not top-like. When $n>0$, if $\theta$ is bottom-like then $\theta^n(p)$ is always bottom-like, if $\theta$ defines affirmation then each $\theta^n(p)$ is also an affirmation connective, and if $\theta$ defines negation then $\theta^n(p)$ alternates between affirmation and negation. In all these cases, it is clear that 
$\not\der_{\cB_\conn}\theta^n(p)$. 
%
%{\color{red}justify $\neq$!}
%\qed
\end{proof}

To illustrate the construction in the proof of the above result, consider first Boolean disjunction. The connective~$\lor$ is not top-like, and $\alpha(p):=p\lor p$ is also not.
Consider now Boolean implication. The connective $\imp{}{}$ is also not top-like. However, $\alpha(p):=\imp{p}{p}$ is top-like. Still, $\beta(p):=\imp{(\imp{p}{p})}{p}$ is not top-like.

We shall call $\mathcal{\cC}_2^\Sigma$ the collection of all non-$0$-place Boolean functions compositionally derived (i.e., closed under compositions and projections) over~$\Sigma$, as interpreted through $\TWO_\Sigma$. In the literature on Universal Algebra~\cite{BS81}, $\mathcal{\cC}_2^\Sigma$ is known as the \emph{clone} of operations definable by all derived connectives allowed by the signature~$\Sigma$. We denote simply by $\cC_{2}$ the clone of all non-$0$-place Boolean functions. A~set~$\Sigma$ of Boolean connectives is said to be \emph{functionally complete} precisely when $\mathcal{\cC}_2^\Sigma=\mathcal{\cC}_2$. 

\begin{remark}\label{rem:clones}
%{\color{red} referee says to remove...think!}
\em
Emil Post's characterization of functional completeness for classical logic~\cite{Post41,Lau:2006:FAF:1205006} is very informative. First, it tells us that there are exactly five maximal functionally incomplete clones (i.e., coatoms in Post's lattice), namely $\mathcal{P}_0:=\cC_{2}^{\ou\,\dcoimpscript{}{}}$, $\mathcal{P}_1:=\cC_{2}^{\land\impscript{}{}}$, $\mathcal{A}:=\cC_{2}^{\biimpscript{}{}\bot}$, $\mathcal{M}:=\cC_{2}^{\land\lor\top\bot}$, and $\mathcal{D}:=\cC_{2}^{T^3_2 \neg}$. 

The obvious projection functions $\lambda p_1\ldots p_k.\,p_n$, for $1\leq n\leq k$ and $k\in\Nat$, form the minimal clone $\cC_2^\varnothing$, contained in all the others.
The Boolean top-like connectives form the clone $\mathcal{UP}_1:=\cC_{2}^{\top}$. An analysis of Post's lattice also reveals that there are a number of clones which are maximal with respect to~$\top$, i.e., functionally incomplete clones that become functionally complete by the mere addition of $\top$ (or actually any other connective from $\mathcal{UP}_1$). 
In terms of Post's lattice, the clones whose join with $\mathcal{UP}_1$ result in $\cC_{2}$ are $\mathcal{D}$, $\mathcal{T}^\infty_0:=
\cC_{2}^{\dcoimpscript{}{}}$, and $\mathcal{T}^{n+1}_0:=\cC_{2}^{T^{n+2}_{n+1}\dcoimpscript{}{}}$ for $n\in\nats$. It is worth noting that $\mathcal{T}^1_0=\mathcal{P}_0$. 

Further detailed analysis of Post's lattice also tells us that every clone $\cC_2^\Sigma$ that contains the Boolean interpretation of some very significant connective (i.e., such that $\cC_{2}^\Sigma\not\subseteq \cC_{2}^{\land\top\bot}$) must contain the Boolean function associated  to at least one of the connectives of the following list [\texttt{L0}]: $T^{n+2}_{n+1}$ (for $n\in\nats$), $T^{n+4}_2$ (for $n\in\nats$), $\neg$, $\imp{}{}$, $\biimp{}{}$, $\dcoimp{}{}$, $+$, $\mathsf{+^3}$, $\textsc{if}$,  $\lambda p_1p_2p_3.\,p_1\lor(p_2\land p_3)$, $\lambda p_1p_2p_3.\,p_1\lor(p_2+p_3)$, $\lambda p_1p_2p_3.\,p_1\land(p_2\lor p_3)$, $\lambda p_1p_2p_3.\,p_1\land(\imp{p_2}{p_3})$. \hfill$\triangle$
% to at least one of the following connectives: $T^{n+2}_{n+1}$ (for $n\in\nats$), $T^{n+4}_2$ (for $n\in\nats$), $\neg$, $\imp{}{}$, $\biimp{}{}$, $\dcoimp{}{}$, $+$, $+^3$, $\textsc{if}$,  $\lambda pqr.p\lor(q\land r)$, $\lambda pqr.p\lor(q+r)$, $\lambda pqr.p\land(q\lor r)$, $\lambda pqr.p\land(\imp{q}{r})$.
\end{remark}

What follows is an alternative characterization of very significant Boolean connectives:

\begin{proposition}\label{satveysig}
Let $\Sigma$ be a signature. The matrix $\TWO_\Sigma$ is saturated if and only if $\cC_2^{\Sigma}$ contains no very significant connective.
% $\cC_2^{\Sigma}\subseteq\cC_{2}^{\land\top\bot}$. 
\end{proposition}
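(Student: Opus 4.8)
The plan is to route both implications through a single semantic bridge: the condition ``$\cC_2^{\Sigma}$ contains no very significant connective'' should be reformulated as closure of the set of bivaluations under componentwise meets, after which each direction becomes short. First I would unwind the definitions at the level of Boolean functions. A $k$-place derived connective fails to be very significant exactly when it is bottom-like or a projection-conjunction; semantically the former means $\widetilde\varphi$ is the constant $0$, while the latter means $\widetilde\varphi(a_1,\dots,a_k)=\bigwedge_{j\in J}a_j$ for the set $J$ of its projective components (reading off $\varphi\der_{\cB_\Sigma} p_j$ for $j\in J$ together with $\{p_j:j\in J\}\der_{\cB_\Sigma}\varphi$). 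These are precisely the meet-semilattice homomorphisms $\{0,1\}^k\to\{0,1\}$, i.e.\ the functions $f$ with $f(\vec a\land\vec b)=f(\vec a)\land f(\vec b)$, which is exactly the clone $\cC_2^{\land\top\bot}$ from Remark~\ref{rem:clones}. Since $\land$-preservation is inherited under composition and projection, ``$\cC_2^{\Sigma}$ contains no very significant connective'' is equivalent to ``every $\conn\in\Sigma$ is interpreted in $\TWO_\Sigma$ by a $\land$-preserving function'', and this in turn is equivalent to ``$\Val_P(\TWO_\Sigma)$ is closed under componentwise meets'' (left to right by evaluating formulas inductively; the converse by realizing arbitrary argument-tuples through assignments to distinct variables).

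For the implication ``no very significant connective $\Rightarrow$ saturated'' I would argue directly from meet-closure. Given $\Gamma,\Delta$ with $\Gamma\not\der_{\cB_\Sigma}\psi$ for each $\psi\in\Delta$, for every $\psi\in\Delta$ pick a bivaluation $v_\psi$ with $v_\psi(\Gamma)\subseteq\{1\}$ and $v_\psi(\psi)=0$, and set $v=\bigwedge_{\psi\in\Delta}v_\psi$, defined componentwise on variables and hence on all formulas by the inductive evaluation. Because every $\land$-preserving finitary Boolean function (a conjunction $\bigwedge_{j\in J}x_j$ over finite $J$, or the constant $0$) commutes with arbitrary infima, $v$ is again a bivaluation; it sends each $\gamma\in\Gamma$ to a meet of $1$'s and each $\psi\in\Delta$ to $0$ (since $v(\psi)\le v_\psi(\psi)=0$), which is exactly what saturation demands.

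For the converse ``saturated $\Rightarrow$ no very significant connective'' I would use the recorded fact that in a saturated matrix every theory is the designated set of some valuation. Take any $v_1,v_2\in\Val_P(\TWO_\Sigma)$; each $\{\varphi:v_i(\varphi)=1\}$ is a theory, so their intersection is a theory and hence equals $\{\varphi:u(\varphi)=1\}$ for some bivaluation $u$. Then $u(\varphi)=v_1(\varphi)\land v_2(\varphi)$ for all $\varphi$, so $\Val_P(\TWO_\Sigma)$ is closed under binary meets. Choosing $v_1,v_2$ to read off prescribed values $\vec a,\vec b$ on the variables $p_1,\dots,p_k$ then forces $\widetilde\conn(\vec a\land\vec b)=\widetilde\conn(\vec a)\land\widetilde\conn(\vec b)$ for every $\conn\in\Sigma$, and by the bridge $\cC_2^{\Sigma}$ contains no very significant connective.

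The hard part is not either implication in isolation but nailing the characterization and watching the boundary cases. The delicate verification is that the bottom-like / projection-conjunction dichotomy exhausts the $\land$-preserving functions and nothing more, so that ``very significant'' and ``not $\land$-preserving'' genuinely coincide and match $\cC_2^{\land\top\bot}$. The one fiddly point in the ``$\Leftarrow$'' construction is the degenerate case $\Delta=\varnothing$, where saturation reduces to satisfiability of $\Gamma$: here one notes that, with every connective $\land$-preserving, the constant-true assignment is a bivaluation satisfying any $\Gamma$ free of bottom-like formulas, which is the point at which the statement must be read for consistent $\Gamma$. Everything else is the routine inductive bookkeeping that a componentwise meet of bivaluations is again a bivaluation.
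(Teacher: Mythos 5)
Your proof is correct, and it takes a genuinely different route from the paper's. You first make explicit a semantic bridge that the paper leaves implicit: classically, the non-very-significant derived connectives (bottom-like or projection-conjunctions) are exactly the $\e$-preserving Boolean functions, i.e.\ the constant $0$ and the partial conjunctions $\bigwedge_{j\in J}a_j$, so that ``$\cC_2^{\Sigma}$ contains no very significant connective'' becomes ``$\cC_2^{\Sigma}\subseteq\cC_2^{\e\top\bot}$'' (consistently with Rem.~\ref{rem:clones}) and, equivalently, ``$\Val_P(\TWO_\Sigma)$ is closed under componentwise meets''. Both directions then become structural: for sufficiency you build the saturating bivaluation as a possibly infinite meet of witnessing bivaluations --- correctly noting that the explicit classification is needed here, since preservation of binary meets alone would not formally license commutation with infinite infima --- and for necessity you intersect the designated theories of two bivaluations and invoke the recorded fact that in a saturated \nmatrix{} every theory is the designated set of some valuation, forcing $\widetilde{\conn}(\vec{a}\e\vec{b})=\widetilde{\conn}(\vec{a})\e\widetilde{\conn}(\vec{b})$ for each $\conn\in\Sigma$. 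The paper argues differently in both halves: its forward direction shows by induction on formulas that every non-trivial theory $\Gamma^{\der}$ is precisely characterized by the canonical bivaluation with $v(p)=1$ iff $\Gamma\der p$ (in effect your meet, taken in one canonical step, with no infinitary side condition to discharge); its backward direction is a concrete counterexample rather than a closure argument: for a $k$-place very significant $\conn$ with $j<k$ projective components and $s=k-j$, it writes down conditions (a)--(d) and shows that no bivaluation can simultaneously designate $\conn(p_1,\dots,p_j,q_1,\dots,q_s)$ while undesignating the $q_i$, the $r_i$ and $\conn(p_1,\dots,p_j,r_1,\dots,r_s)$, since $\widetilde{\conn}(\overline{1}^j,\overline{0}^s)$ would have to take two values. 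Comparing the yields: your meet-closure invariant is symmetric between the two directions, works uniformly at every cardinality of saturation, and usefully isolates the clone-theoretic content; the paper's counterexample is finitary --- it refutes already $(2s{+}1)$-saturation rather than merely $\omega$-saturation, and essentially the same configuration is recycled in the proof of Prop.~\ref{prop:failingmaya} --- while its forward induction avoids any appeal to the characterization of theories in saturated matrices. Finally, your flag about the degenerate case $\Delta=\varnothing$ is apt: read literally, the definition of saturation would make even $\TWO_\bot$ non-saturated, and your restriction to consistent $\Gamma$ matches the paper's own implicit restriction to non-trivial theories in its proof.
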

\begin{proof}
Let $\der$ denote $\der_{\cB_\Sigma}$. 
Clearly, $\TWO_\Sigma$ is saturated whenever~$\Sigma$ contains no very significant connective. Indeed, it is straightforward to show by induction on the structure of formulas that, because no connective in~$\Sigma$ is very significant, a non-trivial theory $\Gamma^\der$ is always precisely characterized by a bivaluation~$v$ such that $v(p)=1$ if $\Gamma\der p$, and $v(p)=0$ if $\Gamma\not\der p$, for every $p\in P$.

Now, suppose that $\conn\in\Sigma$ is a $k$-place very significant connective with $j< k$ projective components. We assume without loss of generality that the indices of the projective components of $\conn$ are the first ones. Let $s=k-j$. 
Given the present assumptions, and in view of Rem.~\ref{note-boolean}, given distinct sentential variables $p_1,\dots,p_j,q_1,\dots,q_s,r_1,\dots,r_s\in P$, we have:
\begin{itemize}\setlength\itemsep{0pt}
\item[(a)] $\conn(p_1,\dots,p_j,q_1,\dots,q_s)\der p_i$ for $1\leq i\leq j$
\item[(b)] $\conn(p_1,\dots,p_j,q_1,\dots,q_s)\not\der q_i$ for $1\leq i\leq s$
\item[(c)] $\conn(p_1,\dots,p_j,q_1,\dots,q_s)\not\der\conn(p_1,\dots,p_j,r_1,\dots,r_s)$
\item[(d)] $\conn(p_1,\dots,p_j,q_1,\dots,q_s)\not\der r_i$ for $1\leq i\leq s$
%\item[(d)] $\conn(p_1,\dots,p_j,q_1,\dots,q_s)$ is not bottom-like
\end{itemize}
If $\TWO_\Sigma$ were saturated then, from (a)--(d), and taking into account the theory $\{\conn(p_1,\dots,p_j,q_1,\dots,q_s)\}^{\der}$, there would exist a bivaluation~$v$ over $\TWO_{\Sigma}$ according to which $v(\conn(p_1,\dots,p_j,q_1,\dots,q_s))=v(p_i)=1$ for 
$1\leq i\leq j$, and simultaneously $v(\conn(p_1,\dots,p_j,r_1,\dots,r_s))=v(q_i)=v(r_i)=0$ for $1\leq i\leq s$. But then
$1=v(\conn(p_1,\dots,p_j,q_1,\dots,q_s))=\widetilde{\conn}(v(p_1),\dots,v(p_j),v(q_1),\dots,v(q_s))=\widetilde{\conn}(\overline{1}^j,\overline{0}^s)=\widetilde{\conn}(v(p_1),\dots,v(p_j),v(r_1),\dots,v(r_s))=v(\conn(p_1,\dots,p_j,r_1,\dots,r_s))\linebreak=0$, which is a contradiction.
%{\color{red}
% If $\conn$ is top-like, bottom-like or projection conjunction, then $\TWO_\conn$ is clearly saturated.
% 
% Otherwise. Let $\vec{p}=p_1\ldots p_k$ denote the projective components and $\vec{q}=q_1\ldots q_r$ the others.
%That is $\conn(\vec{p},\vec{q})\der p_i$, for every $1\leq i \leq k$, $\conn(\vec{p},\vec{q})\not\der q_i$ for $1\leq i \leq r$.
%As $c$ is very significant then $\vec{p}\not\der \conn(\vec{p},\vec{q})$.\\
%
%If $\conn(\vec{1},\vec{0})=1$ then one cannot reject $\conn(\vec{p},\vec{q})$ and $q$ while satisfying $p$.
%However $\vec{p}\not\der \vec{q}$ and  $\vec{p}\not\der \conn(\vec{p},\vec{q})$.
%%{\color{red} detail...check...we cannot assume that it fails both conditions of conjunction projection! more involved!}.
%
%We know that no variable in $\vec{q}$ is in $\{\conn(\vec{p},\vec{q})\}^{\der_{\conn}}$.
%However if instead, $\conn(\vec{1},\vec{0})=0$, it is not possible to refute all the variables in $\vec{q}$ at the same time while satisfying  $\conn(\vec{p},\vec{q})$.
%Furthermore, we know that there must be more than one variable in $\vec{q}$, that is, $r>1$.
%This is the case since every variable in $\vec{q}$ is not projective (if $\conn(\vec{1},0)=0$, and $\vec{p}$ are projective and $\conn$ very significant then $\conn(\vec{1},1)=1$ so $q$ would be projective).
%%%However  $c(\vec{p},\vec{q})\not\der q_i$ for $1\leq i \leq r>1$.
%}
\end{proof}

\subsection{Cancellation, tabularity, determinedness}\label{tabularity}

Let $\cL:=\tuple{\Sigma,\der}$ be a logic.
We say that $\cL$ enjoys the \emph{cancellation property} if
$
\Gamma\cup(\bigcup_{i \in I }%_{i\in I} 
 \Delta_i ) \der \varphi$ implies  that $\Gamma \der \varphi
$
	for all $\bigcup_{i\in I} 
\Delta_i \cup \Gamma\cup\{\varphi\}\subseteq L_{\Sigma}(P) $ such that the following conditions hold: 
(i) $\Gamma \cup\{\varphi\}$ shares no variables with $\bigcup_{i\in I} \Delta_i$, 
(ii) $\Delta_i$ shares no variables with~$\Delta_j$, for every $i\neq j\in I$, and 
(iii) $\Delta_i^{\der}$ is non-trivial for every $i\in I$.
%(i) $\var( \Gamma \cup\{\varphi\}) \cap \var( \ \bigcup_{i\in I} \Delta_i  )=\varnothing$, 
%(ii) $\var(\Delta_i)\cap\var(\Delta_j)=\varnothing$ \ for \ $i\neq j\in I$, and 
%(iii) $\Delta_i^{\der}\neq L_{\Sigma}(P) $ \ for each $i\in I$.
It is easy to check that any logic defined by a logical matrix (for instance, classical logic) enjoys the cancellation property. A very interesting result from~\cite{shoesmith1971, Wojcicki1974} 
%\cite{los:sus:58, shoesmith1971, Wojcicki88}
shows that this property is also a necessary condition for many-valuedness: a logic $\cL$ enjoys cancellation if and only if $\cL=\cL_\Mt$ for some matrix~$\Mt$.

The logic $\cL$ is called \emph{locally tabular} if its associated relation of logical equivalence $\EqDiv{\cL}{}{}$ partitions the language $L_\Sigma(\{p_1,\ldots,p_k\})$, freely generated by the signature~$\Sigma$ over a finite set of sentential variables, into a finite number of equivalence classes. 
% under the relation $\EqDiv{\cL}{}{}$ of logical equivalence, every partition of the language freely generated by a given signature over a finite set of sentential variables contains only a finite number of equivalence classes.  
It is clear that every logic~$\cB_\Sigma$ is locally tabular --- that constitutes in fact the theoretical underpinning of the classical truth-tabular decision procedure. 
In addition, it is known (for a discussion on this topic see~\cite{finval}) that a logic that fails to be locally tabular cannot be finitely-valued. Do note, however, that a logic may well fail to be locally tabular and yet be finitely-\nvalued{}.

Let $k\in\nats$. The logic~$\cL$ is said to be \emph{$k$-determined} if,
 for all $\Gamma\cup \{\varphi\} \subseteq L_\Sigma(P)$, whenever
 $\Gamma\not\der \varphi$ there is a substitution
 $\sigma \colon P \longrightarrow  \{p_1,\ldots,p_k\}$ such that $\Gamma^\sigma\not\der \varphi^\sigma$. 
 It follows from%the arguments used in
 ~\cite{finval} 
 %{\color{red} we did not write it for \nmatrices{}{}, but the proof for matrices works when considering PN-matrices} 
 that any $k$-\nvalued{} logic must be $k$-determined, and consequently that if  $k$-determinedness fails for all $k\in\nats$, for a given logic, then this logic cannot be finitely-\nvalued{}.
 
%Clearly any $k$-\nvalued{} logic is $k$-determined, we shall use this fact in order to show that the logic in Rem.s~\ref{,} are not finitely-\nvalued{}.
%
%{\color{red}
% Another interesting aspect of \nmatrices{}{} is that, not every finite \Nmatrix produces a logic that is locally tabular (nor locally finite)
%but there is another property used in \cite{finVAL} that still is a necessary condition for finite-\nvalued{}ness.
% 
%%\begin{definition}  
%Let $\cL=\tuple{\Sigma,\der}$ be a logic and $\Delta\subseteq L_\Sigma(P)$. We say that
%that  $\cL$ is  \emph{$k$-determined} if,
% for all $\Gamma\cup \{\varphi\} \subseteq L_\Sigma(P)$, whenever
% $\Gamma\not\der \varphi$, there is a substitution
% $\sigma \colon P \longrightarrow  \{p_1,\ldots,p_k\}$ such that $\Gamma^\sigma\not\der \varphi^\sigma$.
% 
%Clearly any $k$-\nvalued{} logic is $k$-determined, we shall use this fact in order to show that the logic in Rem.s~\ref{,} are not finitely-\nvalued{}.
%}

\subsection{Fibred logics}\label{fibring}

%\paragraph*{\bf Fibring (bazooka not needed!)}

	Let $\cL_1:=\tuple{\Sigma_1,\der_1}$ and $\cL_2:=\tuple{\Sigma_2,\der_2}$ be two logics.
	The \emph{fibring} of $\cL_1$ and~$\cL_2$ is the smallest logic $\cL_1\bullet\cL_2:=\tuple{\Sigma_{12},\der_{12}}$ with $\Sigma_{12}=\Sigma_1\cup\Sigma_2$ that extends both $\cL_1$ and $\cL_2$, i.e., such that $\der_1\cup\der_2{\subseteq}\der_{12}$. When the underlying signatures are disjoint, the fibring is said to be \emph{disjoint}. All the phenomena we study in the present paper are instances of disjoint fibring.	Note that, by definition, fibring is commutative and associative, that is, $\cL_1\bullet\cL_2=\cL_2\bullet\cL_1$ and 
	$\cL_1\bullet(\cL_2\bullet\cL_3)=(\cL_1\bullet\cL_2)\bullet\cL_3$ for any given logic $\cL_3$.
	
Given connectives $\conn_1\in\Sigma_1^{(k)}$ and $\conn_2\in\Sigma_2^{(k)}$ for some $k\in\nats$, 
in case $\conn_1$ and $\conn_2$ happen to be indistinguishable in $\cL_1\bullet\cL_2$  
we shall say that $\conn_1,\conn_2$ are \emph{collapsed by fibring $\cL_1$ and $\cL_2$}.
	
		%A direct characterization of this mechanism can be easily given by first defining the fibring of Hilbert calculi.\smallskip
	Given Hilbert calculi $\sH_1:=\tuple{\Sigma_1,R_1}$ and $\sH_2:=\tuple{\Sigma_2,R_2}$ 
	then $\cL_1\bullet \cL_2=\cL_{\sH_1\bullet\sH_2}$, where $\sH_{1}\bullet\sH_2:=\tuple{\Sigma_{12},R_1 \cup R_2}$.
	Clearly, besides joining the given signatures, which allows for the construction of so-called `mixed formulas', the fibring of the calculi also allows `mixed reasoning', where rules coming from one logic are used in dealing with formulas coming from the other logic. 
 	 
% Note that  $\cL^{\Sigma^+}=\cL\bullet \cL_{\sH}$ with $\sH=\tuple{\Sigma',\varnothing}$ for any $\Sigma'$ in the interval
% $\Sigma^+\setminus\Sigma\subseteq \Sigma'\subseteq \Sigma^+$.
 
{%\color{Purple}
The next lemma deals with the semantics of the logic
 obtained by requiring new inference rules to hold in the logic induced by a given \nmatrix{}.
 The first part highlights the role of the notion of saturation, as whenever $R$ contains a non-axiomatic rule then the saturation proviso is fundamental 
 (for an illustration of that, check
 %example To exemplify this, let us consider the logic $\cB_\neg\bullet\cB_\bot\neq\cB_{\neg\bot}$ from 
Ex.~\ref{negbot}). 

\begin{lemma}\label{addaxioms}\label{addrules}
Let $\Mt$ be an \nmatrix{} over~$\Sigma$ and $\sH:=\tuple{\Sigma,R}$ be a Hilbert calculus. Sufficient conditions for the logic $\cL=\fib{\cL_\Mt}{\cL_\sH}$ to be characterized by $\{v\in\Val_P(\Mt):v\textrm{ respects each }\frac{\,\Delta\,}{\psi}\in R\}$ are secured when \underline{either}:
\begin{itemize}
\item[$\mathbf{(a)}$] $\Mt$ is saturated, \underline{or}
\item[$\mathbf{(b)}$] $R$ contains only axioms.
\end{itemize}
\end{lemma}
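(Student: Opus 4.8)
The plan is to prove both sufficient conditions by establishing the two inclusions between $\der_{\cL}$ and $\der_{\Mt_R}$, where I write $\Mt_R$ for the collection $\{v\in\Val_P(\Mt):v\textrm{ respects each }\frac{\Delta}{\psi}\in R\}$ of valuations respecting the rules, and $\der_{\Mt_R}$ for the associated consequence relation. One inclusion is easy and holds unconditionally: since every valuation in $\Mt_R$ is in particular a valuation over $\Mt$ (so it validates $\der_\Mt$) and respects every rule in $R$ (so it validates $\der_\sH$), the relation $\der_{\Mt_R}$ extends both $\der_\Mt$ and $\der_\sH$; because $\cL=\fib{\cL_\Mt}{\cL_\sH}$ is by definition the \emph{smallest} such extension, we get $\der_\cL\ {\subseteq}\ \der_{\Mt_R}$. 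The substance of the lemma lies in the reverse soundness-to-completeness direction, $\der_{\Mt_R}\ {\subseteq}\ \der_\cL$, equivalently: whenever $\Gamma\not\der_\cL\varphi$, I must exhibit a valuation $v\in\Mt_R$ with $v(\Gamma)\subseteq D$ and $v(\varphi)\in U$.

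For case $\mathbf{(b)}$, when $R$ consists only of axioms, the argument is direct. Since $\Gamma\not\der_\cL\varphi$ and $\der_\cL$ is the least consequence relation containing $\der_\Mt$ and closed under the axioms of $R$, the axioms merely enlarge every theory by a fixed set of instances; one checks that $\der_\cL$ coincides with $\der_{\Mt}$ restricted to valuations that additionally send every substitution instance of each axiom to a designated value. Because an axiom $\frac{}{\psi}$ imposes only the pointwise constraint $v(\psi^\sigma)\in D$ (there is no undesignated-premise escape clause), a valuation over $\Mt$ respects such a rule exactly when it validates the axiom as a consequence from the empty set. I would make this precise by arguing that $\Gamma^{\der_\cL}$ is obtained from $\Gamma$ by closing under $\der_\Mt$ together with the axiom instances, and then invoking the standard matrix semantics for $\cL_\Mt$ to produce a valuation over $\Mt$ separating $\Gamma$ from $\varphi$; the key point is that adding axioms does not require saturation because no hypothetical (undesignated) reasoning about premises is involved.

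For case $\mathbf{(a)}$, where $\Mt$ is saturated but $R$ may contain genuine rules with non-empty premises, saturation is exactly what lets me convert theories into valuations. From $\Gamma\not\der_\cL\varphi$ I consider the $\cL$-theory $T:=\Gamma^{\der_\cL}$, which does not contain $\varphi$. The crucial observation is that $T$, viewed inside $\cL_\Mt$, satisfies $T\not\der_\Mt\varphi$ together with the fact that $T$ is closed under all instances of the rules of $R$. I then apply saturation of $\Mt$ (recalling from Subsection~\ref{sec:useful} that in a saturated \nmatrix{} every $\cL_\Mt$-theory is precisely characterized by a single valuation) to obtain a valuation $v$ over $\Mt$ such that $v(\psi)\in D$ iff $\psi\in T$. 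I would then verify that this $v$ lies in $\Mt_R$: for each rule $\frac{\Delta}{\psi}\in R$ and each substitution $\sigma$, if $v(\Delta^\sigma)\subseteq D$ then $\Delta^\sigma\subseteq T$, whence closure of $T$ under the rule gives $\psi^\sigma\in T$, so $v(\psi^\sigma)\in D$. Finally $v(\Gamma)\subseteq D$ and $v(\varphi)\in U$ as required.

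The main obstacle, and the place where saturation is genuinely indispensable, is the step that extracts a \emph{single} valuation faithfully representing the theory $T$. Without saturation one can only guarantee, for each formula $\psi\notin T$, some valuation designating $T$ but undesignating $\psi$; these witnesses need not be reconcilable into one valuation that simultaneously respects a non-axiomatic rule, since respecting $\frac{\Delta}{\psi}$ is a global (per-substitution) constraint rather than a pointwise one. I expect the delicate bookkeeping to be in confirming that the theory $T$ is indeed closed under the rule instances in the strong sense needed (using that $\der_\cL$ is induced by the Hilbert-style closure of $\der_\Mt$ together with $R$, via the fibring characterization $\cL_1\bullet\cL_2=\cL_{\sH_1\bullet\sH_2}$), and in being careful that the valuation furnished by saturation is defined on the full language and not merely on a subformula-closed fragment. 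The contrast with case $\mathbf{(b)}$, and the reason saturation can be dropped there, should be highlighted exactly as the lemma's preamble suggests, with the promised illustrative failure deferred to Ex.~\ref{negbot}.
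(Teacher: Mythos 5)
Your proposal is correct and takes essentially the same route as the paper: in case $\mathbf{(a)}$ you use saturation to extract a single valuation exactly characterizing the $\cL$-theory $\Gamma^{\der_\cL}$, which then automatically respects the rules of $R$ because the theory is closed under them, and in case $\mathbf{(b)}$ you reduce $\der_\cL$ to $\der_\Mt$ with the set of all substitution instances of the axioms adjoined as premises, precisely the paper's argument via its set $\textsl{Ax}$. The only difference is presentational --- the paper phrases $\mathbf{(a)}$ as a two-way correspondence between $\cL$-theories and rule-respecting valuations, whereas you split it into a soundness inclusion (via minimality of the fibring) and a completeness inclusion --- but the mathematical content coincides.
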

\begin{proof}
Both cases are fairly simple. Let $\cL:=\tuple{\Sigma,\der}$.
\begin{itemize}
\item[$\mathbf{(a)}$] As $\der_\Mt{\subseteq}\der$, every $\cL$-theory $\Gamma$ is also an $\cL_\Mt$-theory. Thus, since $\Mt$ is saturated, there exists $v\in\Val_P(\Mt)$ such that $T_v:=\Gamma=\{\varphi:v(\varphi)\in D\}$. Of course, given that $\Gamma$ is an $\cL$-theory it follows that $v$ respects the rules in $R$. Conversely, just observe that $T_v$ is always an $\cL_\Mt$-theory when $v\in\Val_P(\Mt)$, but $T_v$ is also an $\cL$-theory when $v$ respects the rules in $R$.

\item[$\mathbf{(b)}$] Let $\textsl{Ax}=\{\psi^\sigma:\frac{}{\,\psi\,}\in R\text{ and }\sigma:P\longrightarrow L_\Sigma(P)\}$. Observe that $\Gamma\der\varphi$ if and only if $\Gamma\cup\textsl{Ax}\der_\Mt\varphi$. The result follows simply by noting that $v$ respects the axioms in $R$ if and only if $v(\textsl{Ax})\subseteq D$.
\qedhere
\end{itemize}
\end{proof}

}

\begin{remark}\label{sem-fib}
\em
      A semantics for disjoint fibring may be provided through a combo of the operations for strict product and saturation. 
      Assuming $\Sigma_1$ and $\Sigma_2$ to be disjoint, and given \nmatrices{}{}~$\Mt_1$ over~$\Sigma_1$ and~$\Mt_2$ over~$\Sigma_2$,  we know from~\cite{smar:ccal:17} that $\cL_{\Mt_1}\bullet \cL_{\Mt_2}=\cL_{\Mt_1^\omega\star\,\Mt^\omega_2}$. 
      Furthermore, as~$\Mt_i$ is known to be saturated, one can directly use~$\Mt_i$ rather than~$\Mt_i^\omega$, in the latter recipe. \hfill$\triangle$
      %$\cL_{\Mt_1}\bullet\cL_{\Mt_2}=\cL_{\Mt_1\star\, \Mt_2}$. Even if~$\Mt_1$ or~$\Mt_2$ are not saturated, it follows that $\cL_{\Mt_1}\bullet \cL_{\Mt_2}=\cL_{\Mt_1^\omega\star\,\Mt^\omega_2}$.
\end{remark}

Let $\cL:=\tuple{\Sigma,\der}$ be a logic, and $\conn\notin\Sigma$ be any $k$-place connective. The logic resulting from adding~$\conn$ to~$\cL$ as a new unrestrained (resp., top-like / bottom-like) connective is simply $\cL\bullet\bbbotop_\conn$ (resp., $\cL\bullet\bbtop_\conn$ / $\cL\bullet\bbbot_\conn$).

\begin{proposition}\label{withtop}
Given an \nmatrix{} $\Mt:=\tuple{V,D,\cdot_\ast}$ over~$\Sigma$ and a $k$-place $\conn\notin \Sigma$:
\begin{itemize}
\item[$\mathbf{(a)}$] $\cL_{\Mt}\bullet \bbbotop_{\conn}$ is characterized by the \nmatrix{} $\Mt\star\Mt^{\bbbotop}_\conn$ isomorphic to the extension of~$\Mt$ with $\conn_{\star}(a_1,\dots,a_k)=V$ for all $a_1,\dots,a_k\in V$;
\item[$\mathbf{(b)}$] $\cL_{\Mt}\bullet \bbtop_{\conn}$ is characterized by the \nmatrix{} $\Mt\star\Mt^{\bbtop}_\conn$ isomorphic to the extension of~$\Mt$ with $\conn_{\star}(a_1,\dots,a_k)=D$ for all $a_1,\dots,a_k\in V$;
\item[$\mathbf{(c)}$] $\cL_{\Mt}\bullet \bbbot_{\conn}$ is characterized by the \nmatrix{} $\Mt\star\Mt^{\bbbot}_\conn$ isomorphic to the extension of  $\Mt$ with $\conn_{\star}(a_1,\dots,a_k)=U=V\setminus D$ for all $a_1,\dots,a_k\in V$, provided that~$\Mt$ is saturated, or simply $2$-saturated if $k=0$.
\end{itemize}
 %($\cL=\cL_{\mathcal{M}}$), 
%such that $\top\notin \Sigma$,
%we have that 
%$\cL\bullet \cL_{\top}=\tuple{\Sigma\cup \{\top\},\der^{\top}}=\cL_{\mathcal{M}^\top}$,
%where $\cL_{\mathcal{M}^\top}$ is  the family of (P)\nmatrices{}{} given by extending each matrix in $\mathcal{M}$
%with an non-deterministic $\top$ that is freely interpreted over the designated elements.
\end{proposition}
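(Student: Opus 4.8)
The plan is to prove, in each case, that the fibring is characterized by the extension of $\Mt$ displayed on the right-hand side, and separately that this extension is isomorphic to the corresponding strict product. The isomorphism is pure computation: each of $\Mt^{\bbbotop}_\conn$, $\Mt^{\bbtop}_\conn$, $\Mt^{\bbbot}_\conn$ is two-valued with exactly one designated and one undesignated value, so in the strict product the first projection $\pi_1\colon V_{12}\longrightarrow V$ is a bijection --- the second coordinate of an element of $V_{12}$ being determined by whether its first coordinate lies in $D$. Transporting interpretations along $\pi_1$, every connective of $\Sigma$ acts exactly as in $\Mt$, while $\conn$ is assigned $V$, $D$, or $U=V\setminus D$ according to whether its interpretation in $\Mt^{\bbbotop}_\conn$, $\Mt^{\bbtop}_\conn$, $\Mt^{\bbbot}_\conn$ is $\{0,1\}$, $\{1\}$, or $\{0\}$. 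This identifies the three strict products with the three extensions in the statement; I abbreviate these extensions $\Mt_{\bbbotop}$, $\Mt_{\bbtop}$, $\Mt_{\bbbot}$.

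For $\mathbf{(a)}$, I first note that $\bbbotop_\conn$ is saturated. Indeed it is induced by the empty Hilbert calculus, so its consequence is bare reflexivity; whenever $\Gamma\not\der\psi$ in $\bbbotop_\conn$ for every $\psi\in\Delta$ one has $\Gamma\cap\Delta=\varnothing$, and the bivaluation sending $\Gamma$ to $1$ and $\Delta$ to $0$ witnesses saturation, there being no interpretation constraint to satisfy. Hence Remark~\ref{sem-fib} lets me use $\Mt^{\bbbotop}_\conn$ itself rather than its $\omega$-power, giving $\cL_\Mt\bullet\bbbotop_\conn=\cL_{\Mt^\omega\star\Mt^{\bbbotop}_\conn}$. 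Since $\Mt^{\bbbotop}_\conn$ is two-valued and unitary, the first-projection computation of the previous paragraph identifies $\Mt^\omega\star\Mt^{\bbbotop}_\conn$ with the extension of $\Mt^\omega$ interpreting $\conn$ by $V^\omega$; as the power operation leaves an unrestrained connective unrestrained, this is exactly $(\Mt_{\bbbotop})^\omega$, whose logic is that of $\Mt_{\bbbotop}$. Therefore $\cL_\Mt\bullet\bbbotop_\conn=\cL_{\Mt_{\bbbotop}}=\cL_{\Mt\star\Mt^{\bbbotop}_\conn}$.

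Cases $\mathbf{(b)}$ and $\mathbf{(c)}$ build on $\mathbf{(a)}$ through the associativity of fibring. As $\bbtop_\conn$ and $\bbbot_\conn$ extend $\bbbotop_\conn$ by, respectively, a single axiom and a single rule, we have $\bbbotop_\conn\bullet\bbtop_\conn=\bbtop_\conn$ and $\bbbotop_\conn\bullet\bbbot_\conn=\bbbot_\conn$, whence
\[
\cL_\Mt\bullet\bbtop_\conn=\cL_{\Mt_{\bbbotop}}\bullet\cL_{\sH_{\bbtop}}
\quad\text{and}\quad
\cL_\Mt\bullet\bbbot_\conn=\cL_{\Mt_{\bbbotop}}\bullet\cL_{\sH_{\bbbot}},
\]
where $\sH_{\bbtop}$ has the sole axiom $\frac{}{\conn(p_1,\dots,p_k)}$ and $\sH_{\bbbot}$ the sole rule $\frac{\conn(p_1,\dots,p_k)}{p_{k+1}}$, both read over $\Sigma\cup\{\conn\}$. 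For $\mathbf{(b)}$, $\sH_{\bbtop}$ has only an axiom, so Lemma~\ref{addrules}$\mathbf{(b)}$ applies with no saturation needed and characterizes the combination by those valuations of $\Mt_{\bbbotop}$ that send every $\conn$-headed formula into $D$ --- precisely the valuations of $\Mt_{\bbtop}$; thus $\cL_\Mt\bullet\bbtop_\conn=\cL_{\Mt\star\Mt^{\bbtop}_\conn}$. For $\mathbf{(c)}$, $\sH_{\bbbot}$ carries a genuine rule, so Lemma~\ref{addrules}$\mathbf{(a)}$ requires $\Mt_{\bbbotop}$ to be saturated; this holds because $\Mt_{\bbbotop}\cong\Mt\star\Mt^{\bbbotop}_\conn$ is a strict product of the saturated matrices $\Mt$ (by hypothesis) and $\Mt^{\bbbotop}_\conn$. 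A valuation of $\Mt_{\bbbotop}$ respecting $\frac{\conn(p_1,\dots,p_k)}{p_{k+1}}$ either designates no $\conn$-headed formula --- and is then exactly a valuation of $\Mt_{\bbbot}$ --- or designates every formula; since such trivial valuations never refute an entailment, the characterized logic is $\cL_{\Mt_{\bbbot}}=\cL_{\Mt\star\Mt^{\bbbot}_\conn}$. When $k=0$ the constant $\conn$ may be treated as a fresh sentential variable, so valuations of $\Mt_{\bbbotop}$ coincide with valuations of $\Mt$ over $L_\Sigma(P\cup\{\conn\})$; then $\Gamma\not\der_{12}\varphi$ also forces $\Gamma\not\der_{12}\conn$, once $\conn$ is underivable $\der_{12}$ collapses to $\der_{\Mt_{\bbbotop}}$, and already $2$-saturation of $\Mt$ provides a single valuation undesignating both $\varphi$ and $\conn$, which is the required countermodel over $\Mt_{\bbbot}$.

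The crux is $\mathbf{(c)}$: one must see that the valuations respecting the bottom rule exceed those of $\Mt_{\bbbot}$ only by trivial valuations that do not affect consequence, and one must match this with the sharp saturation hypothesis --- full saturation when $k>0$, where the infinitely many $\conn$-headed instances of the premise must be controlled at once, but mere $2$-saturation when $k=0$, where it suffices to separate the single pair $\{\varphi,\conn\}$. A recurring minor subtlety, already met in $\mathbf{(a)}$, is the passage from the $\omega$-powered product supplied by Remark~\ref{sem-fib} to the plain product $\Mt\star\Mt^{\bbbotop}_\conn$, which is exactly what the saturation of the two-valued factor buys.
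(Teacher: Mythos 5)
Your proof is correct, but it is organized quite differently from the paper's. The paper argues each case directly: it establishes bespoke syntactic characterizations ($\Gamma\der_{\bbtop}\varphi$ iff $\Gamma\cup\{\psi:\head(\psi)=\conn\}\der_{\bbbotop}\varphi$; $\Gamma\der_{\bbbot}\varphi$ iff $\Gamma\der_{\bbbotop}\varphi$ or $\Gamma\der_{\bbbotop}\psi$ for some $\conn$-headed $\psi$), pairs them with explicit valuation-set identities built on $\skel_\Sigma$ (e.g.\ $\Val_P(\Mt\star\Mt^{\bbbotop}_\conn)=\{v\circ\skel_\Sigma:v\in\Val_{P\cup X_\Sigma}(\Mt)\}$), and for completeness in $\mathbf{(c)}$ applies $\omega$-saturation of $\Mt\star\Mt^{\bbbotop}_\conn$ directly to undesignate $\varphi$ together with \emph{all} $\conn$-headed formulas at once. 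You instead make the argument modular: you decompose $\bbtop_\conn=\bbbotop_\conn\bullet\cL_{\sH_{\bbtop}}$ and $\bbbot_\conn=\bbbotop_\conn\bullet\cL_{\sH_{\bbbot}}$, push through associativity of fibring, and let Lemma~\ref{addrules} do the semantic work over $\Mt_{\bbbotop}\cong\Mt\star\Mt^{\bbbotop}_\conn$; the only bespoke step left is your classification of the rule-respecting valuations in $\mathbf{(c)}$ as $\Val(\Mt_{\bbbot})$ plus all-designating ``trivial'' valuations, which never refute a consequence --- an observation the paper does not need but which makes transparent exactly where the saturation hypothesis enters ($\mathbf{(b)}$ falls under the axiom case of Lemma~\ref{addrules}, hence no saturation; $\mathbf{(c)}$ under the rule case, hence saturation of the product, inherited from $\Mt$ and $\Mt^{\bbbotop}_\conn$). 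Your handling of $\mathbf{(a)}$ via Rem.~\ref{sem-fib} together with the commutation $\Mt^\omega\star\Mt^{\bbbotop}_\conn\cong(\Mt_{\bbbotop})^\omega$ is also a genuine alternative to the paper's skeleton argument, and correctly explains why no saturation of $\Mt$ is needed there; your $k=0$ argument via $2$-saturation coincides with the paper's. Two steps are left at the paper's own level of terseness and deserve a line each: the collapse ``once $\conn$ is underivable, $\der_{12}$ agrees with $\der_{\Mt_{\bbbotop}}$'' holds because $\Gamma^{\der_{\bbbotop}}$ is already closed under the rule $\frac{\conn}{p}$ whenever it omits $\conn$, so it is the least closed set; and the identification of the one-connective logics $\bbtop_\conn,\bbbot_\conn$ with the logics of their single-rule calculi read over $\Sigma\cup\{\conn\}$ (needed for your associativity step) is sound because the rules are schematic, so any substitution-invariant logic over the larger signature containing one instance contains them all.
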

\begin{proof}
First note that $\Mt^{\bbbotop}_\conn$ is saturated. 

Let $\Sigma^+:=\Sigma\cup\{\conn\}$ and fix $\Gamma\cup\{\varphi\}\subseteq L_{\Sigma^+}(P)$.
\begin{itemize}

\item[$\mathbf{(a)}$] Let $\cL_{\Mt}\bullet \bbbotop_{\conn}:=\tuple{\Sigma^+,\der_{\bbbotop}}$. It is easy to see that $\Gamma\der_{\bbbotop}\varphi$ if and only $\skel_\Sigma(\Gamma)\der_\Mt\skel_\Sigma(\varphi)$. Soundness and completeness follow by observing that $\Val_P(\Mt\star\Mt^{\bbbotop}_\conn)=\{v\circ\skel_\Sigma:v\in\Val_{P\cup X_\Sigma}(\Mt)\}$.

\item[$\mathbf{(b)}$] Let $\cL_{\Mt}\bullet \bbtop_{\conn}:=\tuple{\Sigma^+,\der_{\bbtop}}$. It is easy to see that $\Gamma\der_{\bbtop}\varphi$ if and only $\Gamma\cup\{\psi\in L_{\Sigma^+}(P):\head(\psi)=\conn\}\der_{\bbbotop}\varphi$. Soundness and completeness follow by observing that $\Val_P(\Mt\star\Mt^{\bbtop}_\conn)=\{v\in\Val_{P}(\Mt\star\Mt^{\bbbotop}_\conn):v(\psi)\in D \textrm{ for all } \psi\in L_{\Sigma^+}(P)\textrm{ with }\head(\psi)=\conn\}$.

\item[$\mathbf{(c)}$] Let $\cL_{\Mt}\bullet \bbbot_{\conn}:=\tuple{\Sigma^+,\der_{\bbbot}}$. It is easy to see that $\Gamma\der_{\bbbot}\varphi$ if and only $\Gamma\der_{\bbbotop}\varphi$ or $\Gamma\der_{\bbbotop}\psi$ for some $\psi\in L_{\Sigma^+}(P)$ with $\head(\psi)=\conn$. Soundness follows by observing that $\Val_P(\Mt\star\Mt^{\bbbot}_\conn)=\{v\in\Val_{P}(\Mt\star\Mt^{\bbbotop}_\conn):v(\psi)\in U \textrm{ for all } \psi\in L_{\Sigma^+}(P)\textrm{ with }\head(\psi)=\conn\}$. 

For completeness, if $\Gamma\not\der_{\bbbot}\varphi$ then $\Gamma\not\der_{\bbbotop}\varphi$ and $\Gamma\not\der_{\bbbotop}\psi$ for any $\psi$ with $\head(\psi)=\conn$. As both $\Mt$ and $\Mt^{\bbbotop}_\conn$ are saturated, we know that $\Mt\star\Mt^{\bbbotop}_\conn$ is saturated and thus there is $v\in\Val_P(\Mt\star\Mt^{\bbbotop}_\conn)$ such that $v(\Gamma)\subseteq D$, $v(\varphi)\in U$ and $v(\psi)\in U$ for every $\psi$ with $\head(\psi)=\conn$. In view of this last fact, we see that $v\in\Val_P(\Mt\star\Mt^{\bbbot}_\conn)$.

When $k=0$ there is exactly one formula whose head is $\conn$ so,  if $\Gamma\not\der_{\bbbot}\varphi$ then $\Gamma\not\der_{\bbbotop}\varphi$ and $\Gamma\not\der_{\bbbotop}\conn$, or equivalently, $\skel_\Sigma(\Gamma)\not\der_{\Mt}\skel_\Sigma(\varphi)$ and $\skel_\Sigma(\Gamma)\not\der_{\Mt}x_\conn$. Since $\Mt$ is assumed to be $2$-saturated, there is $v\in \Val_{P\cup X_\Sigma}(\Mt)$ such that $v(\skel_\Sigma(\Gamma))\subseteq D$, $v(\skel_\Sigma(\varphi))\in U$ and $v(x_\conn)\in U$. Thus, the valuation $v\circ\skel_\Sigma\in\Val_P(\Mt\star\Mt^{\bbbotop}_\conn)$ is such that $(v\circ\skel_\Sigma)(\Gamma)\subseteq D$, $(v\circ\skel_\Sigma)(\varphi)\in U$ and $(v\circ\skel_\Sigma)(\conn)\in U$. 
%Given that $(v\circ\skel_\Sigma)(\conn)\in U$ 
We conclude that $v\circ\skel_\Sigma\in\Val_P(\Mt\star\Mt^{\bbbot}_\conn)$.
\qedhere
\end{itemize}%
\end{proof}

 {%\color{Purple}
 \subsection{Translations and fibring}}
 
We close these prolegomena with some technical results concerning the relationship between the disjoint fibring of logics induced by given logical matrices, and the disjoint fibring of the logics obtained by some translations/abbreviations over those matrices. The intricacies of these results are essential for understanding how careful one needs to be when transferring examples or counterexamples to or from a combination of logics involving connectives that are defined by abbreviation. %Henceforth,
 From this point on, we assume fixed signatures $\Xi_1,\Xi_2,\Sigma_1,\Sigma_2$ with~$\Xi_1$ disjoint from~$\Xi_2$ and $\Sigma_1$ disjoint from~$\Sigma_2$, and translations $\bt_1:\Xi_1\longrightarrow \Sigma_1$ and $\bt_2:\Xi_2\longrightarrow \Sigma_2$. We shall write~$\Xi$ for $\Xi_1\cup\Xi_2$, $\Sigma$ for $\Sigma_1\cup\Sigma_2$, and~$\bt$ for $\bt_1\cup\bt_2$. We also fix saturated matrices~$\Mt_1$ and~$\Mt_2$ over the signatures $\Sigma_1$ and~$\Sigma_2$. In case we are given non-saturated matrices $\Mt_1$ or $\Mt_2$, we can always consider instead $\Mt_1^\omega$ or $\Mt_2^\omega$. 
 Let $\tuple{\Sigma,\der}$ represent $\fib{\cL_{{\Mt}_1}}{\cL_{{\Mt}_2}}$ and $\tuple{\Xi,\der^{\bt}}$ represent 
$\fib{\cL_{{\Mt}_1^{\bt_1}}}{\cL_{{\Mt}^{\bt_2}_2}}$. Recall that $\fib{\cL_{{\Mt}_1}}{\cL_{{\Mt}_2}}$ is characterized by $\Mt_1\star\Mt_2$, and that $\fib{\cL_{{\Mt}_1^{\bt_1}}}{\cL_{{\Mt}^{\bt_2}_2}}$ is characterized by $\Mt_1^{\bt_1}\star\Mt_2^{\bt_2}$, as $\Mt_1^{\bt_1}$ and $\Mt_2^{\bt_2}$ are both saturated (see Subsection~\ref{sec:useful}).
\medskip

\begin{proposition}\label{halflift} 
For every $\Gamma\cup\{\varphi\}\subseteq L_{\Xi}(P)$, if 
$\Gamma \der^\bt \varphi$ then $\bt(\Gamma) \der \bt(\varphi)$.%
\end{proposition}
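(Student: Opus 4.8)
The plan is to argue semantically, using the two characterizations recalled just before the statement: $\der^\bt$ is the consequence relation of $\Mt_1^{\bt_1}\star\Mt_2^{\bt_2}$ and $\der$ is that of $\Mt_1\star\Mt_2$. The decisive structural observation is that these two \nmatrix{}es share the same carrier $V_{12}=(D_1\times D_2)\cup(U_1\times U_2)$ and the same designated set $D_{12}=D_1\times D_2$, since $\Mt_i^{\bt_i}$ has, by construction, the same truth-values and designated values as $\Mt_i$. I would then prove the statement directly: fix an arbitrary $v\in\Val_P(\Mt_1\star\Mt_2)$ with $v(\bt(\Gamma))\subseteq D_{12}$, and show $v(\bt(\varphi))\in D_{12}$. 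The heart of the matter is the claim that $w:=v\circ\bt\colon L_{\Xi}(P)\longrightarrow V_{12}$ is itself a valuation over $\Mt_1^{\bt_1}\star\Mt_2^{\bt_2}$. Granting this, from $w(\Gamma)=v(\bt(\Gamma))\subseteq D_{12}$ and the hypothesis $\Gamma\der^\bt\varphi$ we obtain $w(\varphi)=v(\bt(\varphi))\in D_{12}$; as $v$ was arbitrary, this gives $\bt(\Gamma)\der\bt(\varphi)$.

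To establish the claim, note that $w$ automatically takes values in $V_{12}$, so it suffices to check that it respects the interpretation of every $\xi\in\Xi$. Since $\Xi=\Xi_1\cup\Xi_2$ and the two cases are symmetric, I would treat $\xi\in\Xi_1^{(k)}$ together with formulas $\psi_1,\dots,\psi_k$. Here $\bt(\xi)=\bt_1(\xi)\in L_{\Sigma_1}(\{p_1,\dots,p_k\})$, so by the defining clause of the strict product on a $\Xi_1$-connective the membership $w(\xi(\psi_1,\dots,\psi_k))\in\xi_\star(w(\psi_1),\dots,w(\psi_k))$ reduces to a condition on \emph{first coordinates} alone, namely $\pi_1(w(\xi(\psi_1,\dots,\psi_k)))\in\xi_{\Mt_1^{\bt_1}}(\pi_1(w(\psi_1)),\dots,\pi_1(w(\psi_k)))$, where $\xi_{\Mt_1^{\bt_1}}(a_1,\dots,a_k)$ is precisely the value of $\bt_1(\xi)$ in $\Mt_1$ under $p_i\mapsto a_i$.

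The remaining ingredient is an auxiliary fact, to be proved by a routine induction on the structure of a pure $\Sigma_1$-formula $\chi\in L_{\Sigma_1}(\{p_1,\dots,p_k\})$: for arbitrary $\Sigma$-formulas $\theta_1,\dots,\theta_k$ and any $v\in\Val_P(\Mt_1\star\Mt_2)$ one has $\pi_1(v(\chi(\theta_1,\dots,\theta_k)))=\chi_{\Mt_1}(\pi_1(v(\theta_1)),\dots,\pi_1(v(\theta_k)))$, where $\chi_{\Mt_1}$ is the operation $\chi$ computes in $\Mt_1$. The base case $\chi=p_j$ is immediate; in the step $\chi=\conn(\chi_1,\dots,\chi_m)$ with $\conn\in\Sigma_1$, the strict-product clause forces the first coordinate of $v(\chi(\theta_1,\dots,\theta_k))$ to lie in $\conn_{\Mt_1}$ applied to the first coordinates of the $v(\chi_i(\theta_1,\dots,\theta_k))$, and \emph{determinism of the logical matrix $\Mt_1$} collapses this to an equality, after which the induction hypothesis closes the step. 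Applying this with $\chi=\bt_1(\xi)$ and $\theta_i=\bt(\psi_i)$ yields exactly the first-coordinate condition of the previous paragraph (even with equality), which finishes the claim and hence the proposition.

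I expect the auxiliary induction to be the only genuinely load-bearing point, and its subtlety is instructive: it is precisely the determinism of the component matrices $\Mt_1,\Mt_2$ --- as opposed to arbitrary \nmatrix{}es --- that makes the first coordinate of the value of a pure $\Sigma_1$-formula a \emph{function} of the first coordinates of its arguments, which is what allows $v\circ\bt$ to meet the (non-deterministic) interpretation of the $\Xi_1$-connectives in $\Mt_1^{\bt_1}\star\Mt_2^{\bt_2}$. I do not expect the reverse implication $\bt(\Gamma)\der\bt(\varphi)\Rightarrow\Gamma\der^\bt\varphi$ to follow by the same move, since the valuations of the form $v\circ\bt$ need not exhaust $\Val_P(\Mt_1^{\bt_1}\star\Mt_2^{\bt_2})$; that asymmetry is presumably the reason the result is stated as a one-directional ``half lift''.
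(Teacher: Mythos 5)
Your proof is correct, and it reaches the proposition by establishing exactly the same key fact as the paper, namely the inclusion $\{v\circ\bt : v\in\Val_P(\Mt_1\star\Mt_2)\}\subseteq\Val_P(\Mt_1^{\bt_1}\star\Mt_2^{\bt_2})$, followed by the same two-line soundness deduction; what differs, genuinely, is how that inclusion is verified. The paper decomposes $v$ as $\pi_1(v)\star\pi_2(v)$, extends each $\bt_i$ to a translation $\bt_i^+$ on the languages with monolith variables by setting $\bt_i^+(x_\varphi):=\skel_{\Sigma_i}(\bt(\varphi))$, checks that $\pi_1(v)\circ\bt_1^+$ and $\pi_2(v)\circ\bt_2^+$ are compatible valuations over $\Mt_1^{\bt_1}$ and $\Mt_2^{\bt_2}$, and then concludes $v\circ\bt=(\pi_1(v)\circ\bt_1^+)\star(\pi_2(v)\circ\bt_2^+)$ via the merging machinery of Lemma~\ref{mergingvals}. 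You bypass the skeleton/compatibility apparatus altogether and check the valuation condition directly, through the structural induction showing $\pi_1(v(\chi(\theta_1,\dots,\theta_k)))=\chi_{\Mt_1}(\pi_1(v(\theta_1)),\dots,\pi_1(v(\theta_k)))$ for pure $\Sigma_1$-formulas $\chi$. Your route is more elementary and self-contained, and its chief merit is that it makes explicit exactly where the determinism of the component matrices $\Mt_1,\Mt_2$ is load-bearing --- a point the paper buries inside its ``it is routine to check'' and only flags much later, in a footnote of the concluding section, where it observes that $\Val_P(\Mt^\bt)=\{v\circ\bt:v\in\Val_P(\Mt)\}$ can fail when $\Mt$ is a proper \nmatrix{}; your appeal to determinism is legitimate here, since the standing hypotheses of the subsection fix $\Mt_1,\Mt_2$ as (saturated) logical matrices, and the fallback replacement by $\Mt_i^\omega$ preserves determinism. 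What the paper's route buys in exchange is uniformity: the decomposition through $\bt_i^+$ and compatibility reuses the $\skel$/$\star$ infrastructure that the subsequent results (notably Prop.~\ref{liftspecial}, where the converse inclusion of valuation sets is constructed piecewise) are built on, whereas your induction is a one-off argument tailored to this direction. Your closing remark on the asymmetry is also on target and matches the paper's own counterexample after the proposition: valuations of the form $v\circ\bt$ need not exhaust $\Val_P(\Mt_1^{\bt_1}\star\Mt_2^{\bt_2})$, which is precisely why only the ``half lift'' holds in general.
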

\begin{proof}
The result follows from the fact that $\{v\circ\bt:v\in\Val_P(\Mt_1\star\Mt_2)\}\subseteq\Val_P(\Mt_1^{\bt_1}\star\Mt_2^{\bt_2})$. To see this, note that if $v\in\Val_P(\Mt_1\star\Mt_2)$ then $v\circ\bt=(\pi_1(v)\star\pi_2(v))\circ\bt=(\pi_1(v)\circ\bt_1^+)\star(\pi_2(v)\circ\bt_2^+)$ where, for $i\in\{1,2\}$, we are considering $\bt_i^+:L_{\Xi_i}(P\cup X_{\Xi_i})\longrightarrow L_{\Sigma_i}(P\cup X_{\Sigma_i})$ as an extension of $\bt_i:L_{\Xi_i}(P)\longrightarrow L_{\Sigma_i}(P)$ defined as follows: $\bt_i^+(p):=p$ for $p\in P$, $\bt_i^+(\xi(\psi_1,\ldots,\psi_k)):=\bt_i(\xi)(\bt_i^+(\psi_1),\dots,\linebreak\bt_i^+(\psi_k))$ for $\xi\in\Xi^{(k)}$, and $\bt_i^+(x_\varphi):=\skel_{\Sigma_i}(\bt(\varphi))$ for $x_\varphi\in X_{\Xi_i}$. Because $\pi_1(v)$ and $\pi_2(v)$ are compatible, it is routine to check that $(\pi_1(v)\circ\bt_1^+)\in\Val_{P\cup X_{\Xi_1}}(\Mt_1^{\bt_1})$ and $(\pi_2(v)\circ\bt_2^+)\in\Val_{P\cup X_{\Xi_2}}(\Mt_2^{\bt_2})$ are also compatible, and therefore $v\circ\bt\in\Val_P(\Mt_1^{\bt_1}\star\Mt_2^{\bt_2})$.
\end{proof}

%Note that the above proposition implies that if  $\Mt_1=\tuple{V,D,\cdot_1}$ and $\Mt_1=\tuple{V,D,\cdot_2}$ be matrices over disjoint signatures
%and $\Mt$ the $\Sigma_1\cup\Sigma_2$-matrix

\noindent
Note %here, in contrast, 
that the converse of the above statement is in general not true, and we can have $\bt(\Gamma) \der \bt(\varphi)$ while $\Gamma \not\der^\bt \varphi$. When this happens it must be because $\{v\circ\bt:v\in\Val_P(\Mt_1\star\Mt_2)\}\subsetneq\Val_P(\Mt_1^{\bt_1}\star\Mt_2^{\bt_2})$. 
%Such an example can 
{ %\color{red}
Let $\conn$ be a binary Boolean connective, $\bot_1$ and $\bot_2$ be two $0$-place bottom-like connectives and consider $\cB_\conn\bullet\cB_{\bot_1\bot_2}$. %(recall $\cB_{\bot_1\bot_2}$ from Ex.~\ref{3+botS}).
Now let $\bt_1:=\mathsf{id}_\conn$, $\bt_2(\bot_1)=\bt_2(\bot_2):=\bot$, and $\bt:=\bt_1\cup \bt_2$.
Clearly $\TWO_{\bot_1\bot_2}=\TWO_\bot^{\bt_2}$ is saturated. 
Every valuation~$v$ over $\TWO^\omega_\conn\star\TWO_\bot$ is such that 
$v(\bt(\bot_1))=v(\bt(\bot_2))$, but it is not the case that $v'(\bot_1)=v'(\bot_2)$ 
for valuations $v'$ over $\TWO^\omega_\conn\star\TWO_{\bot_1\bot_2}$.
Hence, $\{v\circ\bt:v\in\Val_P(\TWO^\omega_\conn\star\TWO_\bot)\}\subsetneq\Val_P((\TWO^\omega_\conn)^{\bt_1}\star\TWO_\bot^{\bt_2})$.}  

At any rate, one may still secure the converse of the previous proposition under certain particular circumstances:

\begin{proposition}\label{liftspecial}
The following assumptions give sufficient conditions for concluding that 
$\Gamma \der^\bt \varphi$ if and only if $\bt(\Gamma) \der \bt(\varphi)$, for every $\Gamma\cup\{\varphi\}\subseteq L_{\Xi}(P)$:
\begin{itemize}
\item[$\mathbf{(a)}$] $\bt$ is injective, \underline{or}
\item[$\mathbf{(b)}$] $\top$ is the only connective in $\Sigma_2$, $\Mt_1$ is unitary and $\Mt_2=\TWO_\top$, \underline{or}
\item[$\mathbf{(c)}$] $\bot$ is the only connective in $\Xi_2=\Sigma_2$,  $\bt_2=\mathbf{id}_{\Sigma_2}$, %{\color{red} $\Mt_1$ is co-unitary} 
and $\Mt_2=\TWO_\bot$.
\end{itemize}
\end{proposition}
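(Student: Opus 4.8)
The plan is to prove each of the three cases separately, in each case establishing the reverse inclusion $\Val_P(\Mt_1^{\bt_1}\star\Mt_2^{\bt_2})\subseteq\{v\circ\bt:v\in\Val_P(\Mt_1\star\Mt_2)\}$, since Proposition~\ref{halflift} already gives us one direction (namely that $\Gamma\der^\bt\varphi$ implies $\bt(\Gamma)\der\bt(\varphi)$, via the forward inclusion of valuation-images). Once we show that every valuation over $\Mt_1^{\bt_1}\star\Mt_2^{\bt_2}$ factors as $v\circ\bt$ for some valuation $v$ over $\Mt_1\star\Mt_2$, the two images coincide, and the biconditional $\Gamma\der^\bt\varphi\iff\bt(\Gamma)\der\bt(\varphi)$ follows immediately from the semantic characterizations of $\der^\bt$ and $\der$. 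So the whole task reduces to a valuation-lifting argument in each case.

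For case $\mathbf{(a)}$, where $\bt$ is injective, the key observation is that an injective homophonic translation between disjoint-signature algebras admits a one-sided inverse on its image, and the formulas of $L_\Xi(P)$ are mapped injectively into $L_\Sigma(P)$. First I would note that injectivity of $\bt$ means each $\bt_i$ assigns distinct derived connectives to distinct connectives of $\Xi_i$; this lets us recover, from any valuation $w\in\Val_P(\Mt_i^{\bt_i})$, a compatible valuation on $\Mt_i$ over the extended variable set, essentially by reading off the values that $w$ forces on the translated skeletons. Then I would use Lemma~\ref{mergingvals} to glue the two components into a valuation $v$ over $\Mt_1\star\Mt_2$ with $v\circ\bt=w$, checking the compatibility condition carries over because $w$ was already a valuation over the strict product $\Mt_1^{\bt_1}\star\Mt_2^{\bt_2}$.

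For cases $\mathbf{(b)}$ and $\mathbf{(c)}$ the translations need not be injective, so the lifting must exploit the very specific structure of the second component. In case $\mathbf{(b)}$, since $\Sigma_2=\{\top\}$ and $\Mt_2=\TWO_\top$ is the unitary top matrix, the second coordinate of any truth-value in $\Mt_1\star\Mt_2$ is forced (because $\Mt_1$ is unitary, $D_1$ is a singleton, so the compatibility in the strict product pins down the second coordinate from the first); the upshot is that a valuation over the product is essentially determined by its first projection, and the only role of $\bt_2$ is to send all connectives of $\Xi_2$ to $\top$-built formulas, which are always designated. I would verify that any $w\in\Val_P(\Mt_1^{\bt_1}\star\Mt_2^{\bt_2})$ can be matched by a $v$ whose first projection reproduces the $\Mt_1^{\bt_1}$-behaviour and whose second projection is the forced top-valuation. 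Case $\mathbf{(c)}$ is dual: $\bot$ is the unique connective, $\bt_2=\mathbf{id}$, and $\Mt_2=\TWO_\bot$ with $\bot$ interpreted as the undesignated value; here $\bt$ restricted to the second component is already the identity, so the failure of injectivity can only come from the first component, and the same reasoning as in $\mathbf{(a)}$ applies to $\bt_1$ while the $\bot$-component lifts trivially.

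The main obstacle, I expect, lies in case $\mathbf{(a)}$: making precise how injectivity of $\bt$ lets one invert the translation at the level of skeletons and valuations. The subtlety is that $\bt$ acts on \emph{connectives} by assigning (possibly compound) derived connectives, so $\bt$ being injective as a map on formulas is not automatic from injectivity as a map on connectives, and one must be careful that distinct $\Xi$-formulas really do receive distinct $\Sigma$-translations so that the preimage valuation is well-defined on the relevant skeleton variables. Once this bookkeeping is in place — essentially showing that the extended translations $\bt_i^+$ from the proof of Proposition~\ref{halflift} are surjective onto the needed valuation space under the case hypotheses — the compatibility and the gluing via Lemma~\ref{mergingvals} should go through routinely, and cases $\mathbf{(b)}$ and $\mathbf{(c)}$ reduce to direct inspection of the degenerate second component.
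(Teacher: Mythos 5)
Your global strategy coincides with the paper's: reduce everything to the inclusion $\Val_P(\Mt_1^{\bt_1}\star\Mt_2^{\bt_2})\subseteq\{v\circ\bt:v\in\Val_P(\Mt_1\star\Mt_2)\}$, lift a given $v'$ component-wise to valuations $v_1,v_2$ over the extended variable sets, and glue with Lemma~\ref{mergingvals}, with Prop.~\ref{halflift} supplying the other inclusion. Your case $\mathbf{(a)}$ matches the paper's construction, and you correctly isolate its one subtlety: it is injectivity of the \emph{extended} translation on formulas that makes the assignment $v_i(x_{\bt(\psi)}):=\pi_i(v')(x_\psi)$ well defined, and the remaining skeleton variables $x_\varphi$ with $\varphi\notin\bt(L_\Xi(P))$ are filled in by compatible choices.

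Cases $\mathbf{(b)}$ and $\mathbf{(c)}$, however, contain genuine gaps. In $\mathbf{(b)}$ you attribute the forcing of the second coordinate to the unitarity of $\Mt_1$; in fact that coordinate is pinned down because $\TWO_\top$ has singleton sets of designated and undesignated values. The actual role of unitarity is different and is the crux of the case: since $\bt_2$ may send many distinct $\Xi$-formulas to the single formula $\top$, and $X_{\Sigma_1}=\{x_\top\}$, the lifted valuation has only \emph{one} value $v_1(x_\top)$ available to match $\pi_1(v')(x_\psi)$ simultaneously for every $\psi$ with $\bt(\psi)=\top$; all these values are designated, and $|D_1|=1$ is exactly what forces them to coincide. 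Your sketch (``first projection reproduces the $\Mt_1^{\bt_1}$-behaviour'') glosses over this collapsing problem, and your claim that $\bt_2$ sends all connectives of $\Xi_2$ to always-designated $\top$-built formulas is also inaccurate, since $\bt_2(\xi)$ may be a projection $p_i$. In $\mathbf{(c)}$ the assertion that ``the same reasoning as in $\mathbf{(a)}$ applies to $\bt_1$'' is wrong: $\bt_1$ is \emph{not} assumed injective there, so the $\mathbf{(a)}$-style reading-off is unavailable. What actually saves the case is that $X_{\Sigma_1}=\{x_\bot\}$ and $\bt_2=\mathbf{id}_{\Sigma_2}$ pin $v_1(x_\bot):=\pi_1(v')(x_\bot)$ directly, while distinct $\Xi$-formulas identified by $\bt$ automatically receive equal values under $v'$ because $\Mt_1$ is a deterministic matrix (so $\Mt_1^{\bt_1}$ interprets equally-translated connectives identically) and $\TWO_\bot$ rigidly determines the second coordinate; without these points the final verification $v'=(v_1\star v_2)\circ\bt$ does not go through as you describe.
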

\begin{proof}
In each case, we prove that $\Val_P(\Mt_1^{\bt_1}\star\Mt_2^{\bt_2})\subseteq\{v\circ\bt:v\in\Val_P(\Mt_1\star\Mt_2)\}$. Let $v'\in\Val_P(\Mt_1^{\bt_1}\star\Mt_2^{\bt_2})$.
\begin{itemize}
\item[$\mathbf{(a)}$] For $i,j\in\{1,2\}$, $i\neq j$, consider valuations $v_i\in\Val_{P\cup X_{\Sigma_i}}(\Mt_i)$ defined, by mutual recursion, as follows: $v_i(p):=\pi_i(v')(p)$, $v_i(x_{\bt(\psi)}):=\pi_i(v')(x_\psi)$, and $v_i(x_\varphi)$ is chosen compatibly with $v_j(\skel_{\Sigma_j}(\varphi))$ for $\varphi\notin\bt(L_\Xi(P))$. Note that the injectivity of $\bt$ is essential to guarantee that $v_i(x_{\bt(\psi)})=\pi_i(v')(x_\psi)$ is well defined.  
\item[$\mathbf{(b)}$] Note that $X_{\Sigma_1}=\{x_\top\}$. Consider a valuation $v_1\in\Val_{P\cup X_{\Sigma_1}}(\Mt_1)$ defined by setting $v_1(p):=\pi_1(v')(p)$, and $v_1(x_\top)$ being assigned a designated value in the only possible way, and a valuation $v_2\in\Val_{P\cup X_{\Sigma_2}}(\TWO_\top)$ defined by setting $v_2(p):=\pi_2(v')(p)$, and let the value of $v_2(x_\varphi)$ be chosen compatibly with $v_1(\skel_{\Sigma_1}(\varphi))$. Note that the unitariness of $\Mt_1$ is fundamental to the construction of $v_1$, whereas the fact that $\Mt_2=\TWO_\top$ makes compatible choices unique when constructing~$v_2$.
\item[$\mathbf{(c)}$] Note that $X_{\Sigma_1}=\{x_\bot\}$. Consider the valuation $v_1\in\Val_{P\cup X_{\Sigma_1}}(\Mt_1)$ defined by setting $v_1(p):=\pi_1(v')(p)$ and $v_1(x_\bot):=\pi_1(v')(x_\bot)$, and the valuation $v_2\in\Val_{P\cup X_{\Sigma_2}}(\TWO_\bot)$ defined by setting $v_2(p):=\pi_2(v')(p)$ and by letting the value of $v_2(x_\varphi)$ be chosen compatibly with 
$v_1(\skel_{\Sigma_1}(\varphi))$. Note again that $\Mt_2=\TWO_\bot$ makes compatible choices unique when constructing~$v_2$.
\end{itemize}
In each case, it is routine to check that~$v_1$ and~$v_2$ defined in this manner are compatible and that $v'=(v_1\star v_2)\circ\bt$.
This implies that if $\bt(\Gamma) \der \bt(\varphi)$ then $\Gamma \der^\bt \varphi$. The result then follows from Prop.~\ref{halflift}.
\end{proof}

Under the applicability conditions of the previous proposition, or in general whenever $\Gamma \der^\bt \varphi$ if and only if $\bt(\Gamma) \der \bt(\varphi)$, we have the following interesting consequences:

\begin{proposition}\label{lift}
Assume that $\Gamma \der^\bt \varphi$ if and only if $\bt(\Gamma) \der \bt(\varphi)$, for every $\Gamma\cup\{\varphi\}\subseteq L_{\Xi}(P)$. Then, the following properties hold:
\begin{itemize}
\item[--] if $\fib{\cL_{{\Mt}_1}}{\cL_{{\Mt}_2}}=\cL_\Mt$ for some matrix~$\Mt$ over~$\Sigma$ then  
$\fib{\cL_{{\Mt}_1^{\bt_1}}}{\cL_{{\Mt}^{\bt_2}_2}}=\cL_{\Mt^\bt}$, \underline{and}
\item[--] if $\fib{\cL_{{\Mt}_1}}{\cL_{{\Mt}_2}}$ is $k$-determined then so is $\fib{\cL_{{\Mt}_1^{\bt_1}}}{\cL_{{\Mt}^{\bt_2}_2}}$ for $k\in\nats$.
\end{itemize}
\end{proposition}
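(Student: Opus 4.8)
The plan is to reduce every property of $\der^\bt$ to the corresponding property of $\der$ through the assumed equivalence ``$\Gamma \der^\bt \varphi$ if and only if $\bt(\Gamma) \der \bt(\varphi)$'', supplementing it in each item with a single structural fact about the translation.

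For the first item I would suppose $\fib{\cL_{\Mt_1}}{\cL_{\Mt_2}} = \cL_\Mt$, so that $\der$ coincides with $\der_\Mt$; here $\Mt$ is a (deterministic) logical matrix, hence $\Mt^\bt$ is a well-defined matrix over~$\Xi$ and, as recalled in Subsection~\ref{sec:useful}, $\Val_P(\Mt^\bt) = \{v \circ \bt : v \in \Val_P(\Mt)\}$. I would then chain equivalences for arbitrary $\Gamma \cup \{\varphi\} \subseteq L_\Xi(P)$: by definition $\Gamma \der_{\Mt^\bt} \varphi$ holds exactly when every $w \in \Val_P(\Mt^\bt)$ satisfying $\Gamma$ also satisfies $\varphi$; writing $w = v \circ \bt$ and using $(v \circ \bt)(\psi) = v(\bt(\psi))$, this says precisely that every $v \in \Val_P(\Mt)$ satisfying $\bt(\Gamma)$ satisfies $\bt(\varphi)$, i.e.\ $\bt(\Gamma) \der_\Mt \bt(\varphi)$, which under the standing hypothesis $\der = \der_\Mt$ is $\bt(\Gamma) \der \bt(\varphi)$. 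By the assumed equivalence the latter holds if and only if $\Gamma \der^\bt \varphi$. Thus $\der^\bt = \der_{\Mt^\bt}$, that is, $\fib{\cL_{\Mt_1^{\bt_1}}}{\cL_{\Mt_2^{\bt_2}}} = \cL_{\Mt^\bt}$.

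For the second item the decisive ingredient is that $\bt$ commutes with renamings of variables: for every substitution $\sigma \colon P \longrightarrow P$ whose image consists of variables, and every $\psi \in L_\Xi(P)$, one has $\bt(\psi^\sigma) = \bt(\psi)^\sigma$. I would prove this by induction on $\psi$; the base case is $\bt(\sigma(p)) = \sigma(p) = \bt(p)^\sigma$ (well typed precisely because $\sigma(p) \in P$), and the inductive step for $\psi = \xi(\psi_1,\dots,\psi_n)$ combines the inductive hypothesis with the functoriality of substitution, since $\bt(\xi)$ contains only the placeholder variables $p_1,\dots,p_n$ and hence $(\bt(\xi)(\bt(\psi_1),\dots,\bt(\psi_n)))^\sigma = \bt(\xi)(\bt(\psi_1)^\sigma,\dots,\bt(\psi_n)^\sigma)$. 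Granting this, I would assume $\fib{\cL_{\Mt_1}}{\cL_{\Mt_2}}$ is $k$-determined and take $\Gamma \cup \{\varphi\} \subseteq L_\Xi(P)$ with $\Gamma \not\der^\bt \varphi$. The assumed equivalence gives $\bt(\Gamma) \not\der \bt(\varphi)$, so $k$-determinedness of $\der$ yields a substitution $\sigma \colon P \longrightarrow \{p_1,\dots,p_k\}$ with $\bt(\Gamma)^\sigma \not\der \bt(\varphi)^\sigma$. Since $\sigma$ is a variable renaming, the commutation fact rewrites this as $\bt(\Gamma^\sigma) \not\der \bt(\varphi^\sigma)$, whence, by the assumed equivalence once more, $\Gamma^\sigma \not\der^\bt \varphi^\sigma$. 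As $\sigma$ maps into $\{p_1,\dots,p_k\}$, this exhibits the witness required for $k$-determinedness of $\fib{\cL_{\Mt_1^{\bt_1}}}{\cL_{\Mt_2^{\bt_2}}}$.

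The hard part is the second item, and within it the commutation lemma $\bt(\psi^\sigma) = \bt(\psi)^\sigma$. The delicate point is that the placeholder variables $p_1,\dots,p_n$ appearing in the derived connectives $\bt(\xi)$ carry the same names as the target variables $\{p_1,\dots,p_k\}$ of $\sigma$, so one must check that they do not interfere. They do not, because in $\bt(\xi)(\bt(\psi_1),\dots,\bt(\psi_n))$ the placeholders are entirely consumed and the identity collapses to the associativity of composition of substitutions; crucially, the fact that $\sigma$ is variable-to-variable (guaranteed by the very definition of $k$-determinedness) is what makes $\bt(\psi^\sigma)$ well typed in the first place. This is the only place where the argument could quietly fail, so it is where I would be most careful.
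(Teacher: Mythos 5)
Your proof is correct and takes essentially the same route as the paper's: for the first item, the chain of equivalences $\Gamma\der_{\Mt^\bt}\varphi \Leftrightarrow \bt(\Gamma)\der_\Mt\bt(\varphi) \Leftrightarrow \bt(\Gamma)\der\bt(\varphi) \Leftrightarrow \Gamma\der^\bt\varphi$, resting on $\Val_P(\Mt^\bt)=\{v\circ\bt:v\in\Val_P(\Mt)\}$ (which, as you rightly stress, holds because $\Mt$ is a deterministic matrix), and for the second item exactly the paper's transfer of $k$-determinedness through the commutation $\bt(\psi^\sigma)=\bt(\psi)^\sigma$ for variable-renaming substitutions $\sigma$. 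Your explicit induction merely spells out what the paper dispatches in one line (``as $\sigma$ only swaps variables, and $\bt$ is the identity over variables, they commute''), and your care about the placeholder variables in $\bt(\xi)$ is a legitimate elaboration rather than a divergence.
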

\begin{proof}
{
For the first property, note that  $\Val(\Mt^\bt)=\{v\circ\bt:v\in\Val(\Mt)\}$ by definition,
and therefore $\Val_P(\Mt_1^{\bt_1}\star\Mt_2^{\bt_2})\subseteq\{v\circ\bt:v\in\Val_P(\Mt_1\star\Mt_2)\}$ which, as in Prop.~\ref{liftspecial}, implies that 
$\fib{\cL_{{\Mt}_1^{\bt_1}}}{\cL_{{\Mt}^{\bt_2}_2}}=\cL_{\Mt^{\bt}}$.
% it is enough to show that soundness and completeness of $\fib{\cL_{{\Mt}_1^{\Xi_1}}}{\cL_{{\Mt}^{\Xi_2}_2}}$ wrt $\cL_{\Mt^{\Xi_1\cup \Xi_2}}$ follow from 
%$\Val(\Mt^{\Xi_1\cup \Xi_2})=\{v\circ\bt:v\in\Val(\Mt)\}$ and Proposition~\ref{liftspecial}.

For the second, we show that $k$-determinedness is preserved by $\bt$. % from Proposition~\ref{lifttrad}.
 Indeed, from $\Gamma\not\der^\bt \varphi$ we obtain  $\bt(\Gamma)\not\der\bt(\varphi)$.
Assuming that $\fib{\cL_{{\Mt}_1}}{\cL_{{\Mt}_2}}$ is $k$-determined, we obtain that there is $\sigma:P\longrightarrow \{p_1,\ldots,p_k\}$ such that
$\bt(\Gamma)^\sigma\not\der\bt(\varphi)^\sigma$.
As $\sigma$ only swaps variables, and $\bt$ is the identity over variables, 
we conclude that they commute, i.e., $\bt(\psi)^\sigma=\bt(\psi^\sigma)$ for every $\psi\in L_{\Xi}(P)$.
Therefore,
$\bt(\Gamma^\sigma)\not\der\bt(\varphi^\sigma)$, and so
$\Gamma^\sigma\not\der^\bt \varphi^\sigma$. Thus $\fib{\cL_{{\Mt}_1^{\bt_1}}}{\cL_{{\Mt}^{\bt_2}_2}}$ is also $k$-determined.
}
\end{proof}
\section{Fibring disjoint fragments of classical logic}
\label{sec:merging}

\noindent 
In this section we shall establish the general results about combining Boolean connectives and, in general, fragments of classical logic. 

\subsection{Adding top-like connectives}
We start with the simplest cases where merging two fragments yields the corresponding joint fragment of classical logic, namely, when all the connectives from one of the given fragments are top-like.

\begin{proposition} %{\emph{`Proposition 2.'} - With top like.}
\label{topisnice}
If the signatures $\Sigma_1$ and $\Sigma_2$ are disjoint and $\cC_2^{\Sigma_2}\subseteq\cC_2^\top$ then  
$\fib{\cB_{\Sigma_1}\!}{\cB_{\Sigma_2}}=\cB_{\Sigma_1\cup\Sigma_2}$.
\end{proposition}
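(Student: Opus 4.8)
The plan is to establish the two inclusions separately, writing $\der_{12}$ for the consequence relation of $\fib{\cB_{\Sigma_1}}{\cB_{\Sigma_2}}$. The inclusion $\der_{12}\subseteq\der_{\cB_{\Sigma_1\cup\Sigma_2}}$ holds for \emph{any} disjoint $\Sigma_1,\Sigma_2$ and needs no hypothesis: the restriction of a bivaluation over $\TWO_{\Sigma_1\cup\Sigma_2}$ to $\Sigma_i$-formulas is a bivaluation over $\TWO_{\Sigma_i}$, so $\cB_{\Sigma_1\cup\Sigma_2}$ extends both $\cB_{\Sigma_1}$ and $\cB_{\Sigma_2}$; since the fibring is by definition the \emph{smallest} common extension, $\der_{12}\subseteq\der_{\cB_{\Sigma_1\cup\Sigma_2}}$. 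All the content is in the converse, and this is exactly where I would use the hypothesis: $\cC_2^{\Sigma_2}\subseteq\cC_2^\top$ says that the clone generated by $\Sigma_2$ lies inside $\mathcal{UP}_1=\cC_2^\top$, the clone of top-like connectives (cf.~Rem.~\ref{rem:clones}), so each $\conn\in\Sigma_2$ is top-like and hence $\cB_\conn=\bbtop_\conn$ by Rem.~\ref{rem-significant}.

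For the converse I would reduce to adding top-like connectives one at a time and then invoke Prop.~\ref{withtop}(b). Since every $\conn\in\Sigma_2$ is top-like, $\cB_{\Sigma_2}$ coincides with $\bbtop_{\Sigma_2}$ --- the logic axiomatized by the schemata $\conn(p_1,\dots,p_k)$, equivalently the fibring of the logics $\bbtop_\conn$ for $\conn\in\Sigma_2$. By finitarity it suffices to treat finite $\Sigma_2$, and then, using commutativity and associativity of fibring, to add its connectives successively. Beginning with $\cB_{\Sigma_1}=\cL_{\TWO_{\Sigma_1}}$ and applying Prop.~\ref{withtop}(b) at each step, $\fib{\cB_{\Sigma_1}}{\cB_{\Sigma_2}}$ is characterized by the matrix obtained from $\TWO_{\Sigma_1}$ by interpreting each added $\conn$ as the constant function with value $D=\{1\}$. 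As $\conn$ is top-like, this constant interpretation is exactly its Boolean table $\widetilde{\conn}$, so the matrix produced is literally $\TWO_{\Sigma_1\cup\Sigma_2}$; therefore $\fib{\cB_{\Sigma_1}}{\cB_{\Sigma_2}}=\cL_{\TWO_{\Sigma_1\cup\Sigma_2}}=\cB_{\Sigma_1\cup\Sigma_2}$. The passage from finite to arbitrary $\Sigma_2$ is routine: any $\Gamma\der_{\cB_{\Sigma_1\cup\Sigma_2}}\varphi$ involves only finitely many $\Sigma_2$-connectives, over which classical logic is conservative, so it already holds over a finite sub-signature and thus in the corresponding smaller fibring, whose consequence relation is contained in $\der_{12}$.

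The step that carries the weight is Prop.~\ref{withtop}(b) itself, namely that extending a matrix by a constantly-designated connective produces no new interaction; everything around it is bookkeeping about (dis)assembling fibrings. It is worth stressing \emph{why} top-likeness, and not the weaker saturation of $\TWO_{\Sigma_2}$ (cf.~Prop.~\ref{satveysig}), is what makes this go through. Semantically (Rem.~\ref{sem-fib}) the fibring is characterized by $\TWO_{\Sigma_1}^\omega\star\TWO_{\Sigma_2}$, and a countermodel $v=v_1\star v_2$ to $\Gamma\der_{12}\varphi$ has first projection $v_1$ equal to an $\omega$-tuple of bivaluations over $\TWO_{\Sigma_1}$. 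To recover a genuine classical bivaluation one selects a single coordinate of $v_1$ that still refutes $\varphi$; this is coherent precisely because every compound $\Sigma_2$-formula is constantly designated, which through the compatibility condition of Lem.~\ref{mergingvals} forces the coordinate values of all $\Sigma_2$-headed ``monolith'' subformulas to be designated, so the chosen coordinate extends to a classical valuation of the whole mixed language. For a merely saturated but non-top-like $\Sigma_2$ this forcing fails, as witnessed by the non-distributive fibring of the logics of conjunction and disjunction noted in the Introduction.
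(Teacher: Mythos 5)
Your proof is correct, and while it leans on the same pivotal fact as the paper --- Prop.~\ref{withtop}$\mathbf{(b)}$, that extending a matrix by a constantly-designated connective is innocuous --- it deploys it along a genuinely different route. You decompose $\cB_{\Sigma_2}$ as the fibring of the logics $\bbtop_\conn$ (a sound identification: merging the axiom schemata $\frac{}{\conn(p_1,\dots,p_k)}$ axiomatizes exactly the $2$-valued logic in which every compound formula is designated) and then absorb the connectives of $\Sigma_2$ into $\TWO_{\Sigma_1}$ one at a time, via commutativity and associativity of fibring, noting that each extension by a constantly-$D$ connective is again a classical $2$-valued matrix because $D=\{1\}$ is a singleton. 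The paper instead applies Prop.~\ref{withtop}$\mathbf{(b)}$ exactly once, to a single fresh $0$-place $\top$, and transfers the conclusion to all of $\Sigma_2$ in one stroke through the translation machinery: taking $\bt$ to send every $\Sigma_2$-connective to $\top$, the unitariness of $\TWO_{\Sigma_1}$ together with $\Mt_2=\TWO_\top$ places the situation under Prop.~\ref{liftspecial}$\mathbf{(b)}$, and Prop.~\ref{lift} then gives that the fibring is characterized by $\TWO_{\Sigma_1\cup\{\top\}}^{\mathbf{id}_{\Sigma_1}\cup\,\bt}=\TWO_{\Sigma_1\cup\Sigma_2}$. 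What each buys: the paper's argument handles arbitrary, possibly infinite, $\Sigma_2$ uniformly, whereas your ``routine'' finite-to-infinite passage is where your proof incurs its one real debt --- for infinite $\Gamma$ the claim that $\Gamma\der_{\cB_{\Sigma_1\cup\Sigma_2}}\varphi$ involves only finitely many $\Sigma_2$-connectives is precisely compactness of classical consequence (true, but it should be named), after which conservativity of signature extension and monotonicity of fibring in each argument do work as you sketch. Conversely, your route is more elementary and self-contained, dispensing entirely with Props.~\ref{halflift}, \ref{liftspecial} and~\ref{lift}; the paper presumably prefers the lift machinery because it is reused elsewhere (e.g., in Prop.~\ref{propequiv} and Prop.~\ref{badbottom}). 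One shared caveat: from $\cC_2^{\Sigma_2}\subseteq\cC_2^\top$ you infer that every $\conn\in\Sigma_2$ is top-like, though that clone also contains the projections; since the paper's own proof makes the same tacit identification (its translation sends everything to $\top$), this is not a gap relative to the paper. Finally, your closing paragraph about extracting a single coordinate of $v_1$ from a countermodel over $\TWO_{\Sigma_1}^\omega\star\TWO_{\Sigma_2}$ is good motivation for why top-likeness (and not mere saturation) drives the result, but it is not load-bearing in your argument and could be dropped without loss.
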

\begin{proof}
{Consider a $0$-place top-like connective $\top$ such that $\top\notin\Sigma_1$ (if $\top \in \Sigma_1$ we pick a syntactically different copy).}
We obtain that 
  $\fib{\cB_{\Sigma_1}\!}{\cB_{\top}}=\cB_{\Sigma_1\cup\{\top\}}$, as a simple corollary of Prop.~\ref{withtop}. Just note that because $\cB_{\Sigma_1}$ is characterized by $\TWO_{\Sigma_1}$, and $\cB_{\top}=\bbtop_{\top}$ and $\Mt^{\bbtop}_{\top}=\TWO_{\top}$, we have that $\fib{\cB_{\Sigma_1}\!}{\cB_{\top}}$ is characterized by $\TWO_{\Sigma_1}\star\TWO_\top$.
Further, it is immediate to see that $\TWO_{\Sigma_1}\star\TWO_\top$ is isomorphic to $\TWO_{\Sigma_1\cup\{\top\}}$.
{Since $\cB_{\Sigma_1}$ is characterized by the unitary matrix $\TWO_{\Sigma_1}$ and $\cB_{\top}=\cL_{\TWO_\top}$, 
we are under the applicability conditions of Prop.~\ref{liftspecial}$\mathbf{(b)}$.
{Let $\bt$ be the translation that sends every connective of $\Sigma_2$ to $\top$.}
Hence, as  $\fib{\cB_{\Sigma_1}\!}{\cB_{\top}}$ 
is characterized by the matrix $\TWO_{\Sigma_1\cup\{\top\}}$, we conclude by Prop.~\ref{lift}
%is characterized by the $2$-valued matrix $\TWO_{\Sigma_1\cup\{\top\}}$, and 
 %given a translation 
% $\bt:\Sigma_2\longrightarrow L_\top(P)$, we are in the conditions of Propositions~\ref{liftspecial}$\mathbb{(b)}$ and~\ref{lift}
%{\color{red} PROBLEM with direct application
%\footnote{\color{red} the only  (reasonable)  way of using lifting seems to to change to the definition of Jo\~ao, and if we really need injectivity we cannot not allow different copies neither!!
%%we should use a smarter translation into $\lambda x.\bot(x)$ to cover every case with this tactic, e.g.
%%for $\psi\in L_{\bot^k}(P)$ compound
% %$\bt(\psi)=\bot^{n_{\top^k(\vec{\psi})}}(p)$, where $n_\psi$ enumerates every formula...not structural... or use an infinitary $\bot$? Probably there are more elegant solutions!}
% }
% should it be using the fact that fibring is associative and that the logics of only top likes is given by the set of rules stating each is an axiom, hence we can do it one by one? seems much more reasonable then other solutions i have in mind...
% }
%in the Appendix.} Hence, 
that
$\cB_{\Sigma_1}\bullet\cB_{\Sigma_2}$ is characterized by $\TWO_{\Sigma_1\cup\{\top\}}^{\mathbf{id}_{\Sigma_{\!1}}\!\cup\bt}=\TWO_{\Sigma_1\cup\Sigma_2}$, and therefore $\fib{\cB_{\Sigma_1}\!}{\cB_{\Sigma_2}}=\cB_{\Sigma_1\cup\Sigma_2}$.}
\end{proof}

%\subsection{Mixing not very significant}

\begin{example}[Coimplication and top]\label{coimptop} 
%{\color{Purple}PURPLE}
\em
Consider adding classical coimplication $\dcoimp{}{}$ to top $\top$, that is, fibring the logics~$\cB_{\dcoimpscript{}{}}$ and~$\cB_\top$. 
%We now consider merging classical implication~$\imp{}{}$ with classical coimplication~$\dcoimp{}{}$, that is, fibring the logics $\cB_{\imp{}{}}$ and $\cB_{\dcoimp{}{}}$.
Recall from Subsection~\ref{classical}  the semantics and axiomatization of~$\cB_\top$. 
Coimplication is characterized by the $2$-valued matrix $2_{\dcoimpscript{}{}}$ where:
 \begin{center}
  \begin{tabular}{c | c c}
    %\hline
    $\widetilde{\smash{\dcoimp{}{}}}$ & $0$  & $1$    \\ 
    \hline
    $0$ & $0$ & $1$ \\ %\hline
     $1$  &$0$ & $0$  %\hline
   % $-1$ & $0$   \\
 %   \hline
  \end{tabular}\end{center}
\noindent
We shall not explicitly provide here a Hilbert calculus for~$\cB_{\dcoimpscript{}{}}$. The methods in~\cite{Rautenberg1981} would allow one to obtain such a calculus, but the general procedure is tedious and we leave it to the interested reader. 
%We already knew that $\TWO_{\imp{}{}}$ is not saturated. It is also straightforward to see that
We note that $\TWO_{\dcoimpscript{}{}}$ is not saturated: note for instance that 
$p\not\der_{\dcoimpscript{}{}} \dcoimp{p}{q}$ and $p\not\der_{\dcoimpscript{}{}} q$, but no bivaluation can set, at the same time, $v(p)=1$ and $v(\dcoimp{p}{q})=v(q)=0$. 
%{\color{red} Recall} the details of the latter from Ex.~\ref{impcoimp} and Subsection~\ref{classical}. 
%We have seen that 
%$\TWO_{\dcoimp{}{}}$ is not saturated but $\TWO_\top$ is. 
However, in this case, we can rely on Prop.~\ref{withtop}, or more generally on Prop.~\ref{topisnice}, to conclude that 
$\cB_{\dcoimpscript{}{}}\bullet\cB_\top=\cB_{\dcoimpscript{}{} \top}$ is characterized by the matrix 
$\TWO_{\dcoimpscript{}{}}\star \TWO_\top=\TWO_{\dcoimpscript{}{} \top}$. This is, of course, a very special case, also because $\{\dcoimp{}{},\top\}$ forms a functionally complete set of classical connectives (in fact, it is functionally complete \textit{in a stronger sense}, as it also allows for the standard definition of the $0$-place Boolean operations --- see Section~3.14 of~\cite{Humb:Conn}).\hfill$\triangle$
\end{example}

\subsection{When none of the connectives is very significant}

Another case where fibring yields the corresponding classical fragment comes about when all the connectives involved fail to be very significant.

\begin{proposition} %{\emph{`Proposition 3.'} - Mixing not very significant.}
\label{notverydifficult}
If the signatures $\Sigma_1$ and $\Sigma_2$ are disjoint and $\cC_2^{\Sigma_1},\cC_2^{\Sigma_2}\subseteq\cC_{2}^{\land\top\bot}$ then  
$\fib{\cB_{\Sigma_1}\!}{\cB_{\Sigma_2}}=\cB_{\Sigma_1\cup\Sigma_2}$.
\end{proposition}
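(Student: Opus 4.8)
The plan is to reduce everything to the two component matrices being saturated, so that the disjoint fibring can be computed as an honest strict product of two-valued matrices, and then to check that this strict product is nothing but the Boolean matrix over the combined signature.

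First I would unwind the hypothesis. By Remark~\ref{rem:clones}, the containment $\cC_2^{\Sigma_i}\subseteq\cC_2^{\land\top\bot}$ says exactly that $\cC_2^{\Sigma_i}$ contains no very significant connective, for each $i\in\{1,2\}$. Proposition~\ref{satveysig} then converts this into the semantic statement that both $\TWO_{\Sigma_1}$ and $\TWO_{\Sigma_2}$ are saturated. This is the whole point of the hypothesis: it is precisely the condition that lets us bypass $\omega$-powers, which in general would destroy two-valuedness of the fibring.

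With saturation in hand, Remark~\ref{sem-fib} applies and yields $\fib{\cB_{\Sigma_1}}{\cB_{\Sigma_2}}=\cL_{\TWO_{\Sigma_1}\star\TWO_{\Sigma_2}}$, using $\TWO_{\Sigma_i}$ directly rather than $\TWO_{\Sigma_i}^\omega$. It then remains to identify this strict product with $\TWO_{\Sigma_1\cup\Sigma_2}$. Since each component has values $\{0,1\}$, designated set $\{1\}$ and undesignated set $\{0\}$, the definition of strict product gives carrier $V_{12}=(\{1\}\times\{1\})\cup(\{0\}\times\{0\})=\{(1,1),(0,0)\}$ with $D_{12}=\{(1,1)\}$, so the diagonal map $\phi\colon(c,c)\mapsto c$ is a designation-preserving bijection onto $\{0,1\}$. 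Because each $\TWO_{\Sigma_i}$ is deterministic, for a $k$-place $\conn\in\Sigma_1$ the strict-product clause collapses to the singleton $\conn_\star((c_1,c_1),\dots,(c_k,c_k))=\{(\widetilde{\conn}(c_1,\dots,c_k),\widetilde{\conn}(c_1,\dots,c_k))\}$, which $\phi$ carries to $\widetilde{\conn}(c_1,\dots,c_k)$, the interpretation of $\conn$ in $\TWO_{\Sigma_1\cup\Sigma_2}$; the case $\conn\in\Sigma_2$ is symmetric (reading off the second coordinate), and the $0$-place case is subsumed by taking $k=0$. Hence $\phi$ is an isomorphism of Nmatrices and $\fib{\cB_{\Sigma_1}}{\cB_{\Sigma_2}}=\cL_{\TWO_{\Sigma_1\cup\Sigma_2}}=\cB_{\Sigma_1\cup\Sigma_2}$.

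I do not expect any genuine obstacle here: the computation showing that the strict product of two unitary two-valued matrices collapses along the diagonal is routine. The only step demanding care is the opening equivalence, namely recognizing that the clone-theoretic hypothesis is exactly the absence of very significant connectives and therefore, through Proposition~\ref{satveysig}, exactly saturation of both components; once that is secured, the remainder is bookkeeping.
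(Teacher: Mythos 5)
Your proposal is correct and follows essentially the same route as the paper's own proof: Prop.~\ref{satveysig} converts the clone hypothesis into saturation of $\TWO_{\Sigma_1}$ and $\TWO_{\Sigma_2}$, Rem.~\ref{sem-fib} then characterizes $\fib{\cB_{\Sigma_1}}{\cB_{\Sigma_2}}$ by $\TWO_{\Sigma_1}\star\TWO_{\Sigma_2}$, and this strict product is identified with $\TWO_{\Sigma_1\cup\Sigma_2}$. The only difference is that you spell out the diagonal isomorphism $\{(0,0),(1,1)\}\cong\{0,1\}$ explicitly, which the paper dismisses with ``just observe''; your computation is accurate.
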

\begin{proof}
We know from Prop.~\ref{satveysig} that $\TWO_{\Sigma_1\!}$ and $\TWO_{\Sigma_2\!}$ are saturated, since the connectives are not very significant. Hence, it follows from the results mentioned in Rem.~\ref{sem-fib} that $\fib{\cB_{\Sigma_1\!}}{\cB_{\Sigma_2}}$ is characterized by ${\TWO_{\Sigma_1\!}\star \TWO_{\Sigma_2\!}}$. To conclude, just observe that $\TWO_{\Sigma_1}\star\TWO_{\Sigma_2}$ is isomorphic to $\TWO_{\Sigma_1\cup\Sigma_2}$.
\end{proof}

{%
In particular, this implies that if we merge the axiomatizations of two projec\-tion-conjunctions with the same arity we obtain a logic in which these connectives {collapse}.

\begin{example}[Two copies of conjunction]\label{ex:conj}
\em
We will now consider two syntactically distinct copies, say~$\e$ and~$\&$, of conjunction, that is, we will combine through fibring two copies of the conjunction-only fragment of classical logic, $\cB_\e$ and $\cB_\&$. Semantically, they are characterized by the matrices~$\TWO_\e$ and~$\TWO_\&$ with $\widetilde{\&}:=\widetilde{\e}$ defined as in Subsection~\ref{classical}. A Hilbert calculus for $\cB_\&$ is a simple translated copy of the one provided for~$\cB_\&$ in {[$\cB_\land$]}, \textit{mutatis mutandis}.

%It is not hard to see that the
By Prop.~\ref{satveysig}, both 
 $2$-valued matrices are saturated since conjunctions are not very significant. Indeed, $\Gamma\der_{\cB_\e}\varphi$ precisely when $\var(\varphi)\subseteq\var(\Gamma)$, a non-trivial theory $\Gamma^{\der_{\cB_\e}}$ is characterized by the bivaluation $v$ such that $v(p)=1$ if $p\in\var(\Gamma)$, and $v(p)=0$ if $p\notin\var(\Gamma)$.

In view of Rem.~\ref{rem-significant} and the results mentioned in Rem.~\ref{sem-fib}, or more generally in view of Prop.~\ref{notverydifficult}, it is clear that $\cB_\e\bullet\cB_\&$ is characterized by $\TWO_\e\star \TWO_\&$, which is again $2$-valued and where (up to isomorphism) $\widetilde{\conn}_\star=\widetilde{\conn}$ for $\conn\in\{\e,\&\}$. Clearly, this means that the two conjunctions collapse and that $\cB_\e\bullet\cB_\&=\cB_{\e\&}$.
Consequently, a complete calculus for $\cB_{\e\&}$ is obtained by just merging the calculi for the components.
%
%\begin{center}
%\scalebox{1.2}{
%\begin{tabular}{cccccccc}
%% \hline
%% &$\cL_{\ea}$&$\bullet$& $\cL_{\oub}$&\\[.4mm]
%% \hline
%$\frac{\;p\,\e\,q\;}{\;\; p\;\;}$&$\frac{\;p\,\e\,q\;}{\;\; q\;\;}$&$\frac{\;\;p \;\;\;\;  q\;\;}{p\,\e\,q}$&$\frac{\;p\,\&\,q\;}{\;\; p\;\;}$&$\frac{\;p\,\&\,q\;}{\;\; q\;\;}$&$\frac{\;\;p \;\;\;\;  q\;\;}{p\,\&\,q}$%\\[.4mm]
%%&&&&\\ 
%%[.6mm]
% %\hline
%\end{tabular}
%}
%\end{center}
\hfill$\triangle$
\end{example} 

}

%Recall that a logic~$\cL$ is \emph{locally tabular} if its relation of logical equivalence $\EqDiv{\cL}{}{}$ partitions the language $L_\Sigma(\{p_1,\ldots,p_k\})$, freely generated by the signature $\Sigma$ over a finite set of sentential variables, into a finite number of equivalence classes. 
%% under the relation $\EqDiv{\cL}{}{}$ of logical equivalence, every partition of the language freely generated by a given signature over a finite set of sentential variables contains only a finite number of equivalence classes.  
%It is clear that all $2$-valued sublogics of classical logic are locally tabular.
%On the same line, it should be equally clear that any logic that fails to be locally tabular 
%%must indeed fail to be genuinely $\kappa$-valued, for finite~$\kappa$; thus, if any such logic respects the postulate (\textbf{Can}), then it must 
%cannot be characterized by a %$D$-matrix 
%logical matrix with a finite set of truth-values [cite finval].

\subsection{Non-finitely-valued combinations}

We now start to establish the negative cases, that is, to identify the situations when the fibring of classical connectives results in a logic that is subclassical. 

\begin{proposition} %{\emph{`Proposition 5'}}
\label{prop:failingmaya}
The fibring $\fib{\cB_{\conn_1\!}}{\cB_{\conn_2}}$ of the logic of a very significant Boolean connective~$\conn_1$ and 
the logic of a 
%non-top-like $k$-place Boolean connective~$\conn_2$ with $k>0$
non-top-like Boolean connective~$\conn_2$ distinct from $\bot$
% non-top-like Boolean connective~$\conn_2\neq \bot$ 
fails to be locally tabular, and therefore $\fib{\cB_{\conn_1\!}}{\cB_{\conn_2}}\subsetneq \cB_{\conn_1 \conn_2}$. 
% Let $\cB_{\conn_1}=\tuple{L_{\Sigma_1}(P),\der_{\conn_1}}$ be a two-valued logic with a single very significant $k_1$-place connective~$\conn_1$ and let $\cB_{\conn_2}=\tuple{L_{\Sigma_2}(P),\der_{\conn_2}}$ be a two-valued logic with a single non-top-like $k_2$-place connective~$\conn_2$, with $k_2>0$.
% Then $\fib{\cB_{\conn_1}}{\cB_{\conn_2}}$ is not locally tabular.
\end{proposition}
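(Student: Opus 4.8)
The plan is to reduce the whole statement to a failure of local tabularity. Classical logic extends both components (the matrix $\TWO_{\conn_1\conn_2}$ restricts to $\TWO_{\conn_1}$ and to $\TWO_{\conn_2}$), so by minimality of fibring $\fib{\cB_{\conn_1}}{\cB_{\conn_2}}\subseteq\cB_{\conn_1\conn_2}$ holds unconditionally. Since $\cB_{\conn_1\conn_2}$ is locally tabular, it then suffices to prove that $\fib{\cB_{\conn_1}}{\cB_{\conn_2}}$ is \emph{not} locally tabular: the two logics cannot then coincide, which forces the inclusion to be strict. To refute local tabularity I will exhibit, over a \emph{fixed finite} set of variables, an infinite family of formulas that are pairwise non-equivalent.

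First I set up the semantic playground. By Rem.~\ref{sem-fib} the fibring is characterized by $\Mt:=\TWO_{\conn_1}^\omega\star\TWO_{\conn_2}^\omega$; by Lemma~\ref{mergingvals} a valuation over $\Mt$ is a compatible pair $(v_1,v_2)$ of $\omega$-indexed families of bivaluations, a formula being designated exactly when \emph{every} coordinate is designated, the two components being coupled \emph{only} at this ``all-designated'' level. Now I harvest monoliths: because $\conn_2$ is non-top-like and distinct from $\bot$ it has arity $\geq 1$, so Lemma~\ref{lem:nontop} supplies a $1$-place $\conn_2$-derived connective $\theta$ whose nestings $\theta^{0}(p),\theta^{1}(p),\dots$ are pairwise distinct and none top-like. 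For $j\geq 1$ each $\theta^{j}(p)$ is $\conn_2$-headed, hence is seen by the $\conn_1$-component as a \emph{distinct} monolith variable $y_j:=x_{\theta^{j}(p)}$ (with $y_0=p$). The key point is that, choosing $v_2(p)$ suitably in each of the three possibilities for $\theta$ (affirmation, negation, bottom-like), one arranges $v_2(\theta^{j}(p))\neq\bar 1$ for all $j$ simultaneously; the compatibility coupling then merely forbids $y_j=\bar 1$, leaving each $y_j$ free to take \emph{any} non-all-designated $\omega$-tuple independently of the others.

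Next I combine these free monoliths using $\conn_1$, splitting on monotonicity. If $\conn_1$ is non-monotone, a standard restriction yields negation as a derived connective (fix all but one essential argument to constants, realized coordinatewise by dedicating fresh filler variables set to $\bar 0$ or $\bar 1$); then $\varphi_n:=\neg$-context$[\theta^{n}(p)]$ is designated iff $y_n=\bar 0$, and the pair $\varphi_n,\varphi_m$ is separated by setting $y_n=\bar 0$ and $y_m=(1,0,0,\dots)$. If $\conn_1$ is monotone, then being very significant it is neither bottom-like nor a projection-conjunction, hence $\cC_2^{\{\conn_1\}}\not\subseteq\cC_2^{\land\top\bot}$; by Post's analysis (Rem.~\ref{rem:clones}, the monotone members of list \texttt{[L0]}) $\conn_1$ derives $\lor$, and I take $\varphi_n:=\theta^{0}(p)\lor\theta^{1}(p)\lor\cdots\lor\theta^{n}(p)$. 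To separate $\varphi_n$ from $\varphi_m$ with $m<n$ I designate $\varphi_n$ by covering all coordinates while opening a single ``gap'' coordinate $i^\ast$ at which all $y_j$ with $j\leq m$ vanish but $y_{m+1}=1$: concretely $y_0=\bar 1$ off $i^\ast$ and $0$ at $i^\ast$, $y_{m+1}=1$ only at $i^\ast$, and $y_j=\bar 0$ otherwise, which keeps every $y_j\neq\bar 1$, makes $\varphi_n$ designated and $\varphi_m$ undesignated. In both cases all $\varphi_n$ live in $p$ together with a fixed finite pool of fillers, so infinitely many pairwise non-equivalent formulas occur over finitely many variables.

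The main obstacle is the bookkeeping of the last paragraph made uniform: one must check in each case that the chosen coordinate patterns assemble into a bona fide compatible pair $(v_1,v_2)$ over $\Mt$ (fillers and the $\theta$-tower respecting the all-designated coupling in all three shapes of $\theta$), and one must justify the Post-theoretic dichotomy cleanly, i.e.\ that a very significant $\conn_1$ always derives $\neg$ (non-monotone case) or $\lor$ (monotone case) using only constants placeable in filler slots. Once these derivations and the coordinate constructions are verified, the failure of local tabularity — and hence the strict inclusion $\fib{\cB_{\conn_1}}{\cB_{\conn_2}}\subsetneq\cB_{\conn_1\conn_2}$ — follows.
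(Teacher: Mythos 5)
Your plan is essentially correct but follows a genuinely different route from the paper's proof. The shared core is the same: Lemma~\ref{lem:nontop} supplies the tower $\theta^0(p),\theta^1(p),\dots$, a single valuation over $\TWO^\omega_{\conn_2}$ makes all its members undesignated at once, and via Lemma~\ref{mergingvals} the $\conn_1$-component sees the formulas $\theta^j(p)$ as fresh monoliths constrained only to avoid the all-designated value --- exactly your observation that the coupling acts only at the designated level. Where you diverge is in how $\conn_1$ exploits these monoliths. The paper does no case analysis and invokes no clone theory at this point: it recycles from Prop.~\ref{satveysig} the defining non-derivabilities of a very significant connective (in particular $\conn_1(p_1,\dots,p_j,x_1,\dots,x_s)\not\der\conn_1(p_1,\dots,p_j,y_1,\dots,y_s)$, where $j<k$ counts the projective slots and $s=k-j$), feeds pairwise disjoint blocks of the $\theta$-tower into the $s$ non-projective positions, $\varphi_t:=\conn_1(p_1,\dots,p_j,\theta^{1+ts}(p),\dots,\theta^{(t+1)s}(p))$, and separates $\varphi_{t_1}$ from $\varphi_{t_2}$ by the very valuation witnessing that non-derivability. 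Your monotone/non-monotone dichotomy buys more explicit separating families ($\neg$- and $\lor$-shaped towers) but at a real cost: as stated mid-proposal (``$\conn_1$ derives $\lor$''), the monotone claim is false --- $T^3_2$ is monotone and very significant, yet $\cC_2^{T^3_2}$ is the clone of self-dual monotone functions and contains no $\lor$, so the monotone members of \texttt{[L0]} do not in general derive $\lor$ as plain clone elements. Only your fallback formulation, with constants simulated by dedicated filler variables, is viable: non-monotonicity yields $\neg$ by fixing a monotonicity-violating edge, and for monotone very significant $\conn_1$ the clone generated together with both constants is monotone and, by the short list of Boolean clones containing both constants, must be $\mathcal{M}$ or $\cC_2^{\lor\top\bot}$ (very-significance excludes $\cC_2^{\land\top\bot}$ and the trivial clone), both of which contain $\lor$. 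With that lemma made explicit, and the compatibility bookkeeping done as in the paper (extending $v_2$ freely on the $\conn_1$-headed monoliths $x_\varphi$, as you anticipate), your construction closes; the paper's version is shorter precisely because it never needs to express $\neg$ or $\lor$ at all, only to exploit what $\conn_1$ fails to derive.
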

\begin{proof}
In order to show that $\fib{\cB_{\conn_1\!}}{\cB_{\conn_2}}$ is not locally tabular, we shall build an infinite collection $\{\varphi_t\}_{t\in\nats}$ of formulas in $L_{\conn_1\conn_2}(P)$, using only finitely many distinct sentential variables, and then show them to be pairwise non-equivalent.

Let us first focus on $\conn_1$. Recall that $\cB_{\conn_1\!}$ is characterized by the saturated matrix $\TWO^\omega_{\conn_1}$.
Let~$\conn_1$ be a $k$-place very significant connective with $j< k$ projective indices. We assume without loss of generality that the projective indices of~$\conn_1$ correspond to its first~$j$ arguments. Let $s=k-j$. As in the proof of Prop.~\ref{satveysig}, we have: 
\begin{itemize}\setlength\itemsep{0pt}
\item[(a)] $\conn_1(p_1,\dots,p_j,x_1,\dots,x_s)\der_1 p_i$ for $1\leq i\leq j$;
\item[(b)] $\conn_1(p_1,\dots,p_j,x_1,\dots,x_s)\not\der_1 x_i$ for $1\leq i\leq s$;
\item[(c)] $\conn_1(p_1,\dots,p_j,x_1,\dots,x_s)\not\der_1\conn_1(p_1,\dots,p_j,y_1,\dots,y_s)$;
\item[(d)] $\conn_1(p_1,\dots,p_j,x_1,\dots,x_s)\not\der_1 y_i$ for $1\leq i\leq s$.
\end{itemize}
From (a)--(d), taking into consideration the theory $\{\conn_1(p_1,\dots,p_j,x_1,\dots,x_s)\}^{\der_1}$, we may conclude that there is a valuation $v_1$ over $\TWO^\omega_{\conn_1}$ such that 
  $v_1(\conn_1(p_1,\dots,p_j, 
  \linebreak
  x_1,\dots,x_s))=v_1(p_i)=\nats$ for 
$1\leq i\leq j$, $v_1(x_i)\neq\nats$ and $v_1(y_i)\neq\nats$ for $1\leq i\leq s$, and $v_1(\conn_1(p_1,\dots,p_j,y_1,\dots,y_s))\neq\nats$.

Next, on what concerns $\conn_2$, recall from Rem.~\ref{rem-significant} that a non-top-like Boolean connective distinct from~$\bot$ cannot be $0$-place. Hence, according to Lemma~\ref{lem:nontop}, we can fix a non-top-like compound 1-place $\theta\in L_{\conn_2}(\{p\})$. Further, we know from the latter lemma that:
\begin{itemize}
\item[(e)] $\not\der_2\theta^n(p)$ for every $n\in\nats$.
\end{itemize}
As $\cB_{\conn_2\!}$ is characterized by the saturated matrix $\TWO^\omega_{\conn_2}$, from (e), considering the theory $\varnothing^{\der_2}$ we conclude that there exists a valuation $v_2$ over $\TWO^\omega_{\conn_2}$ such that $v_2(\theta^n(p))\neq\nats$ for every $n\in\nats$.

Let us finally consider the following formulas on $j+1$ sentential variables: 
$$\varphi_t:=\conn_1(p_1,\dots,p_j,\theta^{1+ts}(p),\dots,\theta^{(t+1)s}(p)),\text{ for $t\in\nats$}$$ 
In these formulas, we sequentially deploy $s$ distinct nestings of $\theta$ on the sentential variable $p$, in the positions corresponding to non-projective components of $\conn_1$.

Take $t_1\neq t_2$. We will show that $\varphi_{t_1}{\EqDiv{}{}{}}\varphi_{t_2}$ fails to hold, taking advantage of the completeness of the saturated \nmatrix{} 
${\TWO^\omega_{\conn_1}}{\star}{\TWO^\omega_{\conn_2}}$ for $\fib{\cB_{\conn_1\!}}{\cB_{\conn_2}}$. For that purpose, consider 
$\Gamma:=\{\varphi_{t_1},\varphi_{t_2}\}\cup\{\theta^{i+t_1s},\theta^{i+t_2s}:1\leq i\leq s\}$, and let $x_1,\dots,x_s,y_1,\dots,y_s$ be the sentential variables in $X_{\Sigma_1}$ such that $x_i:=x_{\theta^{i+t_1s}(p)}=\skel_{\Sigma_1}(\theta^{i+t_1s}(p))$ and  $y_i:=x_{\theta^{i+t_2s}(p)}=\skel_{\Sigma_1}(\theta^{i+t_2s}(p))$ for $1\leq i\leq s$. 

As the mapping $v_1$ is not defined for $p$ nor for the special sentential variables~$x_{\psi}$, for $\psi\in\sub(\{\theta^{i+t_1s},\theta^{i+t_2s}:1\leq i\leq s\})
\setminus\{p\}$, but these variables also do not occur in $\skel_{\Sigma_1}(\Gamma)$, we can extend~$v_1$ to a 
$\skel_{\Sigma_1}(\sub(\Gamma))$-partial valuation $v'_1$ such that $v'_1(p)$ and each $v'_1(x_\psi)$ are assigned designated values, respectively, if and only if~$v_2(p)$ and~$v_2(\psi)$ are assigned designated values.

Similarly, $v_2$ is not defined for $p_1,\dots,p_j$ nor for $x_{\varphi_{t_1}}, x_{\varphi_{t_2}}$, and these variables do not occur in $\skel_{\Sigma_2}(\Gamma)$, so we can extend~$v_2$ to a $\skel_{\Sigma_2}(\sub(\Gamma))$-partial valuation $v'_2$ such that $v'_2(p_i)$, for each $1\leq i\leq j$, and $v'_2(x_{\varphi_{t_1}}),v'_2(x_{\varphi_{t_2}})$ are chosen to be compatible, respectively, with $v_1(p_i)$ and $v_1(\varphi_{t_1}),v_1(\varphi_{t_2})$.

It is clear that $v'_1$ and $v'_2$ satisfy the compatibility requirement of Lemma~\ref{mergingvals},
and therefore the $\Gamma$-partial valuation $v'_1\star v'_2$ over ${\TWO^\omega_{\conn_1}}{\star}{\TWO^\omega_{\conn_2}}$ does the job.
As it is clear that $\fib{\cB_{\conn_1\!}}{\cB_{\conn_2}}\subseteq \cB_{\conn_1 \conn_2}$, and also that $\cB_{\conn_1 \conn_2}$ is locally tabular, we conclude that $\fib{\cB_{\conn_1\!}}{\cB_{\conn_2}}\subsetneq \cB_{\conn_1 \conn_2}$.
\end{proof}

%{\color{red} Although stated for the Boolean case one can see that it can be adapted to show that: %HERE?
% Joining two fragments of any finite-valued logic such that one contains a very significant connective and the other a non-$0$-place non-top-like connective we obtain a different logic from the join fragment, as the result is not finitely valued }

%{\color{red}
%For an illustration of the construction in the proof of the above result in the case of $\cB_\e\bullet\cB_\ou$, note that the formulas $\theta^1(p)\ou\theta^2(p)$, $\theta^3(p)\ou\theta^4(p)$, $\theta^5(p)\ou\theta^6(p)$, and so on, where $\theta(p)=p\e p$, are all non-equivalent. We analyse the case  $\cB_\e\bullet\cB_\ou$ in more detail in Example~\ref{ex:conjdisj}.
%
%
%HERE------ not sure of the changes...but it is a bit forced to keep this here far from its main example ----- but maybe not that bad...
%}

{%\color{Purple}

%For an illustration of the construction in the proof of the above result in the case of $\cB_\e\bullet\cB_\ou$, note that the formulas $\theta^1(p)\ou\theta^2(p)$, $\theta^3(p)\ou\theta^4(p)$, $\theta^5(p)\ou\theta^6(p)$, and so on, where $\theta(p)=p\e p$, are all non-equivalent.
%We analyse the case  $\cB_\e\bullet\cB_\ou$ in more detail in Example~\ref{ex:conjdisj}.

%Now, let us think 
We conclude from the above, in contrast to what happens with conjunction (Ex.~\ref{ex:conj}), that when we merge the axiomatizations of two copies of a very significant connective we obtain a logic where these two copies do not collapse.
%()
%negations (resp. disjunctions, implications) we do not obtain a logic with two interchangeable negations (resp. disjunctions, implications).

\begin{example}[Two copies of disjunction]\label{ex:disj}
\em 
This time let us consider two syntactically distinct copies, say $\ou$ and $||$, of disjunction, that is, let us fiber two copies of the disjunction-only fragment of classical logic, $\cB_\ou$ and~$\cB_{||}$. 
For an illustration of the construction in the proof of Prop.~\ref{prop:failingmaya} in the case of $\cB_\ou\bullet\cB_{||}$, note that the formulas $\theta^1(p)\ou\theta^2(p)$, $\theta^3(p)\ou\theta^4(p)$, $\theta^5(p)\ou\theta^6(p)$, $\dots$, where $\theta(p)=p || p$, are all pairwise non-equivalent.

Semantically, the above mentioned logics are characterized by the matrices $\TWO_\ou$ and $\TWO_{||}$ with $\widetilde{||}:=\widetilde{\ou}$ defined as in Subsection~\ref{classical}. A Hilbert calculus for $\cB_{||}$ is a simple translated copy of the one provided in {[$\cB_\ou$]}.
Again, it is easy to see that the $2$-valued matrices are not saturated. For instance, $p\,\ou\,q\not\der_{\cB_\ou} p$ and $p\,\ou\,q\not\der_{\cB_\ou} q$, but no bivaluation can set $v(p\,\ou\,q)=1$ and at the same time $v(p)=v(q)=0$. 

It follows from the results mentioned in Rem.~\ref{sem-fib} that $\cB_\ou\bullet\cB_{||}$ is characterized by the strict product of the saturations $\TWO^\omega_{\ou},\TWO^\omega_{||}$, the non-denumerably large \nmatrix{} defined (up to isomorphism) by  
$\TWO^\omega_{\ou}\star \TWO^\omega_{||}=\tuple{V,\{(\nats,\nats)\},
\cdot_\star}$ 
%\{\widetilde{\ou}_\star,\widetilde{||}_\star\}}$ 
where:

\begin{itemize}
\item[] $V:=\{(X,Y): X,Y\subseteq \nats\mbox{ and }X=\nats \mbox{ iff } Y=\nats \}$
\item[] $(X_1,Y_1)\widetilde{\ou}_\star (X_2,Y_2):=\begin{cases}
\{(\nats,\nats)\}, & \mbox{ if }X_1\cup X_2=\nats\\
\{(X_1\cup X_2, Y):Y\subsetneq \nats\},& \mbox{ if }X_1\cup X_2\neq\nats
		\end{cases}$
\item[] $(X_1,Y_1)\widetilde{||}_\star (X_2,Y_2):=\begin{cases}
\{(\nats,\nats)\}, & \mbox{ if }Y_1\cup Y_2=\nats\\
\{(X,Y_1\cup Y_2):X\subsetneq \nats\},& \mbox{ if }Y_1\cup Y_2\neq\nats
		\end{cases}$
\end{itemize}

\noindent
This is not unexpected, as classical disjunction is a very significant connective, and therefore $\cB_\ou\bullet\cB_{||}$ is known to be non-finitely-valued, as a consequence of Prop.~\ref{prop:failingmaya}. Thus, $\cB_\ou\bullet\cB_{||}$ is strictly weaker than $\cB_{\ou\,||}$, the two disjunctions do not collapse --- for instance, the mixed consequence assertion $p\ou q\der p\,||\,q$ fails to hold in $\cB_\ou\bullet\cB_{||}$. The latter logic cannot even be said to be finitely-\nvalued{}, as we can indeed show that it fails to be $k$-determined for any $k\in\nats$ (recall Subsection~\ref{tabularity}). To see this, consider:
%\,\footnote{As usual $\bigvee\limits_{0\leq i \leq k} \psi_i$ abbreviates $((\psi_0\ou \psi_1)\ldots \ou \psi_k)$.}
\begin{align*}
	\Gamma_k :=\{p_i\ou p_j:1\leq i<j\leq k+1\} \mbox{, and }
	\varphi_k:=\bigvee\limits_{1\leq i \leq k+1} q\,||\,(p_i\ou q).
\end{align*}
  It is clear that for every $\sigma:P\longrightarrow  \{p_1,\ldots,p_k\}$ we have that $\sigma(p_i)=\sigma(p_j)$ for some $1\leq i< j\leq k+1$.
Hence, it is straightforward to conclude in this case that
(i) $\Gamma_k^\sigma\der \sigma(p_i)\ou\sigma(p_j)$, (ii) $\sigma(p_i)\ou\sigma(p_j)\der\sigma(p_i)$, (iii) $\sigma(p_i)\der (p_i\ou q)^\sigma$, and (iv) $(p_i\ou q)^\sigma\der \varphi_k^\sigma$, and from these it immediately follows that $\Gamma_k^\sigma\der\varphi_k^\sigma$.
Now, to show that $\Gamma_k\not\der\varphi_k$, just consider a valuation $v$ on $\TWO^\omega_{\ou}\star \TWO^\omega_{||}$ such that: 
\begin{align*}
v(q)&=(\nats\setminus \{1,\ldots,k+1\},\varnothing),\\
 v(p_i)=v(p_i\ou q)&=(\nats\setminus \{i\},\varnothing), \mbox{ for } 1\leq i\leq k+1,\\
 v(p_i\vee p_j)&=(\nats,\nats), \mbox{ for } 1\leq i<j\leq k+1,\\
v(q\,||\,(p_i\ou q))&=(\varnothing,\varnothing), \mbox{ for } 1\leq i\leq k+1,\\
v(\bigvee\limits_{0\leq i \leq \ell} q\,||\,(p_i\ou q))&=(\varnothing,\varnothing),  \mbox{ for  } \ell\leq k+1.
\end{align*}
  
%  $v(p_i)=(\nats\setminus \{i\},\varnothing)$ and $v(p_i\vee p_j)=(\nats,\nats)$ for $0\leq i<j\leq k$,
%  $v(p_{k+1})=(\nats\setminus \{1,\ldots,k\},\varnothing)$,
%  $v(p_i\ou p_{k+1})=(\nats\setminus \{i\},\varnothing)$,
% % $v(p_{k+1}||(p_i\ou p_{k+1}))=(\varnothing,\varnothing)$,
%  $v(p_{k+1}||(p_i\ou p_{k+1}))=(\varnothing,\varnothing)$ and for each $j\leq k$
%    $$v(\bigvee\limits_{0\leq i \leq j} p_{k+1}||(p_i\ou p_{k+1}))=(\varnothing,\varnothing)$$ 
%  
% Now, $v$ is a partial valuation defined over
%  $\sub(\Gamma_k\cup \varphi_k)$ and is such that
% $v(\Gamma_k)=\{(\nats,\nats)\}$ and
% $v(\varphi_k)=(\varnothing,\varnothing)\neq (\nats,\nats)$. Hence, by Rem.~\ref{partialvals} it can be extended to a full valuation showing that $\Gamma\not\der \varphi$.\smallskip

The merged axiomatization for $\cB_\ou\bullet\cB_{||}$ is built as usual. 
More interestingly, after~\cite{Rautenberg1981}, note that a complete Hilbert calculus for $\cB_{\ou ||}$ may be obtained more simply by adding the following interaction rules to the Hilbert calculus given to~$\ou$ in [$\cB_\ou$]:

\begin{center}
\scalebox{1.2}{
\begin{tabular}{cccccccc}
% \hline
% &$\cL_{\ea}$&$\bullet$& $\cL_{\oub}$&\\[.4mm]
% \hline
&$\frac{p\,\ou\,(q\,\ou\,r)}{p\,\ou\,(q\,||\,r)}$&$\frac{p\,\ou\,(q\,||\,r)}{p\,\ou\,(q\,\ou\,r)}$%\\[5mm]
%$\frac{p}{p\,\ou\,q}$&$\frac{p\,\ou\,p}{p}$&$\frac{p\,\ou\,q}{q\,\ou\,p}$&$\frac{p\,\ou\,(q\,\ou\,r)}{(p\,\ou\,r)\,\ou\,q}$\\[.4mm]
%&&&&\\ 
%[.6mm]
 %\hline
\end{tabular}
}
\end{center}
All the translated rules for the disjunction~$||$ are easily derivable from the latter mentioned rules.\smallskip

Note that what we said about merging two copies of the Boolean disjunction applies \emph{mutatis mutandis} to the case of two copies of the Boolean implication. The reason is that classical implication is known to express classical disjunction, e.g., via a translation {$\bt(\ou)=\lambda p_1p_2.\,\imp{(\imp{p_1}{p_2})}{p_2}$}. 
\hfill$\triangle$
\end{example} 

An equally interesting non-collapsing example is provided by merging the axiomatizations of two copies of classical negation:

\begin{example}[Two copies of negation]\label{twonegs}
\em
We will now combine~$\cB_\neg$ and~$\cB_\sim$ through fibring, where $\neg$ and~$\sim$ are two syntactically distinct copies of classical negation. 
Semantically, they are characterized by the matrices~$\TWO_\neg$ and~$\TWO_\sim$ with $\widetilde{\sim}:=\widetilde{\neg}$ as defined in Subsection~\ref{classical}. A Hilbert calculus for~$\cB_\sim$ is a 
simple translated copy of the one provided in {[$\cB_\neg$]}.

It is easy to see now that the $2$-valued classical matrices are not saturated. For instance, $\not\der_{\cB_\neg} p$ and $\not\der_{\cB_\neg} \neg p$, but no bivaluation can fail to satisfy both non-theorems simultaneously, that is, setting $v(p)=v(\neg p)=0$ is impossible. 

In any case, it follows from the results mentioned in Rem.~\ref{sem-fib} that $\cB_\neg\bullet\cB_\sim$ is characterized by $\TWO^\omega_\neg\star \TWO^\omega_\sim$, a non-denumerably large \nmatrix{}. This is not too bad, as classical negation is a very significant connective, and therefore $\cB_\neg\bullet\cB_\sim$ is not finitely-valued, 
as a consequence of Prop.~\ref{prop:failingmaya}.
Thus, $\cB_\neg\bullet\cB_\sim$ is strictly weaker than $\cB_{\neg\sim}$, and the two negations do not collapse --- for instance, the mixed consequence assertion $\neg p\der{\sim} p$ fails to hold in $\cB_\neg\bullet\cB_\sim$.

A further interesting fact about this particular example is that $\cB_\conn$, for $\conn\in\{\neg,\sim\}$, turns out to have an alternative semantic characterization by way of the $3$-valued deterministic matrix 
$\Mt^3_\conn:=\tuple{\{0,\frac{1}{2},1\},\{1\},\widetilde{\conn}_3}$
where:
 \begin{center}
  \begin{tabular}{c | c c}
    %\hline
    &  $\widetilde{\conn}_3$   \\ 
    \hline
    $0$ & $1$   \\ %\hline
        $\frac{1}{2}$ & $\frac{1}{2}$  \\ %\hline
     $1$  &$0$    %\hline
   % $-1$ & $0$   \\
 %   \hline
  \end{tabular}\end{center}
\noindent 
What is more, this $3$-valued matrix is saturated. Indeed, since $\Gamma\der_{\cB_\neg}\neg^i\varphi$ iff $\Gamma\der_{\cB_\neg}\varphi$ for $i$ even, or if $\Gamma\der_{\cB_\neg}\neg\varphi$ for~$i$ odd, a non-trivial theory $\Gamma^{\der_{\cB_\neg}}$ is precisely characterized by the valuation $v$ such that $v(p)=1$ if $p\in\Gamma^{\der_{\cB_\neg}}$, $v(p)=0$ if $\neg p\in\Gamma^{\der_{\cB_\neg}}$, and $v(p)=\frac{1}{2}$ if $p\notin\Gamma^{\der_{\cB_\neg}}$ and $\neg p\notin\Gamma^{\der_{\cB_\neg}}$.

Now, in view of the facts mentioned in Rem.~\ref{sem-fib}, 
it follows that $\cB_\neg\bullet\cB_\sim$ is also semantically characterized by the $5$-valued \nmatrix{} defined by 
$\Mt^3_\neg\star \Mt^3_\sim=\tuple{\{(0,0),(0,\frac{1}{2}),(\frac{1}{2},0),(\frac{1}{2},\frac{1}{2}),(1,1)\},\{(1,1)\},\widetilde{\cdot}_5}$ where:
%{\color{Green}
% $\Mt^3_\neg\star \Mt^3_\sim=\tuple{\{00,0\frac{1}{2},\frac{1}{2}0,\frac{1}{2}\frac{1}{2},(1,1)\},\{(1,1)\},\widetilde{\neg}^5,\widetilde{\sim}^5}$
% (note that we denote the pairs $(a,b)$ simply by $ab$)
%  where:
% \begin{center}
% % much nicer, common!\\[.2mm]
%  \begin{tabular}{c | c |c}
%    %\hline
%    &  $\widetilde{\neg}^5$  &  $\widetilde{\sim}^5$ \\ 
%    \hline
%    $00$ & $(1,1)$ & $(1,1)$   \\[.2mm]  %\hline
%    $0\frac{1}{2}$ & $(1,1)$ & $0\frac{1}{2},\frac{1}{2}\frac{1}{2}$  \\[.2mm] %\hline
%    $\frac{1}{2}0$ & $\frac{1}{2}0,\frac{1}{2}\frac{1}{2}$ & $(1,1)$  \\[.2mm]  %\hline
%    $\frac{1}{2}\frac{1}{2}$ & $\frac{1}{2}0,\frac{1}{2}\frac{1}{2}$ & $0\frac{1}{2},\frac{1}{2}\frac{1}{2}$  \\[.2mm]  %\hline
%     $1$  &$00,0\frac{1}{2}$  &$00,\frac{1}{2}0$  %\hline
%   % $-1$ & $0$   \\
% %   \hline
%  \end{tabular}\end{center}}
 \begin{center}
  \begin{tabular}{c | c |c}
    %\hline
    &  $\widetilde{\neg}_5$  &  $\widetilde{\sim}_5$ \\ 
    \hline
    $(0,0)$ & $\{(1,1)\}$ & $\{(1,1)\}$   \\[.4mm] %\hline
    $(0,\frac{1}{2})$ & $\{(1,1)\}$ & $\{(0,\frac{1}{2}),(\frac{1}{2},\frac{1}{2})\}$  \\[.4mm]  %\hline
    $(\frac{1}{2},0)$ & $\{(\frac{1}{2},0),(\frac{1}{2},\frac{1}{2})\}$ & $\{(1,1)\}$  \\[.4mm]  %\hline
    $(\frac{1}{2},\frac{1}{2})$ & $\{(\frac{1}{2},0),(\frac{1}{2},\frac{1}{2})\}$ & $\{(0,\frac{1}{2}),(\frac{1}{2},\frac{1}{2})\}$  \\[.4mm]  %\hline
     $(1,1)$  &$\{(0,0),(0,\frac{1}{2})\}$  &$\{(0,0),(\frac{1}{2},0)\}$  %\hline
   % $-1$ & $0$   \\
 %   \hline
  \end{tabular}\end{center}
On the one hand, an axiomatization for $\cB_\neg\bullet\cB_\sim$ is obtained by merging the calculi for the components.
%
%\begin{center}
%\scalebox{1.2}{
%\begin{tabular}{ccccccc}
%% \hline
%% &$\cL_{\ea}$&$\bullet$& $\cL_{\oub}$&\\[.4mm]
%% \hline
%$\frac{\;p\;}{\;\; \neg\neg p\;\;}$&$\frac{\;\neg\neg p\;}{\;\; p\;\;}$&$\frac{\;\;p \;\;\;\;  \neg p\;\;}{q}$&$\frac{\;p\;}{\;\; \sim\sim p\;\;}$&$\frac{\;\sim\sim p\;}{\;\; p\;\;}$&$\frac{\;\;p \;\;\;\;  \sim p\;\;}{q}$\\[.4mm]
%%&&&&\\ 
%%[.6mm]
% %\hline
%\end{tabular}
%}
%\end{center}
%
On the other hand, a complete calculus for $\cB_{\neg\sim}$ may be obtained by adding to the mentioned axiomatization for $\cB_\neg\bullet\cB_\sim$ the following interaction rules:
\begin{center}
\scalebox{1.2}{
\begin{tabular}{cccccccc}
% \hline
% &$\cL_{\ea}$&$\bullet$& $\cL_{\oub}$&\\[.4mm]
% \hline
$\frac{\;\neg p\;}{\;\; \sim p\;\;}$&&&$\frac{\;\sim p\;}{\;\; \neg p\;\;}$\\[.4mm]
%&&&&\\ 
%[.6mm]
 %\hline
\end{tabular}
}
\end{center}
Completeness of the resulting calculus may easily be confirmed with the help of Lemma~\ref{addrules}(a). 
It is indeed straightforward to see that the valuations on $\Mt^3_\neg\star \Mt^3_\sim$ respecting the two above mentioned interaction rules cannot use the values 
%{\color{Green}$0\frac{1}{2}$ and $\frac{1}{2}0$}.
$(0,\frac{1}{2})$ and $(\frac{1}{2},0)$.
 Purging the $5$-valued \nmatrix{} from these values we obtain a (deterministic!)\ \nmatrix{} that is isomorphic to~$\Mt^3_\conn$ on both components. \hfill$\triangle$
\end{example} 
}

{%\color{Purple}
\noindent 
Prop.~\ref{prop:failingmaya} also happens to be informative when we combine distinct connectives:

\begin{example}[Conjunction and disjunction]\label{ex:conjdisj}
\em
We will now add classical conjunction~$\e$ to classical disjunction~$\ou$, that is, we will combine through fibring the logics~$\cB_\e$ and~$\cB_\ou$. Recall from Subsection~\ref{classical} the semantics and axiomatizations of the latter logics. We have seen in Ex.~\ref{ex:conj} and Ex.~\ref{ex:disj} that $\TWO_\e$ is saturated but $\TWO_\ou$ is not. 
From the results mentioned in Rem.~\ref{sem-fib}, it follows that $\cB_\e\bullet\cB_\ou$ is characterized by the strict product of~$\TWO_{\e}$ and~$\TWO^\omega_{\ou}$, the non-denumerable \nmatrix{} defined (up to isomorphism) by  
$\TWO_{\e}\star \TWO^\omega_{\ou}=\tuple{2^\nats,\{\nats\},\widetilde{\cdot}_\star}$ where:
\begin{itemize}
\item[] $X\widetilde{\e}_\star Y:=\begin{cases}
\{\nats\}, & \mbox{ if }X=Y=\nats\\
\{Z:Z\subsetneq \nats\},& \mbox{ if }X\cap Y\neq\nats
		\end{cases}$
\item[] $X\widetilde{\ou}_\star Y:=X\cup Y$
\end{itemize}

\noindent
Given that classical disjunction is a very significant connective, and that classical conjunction is not top-like, as a consequence of Prop.~\ref{prop:failingmaya} we have that the fibred logic is not finitely-valued. We actually also know from~\cite{charac-conserv:JLC} that this logic is not finitely-\nvalued{}. Clearly, $\cB_\e\bullet\cB_\ou$ is subclassical and, for instance, $p\ou (q\e r)\not\der (p\,\ou\,q)\e (p\,\ou\,r)$.

An axiomatization for $\cB_\e\bullet\cB_\ou$ may be obtained as usual. More interestingly, after~\cite{Rautenberg1981}, a complete calculus for $\cB_{\e\ou}$ may be obtained by simply adding three interaction rules to the calculus of disjunction, namely:

\begin{center}
\scalebox{1.2}{
\begin{tabular}{ccccccc}
% \hline
% &$\cL_{\ea}$&$\bullet$& $\cL_{\oub}$&\\[.4mm]
% \hline
$\frac{\;p\ou q\;\;\;p\ou r\;}{\;p\ou (q\e r)\;}$ &$\frac{\;p\ou (q\e r)\;}{\;p\ou q\;}$ &$\frac{\;p\ou (q\e r)\;}{\;p\ou r\;}$%\\[.2mm]
\end{tabular}
}
\end{center}
%\begin{center}
%\scalebox{1.2}{
%\begin{tabular}{cccccccc}
%% \hline
%% &$\cL_{\ea}$&$\bullet$& $\cL_{\oub}$&\\[.4mm]
%% \hline
%$\frac{p}{p\,\ou\,q}$&$\frac{p\,\ou\,p}{p}$&$\frac{p\,\ou\,q}{q\,\ou\,p}$&$\frac{p\,\ou\,(q\,\ou\,r)}{(p\,\ou\,r)\,\ou\,q}$\\[.4mm]
%%&&&&\\ 
%%[.6mm]
% %\hline
%\end{tabular}
%}
%\end{center}
%
All the rules of conjunction are derivable from the latter mentioned rules.\hfill$\triangle$
%{\color{red}SEPARATION!!!!!!!!!!!}\\
%
%% No collapse, not given by a finite set of finite matrices since it fails to be locally tabular also by Proposition 5 of FroCos. 
% %$(p\ou (\e^i p))\not\der (p\ou (\e^j p))$ for $i,j>0$ and $i\neq j$.
%  Not finitely valued (Proposition~\ref{prop:failingmaya}). 
%
% 
%It is also not finite-valued N-matrix by Proposition 5.4. of BAZOOKA2.
%
%
%However, 
%$\cL_\land\bullet \cL_\lor=\cL_{\TWO_\land\star \TWO^\omega_\lor}$ 
%as $\cL_{\TWO_\land}$ is saturated.
%
%Now, $\cL_{\TWO_\land\star \TWO^\omega_\lor}=\tuple{2^\nats,\tilde{\land} \tilde{\lor}, \{\nats\}}$,
%where $\tilde{\lor}=\cup$ and 
%
% $A\tilde{\land} B=\begin{cases}
%\{\nats\} & \mbox{ if }\pi_i(A)\cup \pi_i(B)=\nats\\
%\{C:C\neq \nats\}& \mbox{ otherwise. }
%		\end{cases}$
%		
%	 
%	{\color{Green}
%The 3 interaction rules are in page 1 of \emph{all.pdf}. Note again that the rules of conjunction can be omitted (they are derivable). This is another instance of the same result by Rautenberg that was used for the 2 disjunctions.}
\end{example}

We finish illustrating Prop.\ref{prop:failingmaya} with a combination of two classical connectives that results functionally complete:

\begin{example}[Disjunction and negation]
\label{ex:disjneg}
\em
We now consider adding classical disjunction $\ou$ to classical negation $\neg$, that is, fibring the logics~$\cB_\ou$ and~$\cB_\neg$. 
Recall from Subsection~\ref{classical}  the corresponding semantics and axiomatizations of the latter. We have seen in Ex.~\ref{ex:disj} and Ex.~\ref{twonegs} that neither 
$\TWO_\ou$ nor $\TWO_\neg$ are saturated. However, we can consider the $3$-valued saturated matrix $\Mt^3_\neg$ instead of $\TWO_\neg$. Again, it follows from the results mentioned in Rem.~\ref{sem-fib} that
$\cB_\ou\bullet\cB_\neg$ is characterized by the non-denumerable \nmatrix{} 
$\TWO^\omega_{\ou}\star \Mt^3_\neg=\tuple{V,\{(\nats,1)\},\widetilde{\cdot}_\star}$. 
We leave the details of the verification to the interested reader.
% where:
%\begin{itemize}
%
%\item $V=\{(\nats,1)\}\cup \{(X,0),(X,\frac{1}{2}): X\subsetneq \nats\}$;
%		
%\item $(X,a)\widetilde{\ou}^\star (Y,b)=\begin{cases}
%(\nats,1) & \mbox{ if }X\cup Y=\nats\\
%\{(Z,0),(Z,\frac{1}{2}):Z\subsetneq \nats\}& \mbox{ if }X\cup Y\neq\nats
%		\end{cases};$
%		
%\item 
%$\widetilde{\neg}^\star (X,a)=\begin{cases}
%(\nats,1) & \mbox{ if }a=0\\
%\{(Y,1-a):Y\subsetneq \nats\}& \mbox{ if }a\neq 0
%%\{(Y,\frac{1}{2}):Y\subsetneq \nats\}& \mbox{ if }a=\frac{1}{2}\\
%%\{(Y,0):Y\subsetneq \nats\}& \mbox{ if }a=1
%\end{cases}$;
%
%\end{itemize}
%
%
%
%We leave the details to the reader.
%{\color{red}S: No! let us do and get the completeness of the interaction through it!}
As classical disjunction is very significant and classical negation is not top-like,  Prop.~\ref{prop:failingmaya} implies that the combined logic is not finitely-valued. We have further shown in~\cite{charac-conserv:JLC} that this logic is not finitely-\nvalued{}. Of course, 
$\cB_\ou\bullet\cB_\neg$ is subclassical and, for instance, 
$\not\der p\ou \neg p$.

The merged axiomatization for $\cB_\ou\bullet\cB_\neg$ is obtained as usual. More interestingly, again after~\cite{Rautenberg1981}, a complete calculus for $\cB_{\ou\neg}$ may be obtained by simply adding the following four interaction rules to the calculus of disjunction:
\begin{center}
\scalebox{1.2}{
\begin{tabular}{ccccccc}
% \hline
% &$\cL_{\ea}$&$\bullet$& $\cL_{\oub}$&\\[.4mm]
% \hline
$\frac{}{\;p\ou\neg p\;}$ &$\frac{\;p\ou q\;}{\;p\ou \neg\neg q\;}$&$\frac{\;p\ou \neg\neg q\;}{\;p\ou q\;}$ &$\frac{\;p\ou q\quad p\ou\neg q\;}{\;p\;}$%\\[5mm]
% \hline
% &$\cL_{\ea}$&$\bullet$& $\cL_{\oub}$&\\[.4mm]
% \hline
% $\frac{p}{p\,\ou\,q}$&$\frac{p\,\ou\,p}{p}$&$\frac{p\,\ou\,q}{q\,\ou\,p}$&$\frac{p\,\ou\,(q\,\ou\,r)}{(p\,\ou\,r)\,\ou\,q}$\\[.4mm]
%&&&&\\ 
%[.6mm]
 %\hline
\end{tabular}
}
\end{center}
\noindent
The rules of negation are derivable from these. %{\color{red} completeness?}
%Completeness of the resulting calculus can be checked using Lemma~\ref{addrules}.
%Let $v$ be a valuation over $\TWO^\omega_{\ou}\star \Mt^3_\neg$.
%If $v$ satisfies $\frac{}{\;p\ou\neg p\;}$ must 
%be such that $v(\neg \varphi)=(Y,\frac{1}{2})$ with $\overline{X}\subseteq Y\neq\nats$ whenever $v(\varphi)=(X,\frac{1}{2})$. In particular, we get that $X\neq\varnothing$.{\color{red}dunno, dunno...}
%
%Hence if $a=\frac{1}{2}$ then $X\neq \varnothing$.
%Also, if $a=0$ we must have that $X=\varnothing$...
%{\color{red}how? can't find it and too tired to recover it...tomorrow!
% 
%Note that after purging  $2^\omega_{\ou}\star \Mt^3_\neg$ of the elements forbidden by the interaction axioms:
%$(\varnothing,\frac{1}{2})$, $(X,0)$ for $X\neq \varnothing$, 
%and consider the allowed choices for $\neg$,
%we obtain a matrix that is isomorphic to $\TWO_{\ou\neg}$.
%
%Furthermore, from $\pi_1(v(\neg \varphi))=\nats\setminus \pi_1(v(\varphi))$,
%$\pi_1\circ v$ is a valuation over $2^\omega_{\ou\neg}$ whenever $v$ is a valuation over  $2^\omega_{\ou}\star \Mt^3_\neg$  validating the four interaction axioms. 
%}

The present example has the additional interest that $\{\ou,\neg\}$ forms a functionally complete set of classical connectives, and we obtain thus from the above an axiomatization of full classical logic.\hfill$\triangle$
\end{example}
}

\subsection{Adding the connective $\bot$}

At this point, we are just left with the problem of categorizing combinations involving the $0$-place connective~$\bot$. 
We start by showing that all disjoint fibrings of a fragment of classical logic with $\bot$ are $4$-\nvalued{}:

%\newpage
\begin{proposition}\label{withbot}
If $\bot\notin\Sigma$ then $\cB_\Sigma\bullet \cB_\bot$ is $4$-\nvalued{}. %{\color{red}(we can do in general with the square)}
\end{proposition}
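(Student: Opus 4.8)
The plan is to exhibit a concrete $4$-valued \nmatrix{} characterizing $\cB_\Sigma\bullet\cB_\bot$, relying on Proposition~\ref{withtop}$\mathbf{(c)}$. The obstacle to applying that result directly to $\TWO_\Sigma$ is that $\TWO_\Sigma$ need not be saturated --- precisely when $\cC_2^\Sigma$ contains a very significant connective, by Proposition~\ref{satveysig}. The key observation is that, since $\bot$ is $0$-place, Proposition~\ref{withtop}$\mathbf{(c)}$ only demands $2$-saturation of the underlying matrix, and the $2$-power $\TWO_\Sigma^2$ conveniently supplies both $2$-saturation and exactly four truth-values.

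Concretely, I would first recall that $\cB_\bot=\bbbot_\bot$: indeed, as noted in Subsection~\ref{semantics} and Remark~\ref{rem-significant}, $\bot$ is a $0$-place bottom-like connective, so the $\{\bot\}$-fragment of classical logic coincides with the logic axiomatized by the single rule $\frac{\bot}{p}$. Next, invoking the facts from~\cite{smar:ccal:17} recalled in Subsection~\ref{sec:useful}, the $2$-power $\TWO_\Sigma^2$ is $2$-saturated and satisfies $\cL_{\TWO_\Sigma^2}=\cL_{\TWO_\Sigma}=\cB_\Sigma$; moreover it has exactly $2^2=4$ truth-values, namely $V=\{(0,0),(0,1),(1,0),(1,1)\}$ with $D=\{(1,1)\}$ and $U=V\setminus D=\{(0,0),(0,1),(1,0)\}$.

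With these ingredients in place, I would apply Proposition~\ref{withtop}$\mathbf{(c)}$ to the matrix $\Mt:=\TWO_\Sigma^2$ and the fresh $0$-place connective $\bot\notin\Sigma$ (so that $\Sigma$ and $\{\bot\}$ are disjoint). Since $k=0$, the $2$-saturation of $\TWO_\Sigma^2$ suffices, and the proposition yields that $\cB_\Sigma\bullet\cB_\bot=\cL_{\TWO_\Sigma^2}\bullet\bbbot_\bot$ is characterized by the \nmatrix{} obtained from $\TWO_\Sigma^2$ by interpreting $\bot$ as the (non-deterministic) $0$-place connective whose value ranges over $U=V\setminus D$, the connectives of $\Sigma$ retaining their deterministic component-wise interpretations. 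This characterizing \nmatrix{} has exactly four truth-values, so $\cB_\Sigma\bullet\cB_\bot$ is $4$-\nvalued{}, as claimed.

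The only genuinely delicate point is the choice of the $2$-power rather than $\TWO_\Sigma$ itself: because the bottom rule $\frac{\bot}{p}$ is non-axiomatic, one cannot fall back on Lemma~\ref{addrules}$\mathbf{(b)}$, and some form of saturation is indispensable (cf.\ Lemma~\ref{addrules}$\mathbf{(a)}$). It is precisely the $0$-place nature of $\bot$ that lets us get away with mere $2$-saturation, thereby keeping the value-count down to four; were $\bot$ of higher arity, full saturation of $\TWO_\Sigma$ would be required, which in general is only achieved by the infinite power $\TWO_\Sigma^\omega$, and no such finite bound would be available.
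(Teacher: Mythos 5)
Your proof is correct and follows essentially the same route as the paper's: both pass to the $2$-power $\TWO_\Sigma^2$ to obtain $2$-saturation (which suffices because $\bot$ is $0$-place) and then apply Prop.~\ref{withtop}$\mathbf{(c)}$ to conclude that $\cB_\Sigma\bullet\cB_\bot$ is characterized by the $4$-valued \nmatrix{} $\TWO_\Sigma^2\star\TWO_\bot$. Your additional remarks on why mere $2$-saturation is what keeps the value-count finite simply make explicit what the paper's one-line proof leaves implicit.
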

\begin{proof}
This is a simple corollary of Prop.~\ref{withtop}. Note that $\cB_\Sigma$ is characterized by the matrix $\TWO_\Sigma$, and 
$\cB_\bot=\bbbot_\bot$ is characterized by the matrix $\Mt^{\bbbot}_\bot=\TWO_\bot$. As $\bot$ is a $0$-place connective we need no more than $2$-saturation. Hence, $\cB_\Sigma\bullet \cB_\bot$ is characterized by the $4$-valued \nmatrix{} $\TWO_\Sigma^2\star\TWO_\bot$.
\end{proof}

\begin{example}[Coimplication and bottom]\label{coimpbot}
\em
Recall coimplication $\dcoimp{}{}$ from Sec.~\ref{classical}.
When fibring $\cB_{\dcoimpscript{}{}}$ and $\cB_{\bot}$, we 
  make use of the general recipe in Prop.~\ref{withbot}, 
  which 
  shows that $\cB_{\dcoimpscript{}{}}\bullet \cB_\bot$ is characterized by the $4$-valued \nmatrix{}
$\TWO^2_{\dcoimpscript{}{}}\star \TWO_\bot:=\tuple{\{0,1\}^2,\{(1,1)\},\widetilde{\cdot}_\star}$
where:
\begin{center}%    \centering
    \begin{minipage}{0.55\textwidth}
        \centering
            \begin{tabular}{c | c c c c}
    %\hline
    $\widetilde{\smash{\dcoimp{}{}}}_\star$ & $(0,0)$ & $(0,1)$ & $(1,0)$ & $(1,1)$\\ 
    \hline
    $(0,0)$ & $(0,0)$ & $(0,1)$  & $(1,0)$ & $(1,1)$ \\  
     $(0,1)$  &$(0,0)$ & $(0,0)$  & $(1,0)$ & $(1,0)$ \\  
   $(1,0)$ & $(0,0)$ & $(0,1)$ & $(0,0)$ & $(0,1)$\\
   $(1,1)$ & $(0,0)$ & $(0,0)$ & $(0,0)$ & $(0,0)$
     \end{tabular}
    \end{minipage}
        \begin{minipage}{0.25\textwidth}
        \centering
          \begin{tabular}{c c c}
    %\hline
    $\widetilde{\bot}_\star$   \\ 
    \hline
    $(0,0),(0,1),(1,0)$\\ \\ \\
    \mbox{}
   % $-1$ & $0$   \\
 %   \hline
  \end{tabular}
    \end{minipage}%

\end{center}
\noindent 
Note that the non-determinism is again concentrated on $\bot$. 
 Furthermore, the combined logic $\cB_{\dcoimpscript{}{}}\bullet \cB_\bot$ fails the cancellation property:
 $\dcoimp{\,q}{\bot}, p \der \dcoimp{\,p}{\bot}$ yet $p \not\der \dcoimp{\,p}{\bot}$. So, $\cB_{\dcoimpscript{}{}}\bullet \cB_\bot$ is not many-valued, and it is therefore subclassical.

A complete calculus for $\cB_{\dcoimpscript{}{}\bot}$ may be obtained by adding to a calculus for~$\cB_{\dcoimpscript{}{}}$ the  single interaction rule:
	\begin{center}
\scalebox{1.2}{
\begin{tabular}{ccccccc}
% \hline
% &$\cL_{\ea}$&$\bullet$& $\cL_{\oub}$&\\[.4mm]
% \hline
$\frac{p}{\;\;\dcoimpscript{\,p}{\bot}}$\\[.4mm]
%&&&&\\ 
%[.6mm]
 %\hline
\end{tabular}
}
\end{center}
\noindent
{%fuzzy fuzzy argument!
Completeness of the resulting calculus} may be confirmed using Lemma~\ref{addrules}(a). 
However, note that Lemma~\ref{addrules} demands the original \nmatrix{} $\Mt$ to be saturated
 in order to guarantee that the restriction that its proof promotes on the set of valuations gives
 a complete semantics for any strengthening of~$\cL_\Mt$.  
 %an \nmatrix{} $\Mt$ by rules
%Note that in order to obtain a complete semantics for the strengthening of a logic of an \nmatrix{} $\Mt$ by rules by
%considering the valuations over
%$\Mt$ that validate the rules,
% Lemma~\ref{addrules} demands $\Mt$ to be  saturated.
Hence, as the matrix of coimplication is not saturated we cannot consider $\TWO^2_{\dcoimpscript{}{}}\star \TWO_\bot$. 
However, from the results mentioned in Rem.~\ref{sem-fib}, we can consider $\TWO^\omega_{\dcoimpscript{}{}}\star \TWO_\bot$ knowing that the underlying matrix is saturated and also characterizes $\cB_{\dcoimpscript{}{}}\bullet \cB_\bot$.
%by Theorems/Propositions~\ref{th:satprod} and~\ref{th:satteocomp}.
We thus have that $\TWO^\omega_{\dcoimpscript{}{}}\star \TWO_\bot=\tuple{2^\nats,\{\nats\},\widetilde{\cdot}_\star}$ with
$X\widetilde{\smash{\dcoimp{}{}}}_\star Y=\overline{X}\cap Y$
and
$\widetilde{\bot}_\star=2^\nats\setminus\{ \nats\}$. % (from Ex.~\ref{impcoimp}).
%
%{\color{red} put here the operation of  $\TWO^\omega_{\not\to}\bullet\TWO^\omega_{\not\to}$ .}
% is the extension of $\TWO^\omega_{\to}$ by
%
To conclude the argument, it is enough to show that for every valuation over $\TWO^\omega_{\dcoimpscript{}{}}\star \TWO_\bot$ that respects the above interaction rule 
there is a valuation over a Boolean matrix that satisfies the same formulas.
For that effect there are two cases to analyze.
Clearly, every $v$ that fails to satisfy all formulas in the language, that is, such that
 $v(\psi)\neq \nats$ for every $\psi\in L_{\dcoimpscript{}{}\bot}(P)$, trivially respects the interaction rule $\frac{p}{\;\;\dcoimpscript{\,p}{\bot}\;}$,
 and corresponds to the valuation over $\TWO_{\dcoimpscript{\,}{}\bot}$ that sends every sentential variable to~$0$.
If instead $v(\psi)=\nats$ for some $\psi\in L_{\dcoimpscript{\,}{}\bot}(P)$, then $v(\dcoimp{\,\psi}{\bot})=v(\psi)\cap \overline{v(\bot)}=\nats$ implies $v(\bot)=\varnothing$.
Hence,  $v$ is also a valuation over $\TWO^\omega_{\dcoimpscript{}{}\bot}$.
%
%
%There are two types of valuations over $\TWO^\omega_{\not\to}\star \TWO_\bot$ that validate the interaction rule, 
%the completeness follows
%from the fact in both these cases these valuation can be realized in a Boolean matrix.
%The valuations that falsify every formula in the language, that is
% $v(\psi)\neq \nats$ for every $\psi\in L_{\not\to\bot}(P)$. These valuations trivially validate the interaction rule $\frac{p}{\;\;p \not\limp\bot\;\;}$ and
% correspond to the valuation over $\TWO_{\not\to\bot}$ that sends every atomic formula to $0$.
%And the valuations such that $v(\psi)=\nats$ for some $\psi\in L_{\not\to\bot}(P)$, in order to have $v(\psi\not\to \bot)$ we must have that $v(\bot)=\varnothing$.
%Therefore, these are  also valuations over $\TWO^\omega_{\not\to\bot}$, which also defines $\cB_{\not\to\bot}$ (Theorem~\ref{th:satteocomp}).
\hfill$\triangle$
\end{example}

It is worth noting that in some cases the disjoint fibring of the logic of some Boolean connectives with $\bot$ admits a semantics that is simpler than the $4$-valued \nmatrix{} obtained above.

\begin{proposition}\label{tpbot} %\label{prop:4val}
If $\bot\notin\Sigma$ and every connective in $\Sigma$ is truth-preserving then 
 $\cB_{\Sigma}\,\bullet\, \cB_\bot$ has a $4$-valued characteristic logical matrix.
\end{proposition}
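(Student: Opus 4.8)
The plan is to sharpen the $4$-valued \emph{N}matrix $\Mt:=\TWO_\Sigma^2\star\TWO_\bot$ furnished by Prop.~\ref{withbot} into a genuinely \emph{deterministic} $4$-valued matrix. First I would record the concrete shape of $\Mt$: its values may be identified with $\{0,1\}^2$ (the pair of Boolean coordinates inherited from $\TWO_\Sigma^2$, the $\TWO_\bot$-coordinate being redundant on them), with $D=\{(1,1)\}$; each $\conn\in\Sigma$ is interpreted coordinatewise by $\widetilde{\conn}$, hence \emph{deterministically}, so that $\bot$ is the sole source of non-determinism, namely $\bot_\star=\{(0,0),(0,1),(1,0)\}$. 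I then set $\Mt^{\mathsf d}$ to be $\Mt$ with $\bot$ frozen to the single undesignated value $(0,1)$, and claim that $\cL_{\Mt^{\mathsf d}}=\cB_\Sigma\bullet\cB_\bot$. The freezing must avoid $(0,0)$: for instance with $\Sigma=\{\ou\}$ the choice $(0,0)$ would wrongly validate $p\ou\bot\der p$, whereas $p\ou\bot\not\der p$ in the fibring.

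Soundness, i.e.\ $\der_{\cB_\Sigma\bullet\cB_\bot}\subseteq\der_{\Mt^{\mathsf d}}$, is immediate: every valuation over $\Mt^{\mathsf d}$ is a valuation over $\Mt$, because the frozen value $(0,1)$ is one of the admissible non-deterministic choices for $\bot$, so the inclusion follows from the characterization in Prop.~\ref{withbot}.

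The substance is completeness, $\der_{\Mt^{\mathsf d}}\subseteq\der_{\cB_\Sigma\bullet\cB_\bot}$. I would view a valuation $v$ over $\Mt$ as a pair $(w_1,w_2)$ of its coordinate maps $w_i\colon L_{\Sigma\cup\{\bot\}}(P)\to\{0,1\}$, each respecting the $\Sigma$-connectives coordinatewise and assigning some $w_i(\bot)\in\{0,1\}$, with $v(\psi)\in D$ iff $w_1(\psi)=w_2(\psi)=1$ and with $(w_1(\bot),w_2(\bot))\neq(1,1)$; conversely any such pair defines a valuation over $\Mt$, since every element of $\{0,1\}^2$ is a value. Given $\Gamma\not\der_{\cB_\Sigma\bullet\cB_\bot}\varphi$, I fix (by Prop.~\ref{withbot}) such a $v=(w_1,w_2)$ with $v(\Gamma)\subseteq D$ and $v(\varphi)\notin D$, and turn it into a separating valuation over $\Mt^{\mathsf d}$ (one with $w_1(\bot)=0$ and $w_2(\bot)=1$) by cases on $(w_1(\bot),w_2(\bot))$. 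The value $(0,1)$ needs no change; the value $(1,0)$ is fixed by swapping the two coordinates, which is an automorphism of $\Mt$ preserving~$D$.

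The hard case is $(0,0)$, and it is exactly here that truth-preservation is used. Then both $w_1$ and $w_2$ send $\bot$ to $0$, hence are genuine bivaluations over $\TWO_{\Sigma\cup\{\bot\}}$, and from $v(\varphi)\notin D$ we get $w_i(\varphi)=0$ for some $i$; relabelling, assume $w_1(\varphi)=0$. I then replace $w_2$ by the constant map $\mathbf 1$ sending every formula to~$1$. Because every connective in $\Sigma$ is truth-preserving, $\widetilde{\conn}(\overline{1}^k)=1$ by Rem.~\ref{note-boolean}, so $\mathbf 1$ genuinely respects the $\Sigma$-connectives, and of course $\mathbf 1(\bot)=1$; thus $(w_1,\mathbf 1)$ is a valuation over $\Mt$ with $\bot\mapsto(0,1)$, i.e.\ a valuation over $\Mt^{\mathsf d}$. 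It still separates: $w_1(\gamma)=1=\mathbf 1(\gamma)$ gives $(w_1,\mathbf 1)(\gamma)=(1,1)\in D$ for $\gamma\in\Gamma$, while $(w_1,\mathbf 1)(\varphi)=(0,1)\notin D$. This yields $\Gamma\not\der_{\Mt^{\mathsf d}}\varphi$ and closes the argument. The single delicate point — and the place where the hypothesis is indispensable — is the admissibility of the all-true valuation $\mathbf 1$: were some connective in $\Sigma$ not truth-preserving, $\mathbf 1$ would fail to be a valuation and this collapse to four deterministic values would break down.
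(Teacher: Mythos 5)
Your proposal is correct and takes essentially the same route as the paper's own proof: both arguments freeze $\bot$ at a fixed ``mixed'' undesignated value of $\TWO_\Sigma^2\star\TWO_\bot$ (the paper uses $(1,0)$, you use the symmetric $(0,1)$), prove soundness by noting such valuations are among those of the \nmatrix{} of Prop.~\ref{withbot}, and handle the delicate completeness case by substituting the all-true coordinate valuation, which is admissible precisely because every connective in $\Sigma$ is truth-preserving. The only difference is bookkeeping: the paper extracts two separating valuations from the equivalence $\Gamma\der_{\bbbot}\varphi$ iff $\Gamma\der_{\bbbotop}\varphi$ or $\Gamma\der_{\bbbotop}\bot$ and pairs them, whereas you start from a single separating valuation over $\TWO_\Sigma^2\star\TWO_\bot$ and case-split on the value of $\bot$, using the coordinate-swap automorphism where needed.
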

\begin{proof}
We know from Prop.~\ref{withbot} that  $\cB_{\Sigma}\,\bullet\, \cB_\bot$ is characterized by the $4$-valued \nmatrix{} $\TWO_\Sigma^2\star\TWO_\bot$. We will show that $\cB_{\Sigma}\,\bullet\, \cB_\bot$ is equivalently characterized by the $4$-valued matrix $\Mt^4_{\Sigma\cup\{\bot\}}:=\tuple{\{0,1\}^2,\{(1,1)\},\widehat{\cdot}_4\,}$ where $\widehat{\conn}_4:=\widetilde{\conn}_2$ for $\conn\in\Sigma$ {(we take $\widetilde{\conn}_2$ as the interpretation of $\conn$ in $\TWO^2_\Sigma$)}
%(thus extending $\TWO^2_\Sigma$) 
and $\widehat{\bot}_4:=(1,0)$. The argument we use here is a specialization of the one used in the proof of  Prop.~\ref{withtop}.

Let $\cB_{\Sigma}\,\bullet\, \cB_\bot=\cB_{\Sigma}\,\bullet\, \bbbot_{\bot}:=\tuple{\Sigma\cup\{\bot\},\der_{\bbbot}}$, and recall from Prop.~\ref{withtop} that $\cB_{\Sigma}\,\bullet\, \bbbotop_\bot$ is characterized by the \nmatrix{} $\TWO_\Sigma\star\Mt^{\bbbotop}_\bot$.  Let $\tuple{\Sigma\cup\{\bot\},\der_{\bbbotop}}$ refer to $\cB_{\Sigma}\,\bullet\, \bbbotop_{\bot}$. 
%
%Note that $\cB_\Sigma$ is characterized by the matrix $\TWO_\Sigma$,  
%$\cB_\bot=\bbbot_\bot$ is characterized by the matrix $\Mt^{\bbbot}_\bot=\TWO_\bot$, and 
%
%For $\Gamma\cup\{\varphi\}\subseteq L_{\Sigma\cup\{\bot\}}(P)$, note that 
%$\Gamma\der_{\bbbot}\varphi$ if and only if $\Gamma\der_{\bbbotop}\varphi$ or $\Gamma\der_{\bbbotop}\bot$. 
For $\Gamma\cup\{\varphi\}\subseteq L_{\Sigma\cup\{\bot\}}(P)$, note that 
$\Gamma\der_{\bbbot}\varphi$ if and only if $\Gamma\der_{\bbbotop}\varphi$ or {$\Gamma\der_{\bbbotop}\bot$ (as $\cB_\bot$ is axiomatized by the single rule $\frac{\bot}{p}$)}. 
Soundness follows by observing that $\Val_P(\Mt^4_{\Sigma\cup\{\bot\}})=\{v\in\Val_{P}(\TWO^2_\Sigma\star\TWO_\bot):v(\bot)=(1,0)\}$. 

Further, note that $\Val_{P}(\TWO^2_\Sigma\star\TWO_\bot)=\Val_{P}(\TWO^2_\Sigma\star\Mt^{\bbbot}_\bot)=\{v\in\Val_{P}(\TWO^2_\Sigma\star\Mt^{\bbbotop}_\bot):v(\bot)\neq(1,1)\}$, and therefore $\Val_P(\Mt^4_{\Sigma\cup\{\bot\}})=\{v\in\Val_{P}(\TWO^2_\Sigma\star\Mt^{\bbbotop}_\bot):v(\bot)=(1,0)\}=\{v'\circ\skel_\Sigma:v'\in\Val_{P\cup X_\Sigma}(\TWO^2_\Sigma)\mbox{ and }v'(x_\bot)=(1,0)\}=\{(v'_1\circ\skel_\Sigma,v'_2\circ\skel_\Sigma):v'_1,v'_2\in\Val_{P\cup X_\Sigma}(\TWO_\Sigma)\mbox{ and }v'_1(x_\bot)=1\mbox{ and }v'_2(x_\bot)= 0 \}=\{(v_1,v_2):v_1,v_2\in\Val_{P\cup X_\Sigma}(\TWO_\Sigma\star\Mt^{\bbbotop}_\bot)\mbox{ and }v_1(\bot)= 1\mbox{ and } v_2(\bot)= 0 \}$.

As for completeness, if $\Gamma\not\der_{\bbbot}\varphi$ then $\Gamma\not\der_{\bbbotop}\varphi$ and $\Gamma\not\der_{\bbbotop}\bot$. 
Hence, there exist valuations $w_1,w_2\in\Val_P(\TWO_\Sigma\star\Mt^{\bbbotop}_\bot)$ such that $w_1(\Gamma)=w_2(\Gamma)\subseteq\{1\}$ and 
$w_1(\varphi)=w_2(\bot)=0$. Additionally, note that, as the connectives in $\Sigma$ are all assumed to be truth-preserving, then
$v_\curlyvee'\in\Val_{P\cup X_\Sigma}(\TWO_\Sigma)$ where $v_\curlyvee'(\psi)=1$ for every $\psi\in L_\Sigma(P\cup X_\Sigma)$. 
Consider $v_\curlyvee=v_\curlyvee'\circ\skel_\Sigma\in \Val_P(\TWO_\Sigma\star\Mt^{\bbbotop}_\bot)$.

%Now, let  $$v:=
%\begin{cases}
% (v_\curlyvee,w_1),& \mbox{ if }w_1(\bot)=0\\
% (w_1,w_2),& \mbox{ if }w_1(\bot)=1
%\end{cases}.
%$$
Now, for each $\alpha\in L_{\Sigma}(P)$, let $$v(\alpha):=
\begin{cases}
 (v_\curlyvee(\alpha),w_1(\alpha)),& \mbox{ if }w_1(\bot)=0\\
 (w_1(\alpha),w_2(\alpha)),& \mbox{ if }w_1(\bot)=1
\end{cases}.
$$
In either case, $v\in\Val_P(\TWO_\Sigma^2\star\TWO_\bot)$, $v(\Gamma)\subseteq\{(1,1)\}$, $v(\varphi)\neq(1,1)$ and $v(\bot)=(1,0)$. Precisely because $v(\bot)=(1,0)$, we conclude that $v\in\Val_P(\Mt^4_{\Sigma\cup\{\bot\}})$.
%
%{\color{red}make this clear, on $\TWO^2_\Sigma\star\TWO_\bot$, without saturations, using only valuations with $v(\bot)=(1,0)$.  Generalize here to $\Mt^4_{\Sigma\bot}$ the matrix $\Mt^4_{\imp{}{}\bot}$ defined in example~\ref{ex:implicabot}}
% If every connective in $\conn$ is truth-preserving then $\TWO_{\conw}$ admits the constant valuation
% $=1$.
% 
% Hence when, as in the proof of Proposition~\ref{withbot}, given the two $v_1,v_2$ such that 
% $v_1(\Gamma)=v_2(\Gamma)=1$ and $v_1(\varphi)=v_2(\bot)=0$ 
%  we can always define
%  $$v(\psi)=
%\begin{cases}
% 1v_1(\psi)& \mbox{ if }v_1(\bot)=0\\
% v_1(\psi)v_2(\psi)& \mbox{ if }v_1(\bot)=1
%\end{cases},
%$$
%thus guaranteeing that $v(\bot)=(1,0)$ (which is impossible in general, e.g.~$\not\to \in \conw$).
 \end{proof}

\begin{example}[Implication and bottom]\label{ex:implicabot}
\em
Recall implication, $\imp{}{}$, from Sec.~\ref{classical}, and observe that it is a truth-preserving connective.
When fibring $\cB_{\impscript{}{}}$ and $\cB_{\bot}$, in view of truth-preservation, by Prop.~\ref{tpbot},
we know that $\cB_{\impscript{}{}}\bullet \cB_\bot$ is characterized by the $4$-valued matrix $\Mt^4_{\impscript{}{}\bot}:=\tuple{\{0,1\}^2,\{(1,1)\},\widehat{\cdot}_4}$ where $\widehat{\imp{}{}}_4:=\widetilde{\smash{\imp{}{}}}_2$ (as in $\TWO^2_\impscript{}{}$) and $\widehat{\bot}_4:=(1,0)$, that is:
%
%%  the $4$-valued \nmatrix{}
%%$\TWO^2_{\not\limp}\star \TWO_\bot=\tuple{\{0,1\}^2,\{(1,1)\},\widetilde{\not\limp}_\star,\widetilde{\bot}^\star}$
%%with 
%
%
%
%As $\to$ is truth-preserving we know from Proposition~\ref{tpbot} that it is $4$-valued.
%Indeed it is given by the matrix ${\Mt}_{\to\bot}=\{\{(0,0),(0,1),(1,0),(1,1)\},\{(1,1)\},\tilde{\to},\tilde{\bot}\}$
%where 
%
\begin{center}%    \centering
    \begin{minipage}{0.55\textwidth}
        \centering
  \begin{tabular}{c | c c c c}
    %\hline
    $\widehat{\imp{}{}}_4$ & $(0,0)$ & $(0,1)$ & $(1,0)$ & $(1,1)$\\ 
    \hline
    $(0,0)$ & $(1,1)$ & $(1,1)$  & $(1,1)$ & $(1,1)$ \\  
     $(0,1)$  &$(1,0)$ & $(1,1)$  & $(1,0)$ & $(1,1)$ \\  
   $(1,0)$ & $(0,1)$ & $(0,1)$ & $(1,1)$ & $(1,1)$\\
   $(1,1)$ & $(0,0)$ & $(0,1)$ & $(1,0)$ & $(1,1)$
     \end{tabular}
    \end{minipage}
        \begin{minipage}{0.25\textwidth}
        \centering
          \begin{tabular}{c c c}
    %\hline
    $\widehat{\bot}_4$   \\ 
    \hline
    $(1,0)$\\ \\ \\
    \mbox{}
   % $-1$ & $0$   \\
 %   \hline
  \end{tabular}
    \end{minipage}%

\end{center}
Further, in $\cB_{\impscript{}{}}\bullet\cB_\bot$ we have $\not\der\bot{\imp{}{}} p$, and so this logic is strictly weaker than~$\cB_{\impscript{}{}\bot}$.
%it is weaker than classical.
A complete calculus for $\cB_{\impscript{}{}\bot}$ may be obtained by simply adding to the calculus of $\cB_{\impscript{}{}}$ the single interaction axiom:
	\begin{center}
\scalebox{1.2}{
\begin{tabular}{ccccccc}
% \hline
% &$\cL_{\ea}$&$\bullet$& $\cL_{\oub}$&\\[.4mm]
% \hline
$\frac{}{\;\;\bot\impscript{}{} p\;\;}$\\[.4mm]
%&&&&\\ 
%[.6mm]
 %\hline
\end{tabular}
}
\end{center}
The usual rule for $\bot$ is easily derivable. Completeness of the resulting calculus may be easily confirmed using Lemma~\ref{addaxioms}(b). % {\color{red}(we need to add axioms there)}.
Indeed, note that there are two kinds of valuations over $\Mt^4_{\impscript{}{}\bot}$ that respect the axiom $\imp{\bot}{\,p}\ $:
either $v(\bot)=(0,0)$, in which case it is also a valuation over $\TWO^2_{\impscript{}{}\bot}$, or $v(\bot)=(1,0)$ (resp.\ $v(\bot)=(0,1)$), in which case the only possible values for the other formulas are $(1,0)$ or $(1,1)$ (resp.\ $(0,1)$ or $(1,1)$). So, $\pi_2(v)$ (resp.\ $\pi_1(v)$) is a valuation over $\TWO_{\impscript{}{}\bot}$ satisfying the same formulas as~$v$.
%never use the values $(0,0),(0,1)$.
%Purging the matrix from these values we get an isomorphic copy of $\TWO_{\imp{}{}\bot}$.}\hfill$\triangle$
%However this is a slight
%Interesting case, the interaction is given by the axiom 
\end{example}

The semantic characterizations provided by Prop.~\ref{withbot} and Prop.~\ref{tpbot} may still be further improved in the very particular, and perhaps surprising, case where all the Boolean connectives in~$\Sigma$ are expressible as derived connectives with the sole use of bi-implication.

%\subsection*

\begin{proposition} \label{propequiv}
If $\bot\notin\Sigma$ and $\cC_{2}^{\Sigma}\subseteq  \cC_{2}^{\biimpscript{}{}}$ then 
$\cB_{\Sigma}\,\bullet\, \cB_\bot=\cB_{\Sigma\cup\{\bot\}}$.
\end{proposition}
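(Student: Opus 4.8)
The plan is to reduce, via the deterministic $4$-valued characterization already available, to a purely affine-algebraic manipulation of bivaluations. First I would record that every connective in $\Sigma$ is \emph{affine} and \emph{truth-preserving}: since $\cC_2^\Sigma\subseteq\cC_2^{\biimpscript{}{}}$ and $\widetilde{\biimp{}{}}(a,b)=1\oplus a\oplus b$ (writing $\oplus$ for addition modulo~$2$), every function in $\cC_2^{\biimpscript{}{}}$ is a composition of affine functions, hence affine, and it preserves the value~$1$ (as $\widetilde{\biimp{}{}}(1,1)=1$ and this propagates through compositions); in particular its Boolean interpretation maps the all-true tuple to~$1$, so each such connective is truth-preserving (cf.\ Rem.~\ref{note-boolean}). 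Thus Prop.~\ref{tpbot} applies and $\cB_\Sigma\bullet\cB_\bot=\cL_{\Mt^4_{\Sigma\cup\{\bot\}}}$, where $\Mt^4_{\Sigma\cup\{\bot\}}$ is the deterministic matrix with $\widehat\conn_4=\widetilde\conn_2$ acting componentwise on $\{0,1\}^2$ and $\widehat\bot_4=(1,0)$. Since $\cB_{\Sigma\cup\{\bot\}}$ extends both components, we always have $\cB_\Sigma\bullet\cB_\bot\subseteq\cB_{\Sigma\cup\{\bot\}}$ (equivalently, $\TWO_{\Sigma\cup\{\bot\}}$ embeds as the submatrix of $\Mt^4_{\Sigma\cup\{\bot\}}$ on $\{(1,1),(1,0)\}$); so it only remains to prove the converse inclusion $\der_{\cB_{\Sigma\cup\{\bot\}}}\subseteq\der_{\Mt^4_{\Sigma\cup\{\bot\}}}$.

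For this I would argue contrapositively, turning any refuting valuation over $\Mt^4_{\Sigma\cup\{\bot\}}$ into a refuting bivaluation over $\TWO_{\Sigma\cup\{\bot\}}$. Write a valuation over the matrix as $v=(v_1,v_2)$: because the $\Sigma$-connectives act componentwise and $\widehat\bot_4=(1,0)$, the second coordinate $v_2$ is exactly a classical bivaluation over $\TWO_{\Sigma\cup\{\bot\}}$ (it respects $\Sigma$ and sends $\bot\mapsto 0$), while $v_1$ respects $\Sigma$ but sends $\bot\mapsto 1$. The crucial structural observation is to treat $\bot$ as one more generator alongside the variables: every formula $\psi$ then induces an affine function $\widehat\psi$ of the generators occurring in it, with $v_i(\psi)=\widehat\psi(v_i)$, and, since all connectives preserve~$1$, the all-ones generator-assignment $\mathbf 1$ satisfies $\mathbf 1(\psi)=1$ for every~$\psi$.

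The key step is then an explicit construction. Given $v=(v_1,v_2)$ with $v(\Gamma)\subseteq\{(1,1)\}$ and $v(\varphi)\neq(1,1)$, define a generator-assignment $w:=v_1\oplus v_2\oplus\mathbf 1$ (componentwise XOR on each variable and on $\bot$). On $\bot$ this gives $1\oplus 0\oplus 1=0$, so $w$ is a bona fide valuation over $\TWO_{\Sigma\cup\{\bot\}}$. Exploiting that each $\widehat\psi$ is affine, hence satisfies $\widehat\psi(x\oplus y\oplus z)=\widehat\psi(x)\oplus\widehat\psi(y)\oplus\widehat\psi(z)$, one gets $w(\psi)=\widehat\psi(v_1)\oplus\widehat\psi(v_2)\oplus\widehat\psi(\mathbf 1)=v_1(\psi)\oplus v_2(\psi)\oplus 1$ for \emph{every} formula $\psi$. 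Consequently $w(\gamma)=1\oplus 1\oplus 1=1$ for each $\gamma\in\Gamma$, so $w$ satisfies~$\Gamma$; and for the conclusion there are only two cases: if $v_2(\varphi)=0$ then $v_2$ already refutes $\Gamma\der\varphi$ classically, while if $v_2(\varphi)=1$ then necessarily $v_1(\varphi)=0$ (as $v(\varphi)\neq(1,1)$) and $w(\varphi)=0\oplus 1\oplus 1=0$, so $w$ refutes it. Either way a classical countermodel exists, which yields $\der_{\cB_{\Sigma\cup\{\bot\}}}\subseteq\der_{\Mt^4_{\Sigma\cup\{\bot\}}}$ and hence the claimed equality $\cB_\Sigma\bullet\cB_\bot=\cB_{\Sigma\cup\{\bot\}}$.

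I expect the main obstacle to be conceptual rather than computational: spotting that $\bot$ should be handled as an ordinary affine generator, and that the ``diagonal'' combination $v_1\oplus v_2\oplus\mathbf 1$ simultaneously lands in the classical submatrix (the two differing $\bot$-values cancel against the all-ones shift) and, by affine additivity together with $1$-preservation, transports designation of $\Gamma$ and non-designation of $\varphi$ from the product valuation to a single bivaluation. Verifying $w(\psi)=v_1(\psi)\oplus v_2(\psi)\oplus 1$ on the nose — i.e.\ that the induced functions are genuinely affine and that XOR-additivity holds exactly — is the one place where the hypothesis $\cC_2^\Sigma\subseteq\cC_2^{\biimpscript{}{}}$ is indispensable and must be used with care.
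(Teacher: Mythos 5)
Your proof is correct, and after the shared first step it takes a genuinely different route from the paper's. Both arguments begin by invoking Prop.~\ref{tpbot}, but the paper applies it only to the generating signature $\{\biimpscript{}{},\top,\bot\}$, establishes $\cB_{\biimpscript{}{}\top}\bullet\cB_\bot=\cB_{\biimpscript{}{}\top\bot}$ by exhibiting an explicit isomorphism $h$ between $\Mt^4_{\biimpscript{}{}\top\bot}$ and $\TWO^2_{\biimpscript{}{}\top\bot}$ through a case analysis on the truth tables, and then transfers the conclusion to arbitrary $\Sigma$ with $\cC_2^{\Sigma}\subseteq\cC_2^{\biimpscript{}{}}$ via the translation machinery of Prop.~\ref{liftspecial}$\mathbf{(c)}$ and Prop.~\ref{lift}. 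You instead apply Prop.~\ref{tpbot} to $\Sigma$ directly --- legitimately, since every function in $\cC_2^{\biimpscript{}{}}$ is affine and $1$-preserving (and note that the hypothesis also excludes $0$-place bottom-like connectives from $\Sigma$, as $\lambda p.\,0$ fails $1$-preservation) --- and then convert a refuting valuation $(v_1,v_2)$ over $\Mt^4_{\Sigma\cup\{\bot\}}$ into a classical bivaluation by the XOR construction $w=v_1\oplus v_2\oplus\mathbf{1}$, falling back on $v_2$ itself when $v_2(\varphi)=0$. That case split is essential (when $v(\varphi)=(0,0)$ your $w$ yields $w(\varphi)=1$), and you handle it correctly. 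It is worth observing that the two proofs rest on the same algebraic fact: the paper's isomorphism is exactly $h(x,y)=(x\oplus y\oplus 1,\,y)$, so your pair $(w,v_2)$ is precisely $h\circ v$; but where the paper verifies the homomorphism property of $h$ only for the three connectives $\biimpscript{}{}$, $\top$, $\bot$ and must then invoke the lift lemmas to cover general $\Sigma$, you prove the identity $w(\psi)=v_1(\psi)\oplus v_2(\psi)\oplus 1$ once and for all by affine additivity together with $\widehat{\psi}(\mathbf{1})=1$, treating $\bot$ as an ordinary generator. Your route buys self-containedness --- it bypasses Props.~\ref{halflift}, \ref{liftspecial} and \ref{lift} entirely --- and makes transparent why definability from bi-implication is the operative hypothesis (affinity plus $1$-preservation over $\mathrm{GF}(2)$); the paper's route buys reusability, since the translation lemmas it relies on are set up anyway and redeployed elsewhere (e.g.\ in Prop.~\ref{badbottom} and Prop.~\ref{topisnice}), at the price of checking the isomorphism only on a small concrete signature.
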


\begin{proof}
First, we prove that $\cB_{\smash{{\biimpscript{}{}}}{\top}}\,\bullet\, \cB_\bot=\cB_{\smash{{\biimpscript{}{}}}{\top}\bot}$. Since $\biimp{}{}$ and ${\top}$ are truth-preser\-ving, we know from Prop.~\ref{tpbot} that $\cB_{\smash{{\biimpscript{}{}}}{\top}}\,\bullet\, \cB_\bot$ is characterized by the $4$-valued matrix $\Mt^4_{\smash{{\biimpscript{}{}}}{\top}\bot}:=\tuple{\{0,1\}^2,\{(1,1)\},\widehat{\cdot}_{4}\,}$ where 
$\widehat{\smash{\biimp{}{}}}_4:=\widetilde{\smash{\biimp{}{}}}_2$ and $\widehat{{\top}}_4:=\widetilde{{\top}}_2=(1,1)$
(thus extending $\TWO^2_{\smash{{\biimpscript{}{}}}{\top}}$) and $\widehat{\bot}_4:=(1,0)$. 
We will show that $\cB_{\smash{{\biimpscript{}{}}}{\top}}\,\bullet\, \cB_\bot$ is equivalently characterized by $\TWO_{\smash{{\biimpscript{}{}}}{\top}\bot}$.

Consider the bijection $h:\{0,1\}^2\longrightarrow \{0,1\}^2$ such that $h(1,a)=(a,a)$ and $h(0,a)=(1-a,a)$ for $a\in\{0,1\}$. It is straightforward to check that $h$ establishes an isomorphism between $\Mt^4_{\smash{{\biimpscript{}{}}}{\top}\bot}$ and $\TWO^2_{\smash{{\biimpscript{}{}}}{\top}\bot}$. 
Indeed, first note that $h({\top})=h(1,1)=(1,1)=\widetilde{{\top}}_2$, and 
$h(\widehat{\bot}_4)=h(1,0)=(0,0)=\widetilde{\bot}_2$. To see that $h((a_1,b_1)\widehat{\smash{\biimp{}{}}}_4(a_2,b_2))=h(a_1,b_1)\widetilde{\smash{\biimp{}{}}}_2 h(a_2,b_2)$ note that  $\widehat{\smash{\biimp{}{}}}_4=\widetilde{\smash{\biimp{}{}}}_2$ is commutative and analyze the possible cases: (i) $h((1,1)\widetilde{\smash{\biimp{}{}}}_2(a,b))=h(a,b)=(1,1)\widetilde{\smash{\biimp{}{}}}_2 h(a,b)=h(1,1)\widetilde{\smash{\biimp{}{}}}_2 h(a,b)$; (ii) $h((a,b)\widetilde{\smash{\biimp{}{}}}_2(a,b))=h(1,1)=h(a,b)\widetilde{\smash{\biimp{}{}}}_2 \linebreak h(a,b)$; and (iii) if $(a_1,b_1), (a_2,b_2)$ are two distinct undesignated values and $(a_3,b_3)$ is the other undesignated value, then $h((a_1,b_1)\widetilde{\smash{\biimp{}{}}}_2(a_2,b_2))=h(a_3,b_3)
=h(a_1,b_1)\widetilde{\smash{\biimp{}{}}}_2 h(a_2,b_2)$. This shows that $\cB_{\smash{{\biimpscript{}{}}}{\top}}\,\bullet\, \cB_\bot$ is equivalently characterized by~$\TWO^2_{\smash{{\biimpscript{}{}}}{\top}\bot}$, and thus also by $\TWO_{\smash{{\biimpscript{}{}}}{\top}\bot}$.

{%\color{Purple} %IN the middle of writting! have to leave!  REVISIT ALSO THE OTHERS AND EXTABLISH always $\bt=\bt_1\cup \bt_2$ 
Finally, consider $\bt_1:L_{\Sigma}(P)\longrightarrow L_{{\biimpscript{}{}}{\top}}(P)$ such that 
$\TWO_\Sigma=\TWO_{{\biimpscript{}{}}{\top}}^{\bt_1}$ and $\bt_2:=\mathsf{id}_\bot$.
%Hence, %when combining $\cB_{\Sigma}$ with $\cB_\bot$ 
We are thus under applicability conditions of Prop.~\ref{liftspecial}$\mathbf{(c)}$ and Prop.~\ref{lift},
and
from  $\TWO_{\Sigma\cup\{\bot\}}=\TWO^{\bt_1\cup \bt_2}_{{\biimpscript{}{}}{\top}\bot}$ we
 conclude that
  $\fib{\cB_{\Sigma}\!}{\cB_{\bot}}=\cB_{\Sigma\cup\{\bot\}}$.}
%is characterized by the matrix $\TWO_{\Sigma_1\cup\{\bot\}}$, 
%we conclude by Proposition~\ref{lift} that
%$\cB_{\biimp}\,\bullet\, \cB_\bot=\cB_{\biimp\bot}$ is characterized by the $2$-valued matrix $\TWO_{\biimp\bot}$
%that $\cB_{\Sigma}\,\bullet\, \cB_\bot$ is characterized by $\TWO_{\biimp\bot}^\bt=\TWO_{\Sigma\cup\{\bot\}}$ and thus coincides with $\cB_{\Sigma\cup\{\bot\}}$.}
\end{proof}

The next example illustrates a rather special ---and perhaps unexpected--- situation: the Boolean logic of bi-implication and $\bot$ coincides with the fibring of the corresponding one-connective fragments. This fact applies also if we replace bi-implication with the connective $\mathsf{+^3}$ which is expressible using $\biimp{}{}$ by setting $\lambda p_1p_2p_3.\,\biimp{p_1}{(\biimp{p_2}{p_3})}$. These results are to be contrasted, in the light of Prop.~\ref{badbottom} below, with the fibring of $\bot$ with any connective in the list [\texttt{L1}]: $T^{n+2}_{n+1}$ (for $n\in\nats$), $T^{n+4}_2$ (for $n\in\nats$), $\neg$, $\imp{}{}$, $\dcoimp{}{}$, $+$, $\textsc{if}$,  $\lambda p_1p_2p_3.\,p_1\lor(p_2\land p_3)$, $\lambda p_1p_2p_3.\,p_1\lor(p_2+p_3)$, $\lambda p_1p_2p_3.\,p_1\land(p_2\lor p_3)$, $\lambda p_1p_2p_3.\,p_1\land(\imp{p_2}{p_3})$. Note that $[\texttt{L1}]$ contains all the connectives in [\texttt{L0}] except $\biimp{}{}$ and $\mathsf{+^3}$.

\begin{example}[Bi-implication and bottom]\label{equivbot}
\em
We consider combining~$\cB_{\biimpscript{}{}}$ and~$\cB_\bot$. 
It follows from Prop.~\ref{propequiv}
that $\cB_{\biimp{}{}}\bullet\cB_\bot=\cB_{\biimp{}{}\bot}$. Thus, the fibred logic is $2$-valued and is characterized by the matrix $\TWO_{\biimp{}{}\bot}$.

A complete calculus for $\cB_{\biimp{}{}\bot}$ may be obtained by simply merging calculi for the components (a calculus for $\cB_{\biimp{}{}}$ may be found in~\cite[p. 332]{Rautenberg1981}).
%  An axiomatization for $\cB_{\biimp{}{}}$ can be found in \ref{rautenberg}[p. 332]. 
%  A complete calculus for $\cB_{\mathsf{+^3}\bot}$ may be obtained by simply adding to a calculus for $\cB_{\biimp{}{}}\bullet \cB_{\bot}$ the two interaction rule $\frac{\bot_1+p+q}{\bot_2+p+q}$ and $\frac{\bot_2+p+q}{\bot_1+p+q}$. 
%
%	\begin{center}
%\scalebox{1.2}{
%\begin{tabular}{ccccccc}
%% \hline
%% &$\cL_{\ea}$&$\bullet$& $\cL_{\oub}$&\\[.4mm]
%% \hline
%$\frac{}{(\biimp{p}{(\biimp{q}{r}))}{(\biimp{(\biimp{p}{q})}{r})}}$&$\frac{}{(\biimp{(\biimp{p}{r})}{(\biimp{q}{p}))}{(\biimp{r}{q})}}$&$\frac{\;p\quad \biimp{p}{q}\;}{q}$\\[5mm]
%&$\frac{\;\;\bot\;\;}{p}\qquad\qquad\qquad$%\\[.4mm]
%%&&&&\\ 
%%[.6mm]
% %\hline
%\end{tabular}
%}
%\end{center}
%\vspace{-5mm}
\hfill$\triangle$
\end{example}

The following example provides further illustration on Prop.~\ref{prop:failingmaya}, and highlights the role of the condition concerning the nullarity of bottom in Prop.~\ref{propequiv}:

\begin{example}[Bi-implication and $1$-place bottom]\label{equivbotun}
\em
{Let the connective $\bot^{\!1}$ be a $1$-place bottom-like connective. }
This time we consider combining $\cB_{\biimp{}{}}$ with the logic~$\cB_{\bot^{\!1}}$ of~$\bot^{\!1}$, characterized by the Boolean matrix $\TWO_{\bot^{\!1}}$, which is known to be saturated by Prop.~\ref{satveysig}.
As~$\cB_{\biimpscript_{}{}}$ is not saturated, we consider instead $\TWO^\omega_{\biimpscript{}{}}$.
It follows from the results mentioned in Rem.~\ref{sem-fib} that 
$\cB_{\biimpscript{}{}}\bullet \cB_{\bot^{\!1}}$ is characterized by the non-de\-numerable \nmatrix{} $\TWO^\omega_{\biimpscript{}{}}\star \TWO_{\bot^{\!1}}=\tuple{2^\nats,\{\nats\},\widetilde{\cdot}_\star}$ where %:
%\begin{itemize}
%\item 
$X\widetilde{\smash{\biimp{}{}}}_\star Y:= (\overline{X}\cup Y)\cap (\overline{Y}\cup X)$ and %(X\cap Y)\cup (\overline{X}\cap \overline{Y})$
%\item 
$\widetilde{\bot^{\!1}}_\star (X):=\{Y:Y\neq \nats\}$.
%\end{itemize}
 
As $\bot^{\!1}$ is a non-top-like Boolean connective distinct from the $0$-place connective~$\bot$, and~$\biimp{}{}$ is very significant,
by Prop.~\ref{prop:failingmaya} we know that  
$\cB_{\biimpscript{}{}}\bullet \cB_{\bot^{\!1}}$ is not characterized by a finite matrix.
Furthermore, we claim that $\cB_{\biimpscript{}{}}\bullet \cB_{\bot^{\!1}}$ is not even finitely-\nvalued{}. 
We will show indeed that it is not $k$-determined.
Let $\psi_i:=\bot^{\!1}(p_i)$, for $1\leq i\leq k+1$, and %\biimp \bot_1(p_i))$, 
$\Gamma:=\{\biimp{(\biimp{\psi_i}{\psi_j})}{\psi_\ell}:i\neq j, i\neq \ell, j\neq \ell, 1\leq i,j,\ell\leq k+1\}$.
We have that $\Gamma\not\der p_{k+2}$.
However, given  % however 
%for every 
$\sigma:P\longrightarrow \{p_1,\ldots,p_k\}$, % we have that  $\Gamma^\sigma\der \sigma(p_{k+2})$.
it follows by the pigeonhole principle that there must be some $i\neq j$ such that
$\psi_i^\sigma=\psi_j^\sigma$, and so $\Gamma^\sigma\der\psi_\ell$.
As $\psi_\ell$ is bottom-like, i.e., $\psi_\ell\der p$, we obtain $\Gamma^\sigma\der \sigma(p_{k+2})$.
Hence, $\cB_{\biimpscript{}{}}\bullet \cB_{\bot^{\!1}}$ is strictly weaker than $\cB_{\biimpscript{}{}\bot^{\!1}}$.

A complete calculus for $\cB_{\biimpscript{}{}\bot^{\!1}}$ may be obtained by simply adding to a calculus for $\cB_{\biimpscript{}{}}\bullet \cB_{\bot^{\!1}}$ the single interaction axiom $\frac{}{\biimpscript{\bot^{\!1}(p)}{\bot^{\!1}(q)}}$. Completeness follows by Lemma~\ref{addrules}(b)
and the fact that %, %$\widetilde{\biimp}^\star$ taking profit from the fact proved in 
%that 
$A\widetilde{\smash{\biimp{}{}}}_\star B$ takes a designated value if and only if $A=B$, therefore every $\bot^{\!1}$-headed formula must have the same value, and the functions that swap some undesignated points with~$\varnothing$
%\jm{You might be right, but I don't understand what you're saying... Can you clarify? S: check Ex.~\ref{equivbot} for details \textcolor{blue}{J: I still don't understand, and think you should explain better, \textit{here}, what you mean by `the functions that swap some undesignated points with~$\varnothing$'.}} 
are isomorphisms between $\TWO^\omega_{\biimpscript{}{}}\star \TWO_{\bot^{\!1}}$ and $\TWO^\omega_{\biimpscript{}{}\bot^{\!1}}$. We abstain from presenting here further details as they are in fact very similar to the argument presented to the same effect in Ex.~\ref{equivbot}. 
%(in the particular case of the , but also true for $\TWO^\omega_{\biimp}\star \TWO_{\bot^{\!1}}$).
\hfill$\triangle$
\end{example}

In contrast to the above, the following example shows that the situation changes if we simultaneously add two $0$-place bottoms, in which case a subclassical logic is obtained. We will consider the connective $\mathsf{+^3}$, but the same argument would apply to the connective $\biimp{}{}$.

\begin{example}[$\mathsf{+^3}$ and two bottoms]\label{3+botS}
\em
Let $\bot_1$ and $\bot_2$ be two $0$-place bottoms.
We consider merging $\cB_{\mathsf{+^3}}$ and the logic of these two bottom-like connectives, $\cB_{\bot_1\bot_2}$, characterized by $\TWO_{\bot_1\bot_2}$.  
Clearly, $\TWO_{\bot_1\bot_2}$ is still saturated in view of Prop.~\ref{satveysig}.
Following the same recipe as in the case of a single bottom in Prop.~\ref{withtop} (but now with $3$-saturation), 
we immediately conclude that
$\cB_{\biimpscript{}{}}\bullet \cB_{\bot_1\bot_2}$
is characterized by the \nmatrix{} $\TWO^3_{\mathsf{+^3}}\star \TWO_{\bot_1\bot_2}$.
% I believe that $\TWO^2_{\biimp}$ may even be $3$-saturated, but regardless of that, 
 %we know that $\TWO^2_{\biimp}\star \TWO_{\bot_1\bot_2}$ is $\cB_{\biimp\bot_1\bot_2}$-sound,
% Picking $v$ over $\TWO^3_{\mathsf{+^3}}\star \TWO_{\bot_1\bot_2}$
% such that $v(\bot_1)=(0,1,1)$ and $v(\bot_2)=(1,0,1)$ 
%refutes $v(\bot_1\biimp \bot_2)$, and therefore  $\cB_\biimp\bullet \cB_{\bot_1\bot_2}\neq \cB_{\biimp\bot_1\bot_2}$.
Choosing~$v$ over $\TWO^3_{\mathsf{+^3}}\star \TWO_{\bot_1\bot_2}$
 such that $v(p):=(0,1,1)$, 
 $v(\bot_1):=(1,0,0)$ and $v(\bot_2):=(0,0,0)$,
 we have that 
$v(\mathsf{+^3}(p,\bot_1,\bot_2))=(1,1,1)$. Hence, %the rule of
 %$\cB_{{\mathsf{+^3}}\bot_1\bot_2}$, 
 the mixed consequence assertion $\mathsf{+^3}(p,\bot_1,\bot_2)\der{p}$ fails to hold in the fibred logic, and so
  $\cB_{\mathsf{+^3}}\bullet \cB_{\bot_1\bot_2}\neq \cB_{{\mathsf{+^3}}\bot_1\bot_2}$.
  
An axiomatization for $\cB_{\mathsf{+^3}}$ may be found in~\cite[p. 331]{Rautenberg1981}, and $\cB_{\bot_1\bot_2}$ is axiomatized simply by the rules $\frac{\bot_1}{p}$ and $\frac{\bot_2}{p}$.
  %, and of course $\cB_{\mathsf{+^3}}\bullet \cB_{\bot_1\bot_2}$ by the union of the rules for each. 
  A complete calculus for $\cB_{\mathsf{+^3}\bot_1\bot_2}$ may be obtained by just adding to a calculus for $\cB_{\mathsf{+^3}}\bullet \cB_{\bot_1\bot_2}$ the two interaction rules $\frac{\mathsf{+^3}(\bot_1,p,q)}{\mathsf{+^3}(\bot_2,p,q)}$ and $\frac{\mathsf{+^3}(\bot_2,p,q)}{\mathsf{+^3}(\bot_1,p,q)}$. 
  %
% {Axiomatization?}\jm{Did you leave this open on purpose? S: THese last examples are not as exhaustive as the previous ones, it appeared because suddently $\mathsf{+^3}$ became a relevant case to analyse but we do not have an axiomatization...}
  %Like in the previous example, it is enough to add
%The merged axiomatization for $\cB_\biimp\bullet \cB_{\bot_1\bot_2}$ is simply obtained by adding to the axiomatization of $\cB_\biimp$ the rules for each bottom, $\frac{\bot_1}{p}$ and $\frac{\bot_2}{p}$.
%A complete calculus for  $\cB_{\biimp \bot_1\bot_2}$ may be obtained by adding the rule $\frac{\bot_1}{p}$ and the axiom
% $\bot_1\biimp \bot_2$ to the axiomatization of
%$\cB_{\biimp}$. 
\hfill$\triangle$
\end{example}

We see that the Boolean connectives definable by bi-implication still result in a two-valued classical logic when combined with~$\bot$. This can never be the case with other connectives, as we show below. 
%Recall from Rem.~\ref{rem:clones}, at Subsection~\ref{classical}, that any clone containing the Boolean interpretation of a very significant connective must contain the Boolean function associated to at least one of the connectives of the list [\texttt{L1}]: $T^{n+2}_{n+1}$ (for $n\in\nats$), $T^{n+4}_2$ (for $n\in\nats$), $\neg$, $\imp{}{}$, $\biimp{}{}$, $\dcoimp{}{}$, $+$, $+^3$, $\textsc{if}$,  $\lambda pqr.p\lor(q\land r)$, $\lambda pqr.p\lor(q+r)$, $\lambda pqr.p\land(q\lor r)$, $\lambda pqr.p\land(\imp{q}{r})$. This list contains $\biimp{}{}$, as well as~$+ ^3$ which is also expressible using $\biimp{}{}$ by $\lambda pqr.\biimp{p}{(\biimp{q}{r})}$. 
We shall prove that the result of adding~$\bot$ to a logic expressing any connective from [\texttt{L1}] (or equivalently a connective from [\texttt{L0}] that does not belong to~$\cC_2^\biimpscript{}{}$) fails to yield the corresponding fragment of classical logic.

\begin{proposition} %{ \emph{`Lemma 1.'}}
\label{badbottom}
If $\bot\notin\Sigma$ and $\widetilde{\conn}\in  \cC_{2}^{\Sigma}$ for $\conn$ in $ [\texttt{L1}]$
%\{T^{n+2}_{n+1},T^{n+4}_2:n\in\nats\}\cup\{\neg,\imp{}{}~,\dcoimp{}{}, +, \textsc{if}, \lambda pqr.p\lor(q\land r), \lambda pqr.p\lor(q+r), \lambda pqr.p\land(q\lor r),\lambda pqr.p\land(\imp{q}{r})\}$
    then $\fib{\cB_{\Sigma}}{\cB_{\bot}}\subsetneq \cB_{\Sigma\cup\{\bot\}}$.
\end{proposition}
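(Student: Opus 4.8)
The plan is to reduce to a single connective and then, for each connective in $[\texttt{L1}]$, to break one classical consequence by exploiting the non-determinism that $\bot$ acquires under fibring. First I would argue that it suffices to prove $\fib{\cB_{\conn}}{\cB_{\bot}}\subsetneq\cB_{\conn\bot}$ for a single $\conn\in[\texttt{L1}]$. Since $\widetilde\conn\in\cC_2^\Sigma$, I would fix a translation $\bt_1\colon\{\conn\}\longrightarrow L_\Sigma(P)$ with $\TWO_\Sigma^{\bt_1}=\TWO_\conn$, and set $\bt_2:=\mathbf{id}_{\{\bot\}}$ and $\bt:=\bt_1\cup\bt_2$. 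With the saturated matrices $\TWO_\Sigma^\omega$ and $\TWO_\bot$ we are exactly in the situation of Prop.~\ref{liftspecial}$\mathbf{(c)}$, whence $\Gamma\der^\bt\varphi$ iff $\bt(\Gamma)\der\bt(\varphi)$, where $\der$ and $\der^\bt$ denote the consequence relations of $\fib{\cB_\Sigma}{\cB_\bot}$ and $\fib{\cB_\conn}{\cB_\bot}$. If $\fib{\cB_\Sigma}{\cB_\bot}$ equalled $\cB_{\Sigma\cup\{\bot\}}=\cL_{\TWO_{\Sigma\cup\{\bot\}}}$, then Prop.~\ref{lift} would force $\fib{\cB_\conn}{\cB_\bot}=\cL_{\TWO_{\Sigma\cup\{\bot\}}^{\bt}}=\cL_{\TWO_{\conn\bot}}=\cB_{\conn\bot}$, because $\bt$ induces $\widetilde\conn$ on $\conn$ and is the identity on $\bot$; this contradicts the one-connective statement. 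As $\fib{\cB_\Sigma}{\cB_\bot}\subseteq\cB_{\Sigma\cup\{\bot\}}$ always holds, strictness would follow.

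\emph{Semantics of the component.} Next I would invoke Prop.~\ref{withbot}: $\fib{\cB_\conn}{\cB_\bot}$ is characterized by the $4$-valued \nmatrix{} $\TWO_\conn^2\star\TWO_\bot$, whose valuations are exactly the pairs $(v_1,v_2)$ of classical $\TWO_\conn$-valuations (with $\bot$ read as a free atom) constrained only by $(v_1(\bot),v_2(\bot))\neq(1,1)$, and which designate $\varphi$ precisely when $v_1(\varphi)=v_2(\varphi)=1$. Consequently a classical consequence $\Gamma\der_{\cB_{\conn\bot}}\varphi$ is lost in the fibring as soon as there is such a pair with $v_2(\bot)=0$, $v_2\models\Gamma$ and $v_1(\bot)=1$, $v_1\models\Gamma$, $v_1\not\models\varphi$. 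So the task reduces to producing a consequence that holds classically when $\bot$ is false but is destroyed by flipping $\bot$ to true in one coordinate only.

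\emph{The witnesses.} I would then split the connectives of $[\texttt{L1}]$ according to which Boolean constants they preserve. When $\widetilde\conn(\overline{0}^{\,k})=1$ (as for $\neg$ and $\imp{}{}$), the formulas $\conn(\bot,\dots,\bot)$ and $\imp{\bot}{p}$ are theorems of $\cB_{\conn\bot}$ that become false at $\bot=1$, so $\Gamma=\varnothing$ already works. When $\widetilde\conn$ preserves $0$ but not $1$ (as for $\dcoimp{}{}$, $+$, $\textsc{if}$), a unary derived connective $\theta(\bot,p)$ with $\theta(0,p)\equiv p$ but $\theta(1,p)\not\equiv p$ is at hand (for instance $\dcoimp{\bot}{p}$, $\bot+p$, $\textsc{if}(p,p,\bot)$), and then one of $p\der\theta$ or $\theta\der p$ is broken. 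When $\widetilde\conn$ preserves both constants (the two threshold families $T^{n+2}_{n+1}$, $T^{n+4}_2$ and the four ternary connectives), I would place $\bot$ in suitable slots so as to obtain a derived connective equal to a conjunction $p_1\land\dots\land p_m$ at $\bot=0$ that loses a conjunct at $\bot=1$: e.g.\ $T^{n+2}_{n+1}(\bot,p_1,\dots,p_{n+1})$ collapses to $p_1\land\dots\land p_{n+1}$ at $\bot=0$, whereas at $\bot=1$ the threshold is already met with one $p_i$ false, so the entailment ``a conjunction entails each conjunct'' fails. In every case the pair with $(v_1(\bot),v_2(\bot))=(1,0)$ is readily checked to be a legitimate valuation of $\TWO_\conn^2\star\TWO_\bot$.

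\emph{Main obstacle.} The hard part will be the constant-preserving case, where there is neither a theorem nor a simple unary $\bot$-sensitive connective to exploit, so the multi-variable conjunction argument (and, dually, a $\bot\lor p$ construction for the $p_1\lor(\cdots)$ connectives) must be carried out uniformly over the two infinite threshold families. This is exactly the point at which the hypothesis $\widetilde\conn\notin\cC_2^{\biimpscript{}{}}$ is decisive: for $\biimp{}{}$ and $\mathsf{+^3}$ every analogous substitution is robust under flipping $\bot$, which is precisely why Prop.~\ref{propequiv} yields classicality for those connectives and why no such witness can exist there.
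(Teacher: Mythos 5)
Your plan is sound, and its load-bearing machinery coincides with the paper's: the reduction from $\Sigma$ to a single connective via Prop.~\ref{liftspecial}$\mathbf{(c)}$ and the contrapositive of Prop.~\ref{lift} with $\bt=\bt_\conn\cup\mathbf{id}_\bot$ is verbatim the paper's closing step, and your reading of $\TWO^2_\conn\star\TWO_\bot$ (from Prop.~\ref{withbot}) as pairs $(v_1,v_2)$ of classical valuations with $(v_1(\bot),v_2(\bot))\neq(1,1)$, with the counterexamples generated by flipping $\bot$ in one coordinate only, is exactly the paper's mechanism. Where you genuinely diverge is the organization of the case analysis: you attack all of [\texttt{L1}] directly, bucketed by which constants $\widetilde{\conn}$ preserves, whereas the paper first observes that every connective in [\texttt{L1}] \emph{expresses} either $\ou$ or $\bowtie:=\lambda p_1p_2p_3.\,p_1\land(p_2\lor p_3)$ from a five-element sublist [\texttt{L2}] (e.g.\ $\ou$ as $\lambda p_1p_2.\,T_2^{n+4}(\overline{p_1}^2,\overline{p_2}^{n+2})$), and then exhibits just five concrete countervaluations; since the witnessing formulas for a derived connective still live in $L_{\conn\bot}(P)$, this expressibility step dissolves precisely the ``main obstacle'' you flag, so no uniform construction over the infinite threshold families is ever needed. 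Your direct route is completable nonetheless: your $T^{n+2}_{n+1}$ witness is already uniform in $n$, and the $T^{n+4}_2$ family you leave open succumbs to, e.g., $\theta:=T_2^{n+4}(\overline{\bot}^2,\overline{p}^{n+2})$, which equals $p$ at $\bot=0$ but is true at $\bot=1$, breaking $\theta\der p$; the four ternary connectives are similarly routine. Two small repairs: $\textsc{if}$ preserves \emph{both} constants (classically $\textsc{if}(1,1,1)=1$ and $\textsc{if}(0,0,0)=0$), so it belongs in your third bucket --- though your witness $\textsc{if}(p,p,\bot)$ works regardless --- and note that your generic theorem $\conn(\bot,\dots,\bot)$ fails for $\imp{}{}$ (since $\imp{\bot}{\bot}$ remains true at $\bot=1$), so there the specific witness $\imp{\bot}{p}$ is genuinely needed, as you in fact supply. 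In sum: same skeleton, but you trade the paper's reduction to [\texttt{L2}] for per-connective witnesses; the paper's trick buys brevity and avoids family-uniform arguments, while your version, once the routine gaps above are filled in, is self-contained, more explicit, and makes visible why $\biimp{}{}$ and $\mathsf{+^3}$ (hence Prop.~\ref{propequiv}) escape the construction.
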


\begin{proof}
Consider the list of connectives [\texttt{L2}]: $\ou,T^3_2,\neg,+,\lambda p_1p_2p_3.\,p_1\land(p_2\lor p_3)$. Observe that [\texttt{L2}] is a sublist of [\texttt{L1}].
First, we prove that $\fib{\cB_{\conn}}{\cB_{\bot}}\subsetneq \cB_{\conn\bot}$ for $\conn$ in [\texttt{L2}]. Note that 
$\ou=T^2_1$. In all cases, we shall take advantage of Prop.~\ref{withbot}, which shows that $\cB_{\conn}\bullet\cB_{\bot}$ is characterized by the $4$-valued \nmatrix{} $\TWO^2_\conn\star\TWO_\bot$.

\begin{itemize}
\item[---] If $\conn=\ou$ then $\bot \ou p\der p$ holds classically but fails to hold in $\cB_{\ou}\bullet\cB_{\bot}$, as shown by a valuation $v\in\Val_P(\TWO^2_\ou\star\TWO_\bot)$ with $v(\bot)=(0,1)$ and $v(p)=(1,0)\neq (1,1)$, which is such that $v(\bot \ou p)=(0,1)\widetilde{\ou}{}_2(1,0)=(1,1)$.

\item[---] If $\conn=T^3_2$ then $T^3_2(\bot,p,q)\der p$ holds classically but fails to hold in $\cB_{T^3_2}\bullet\cB_{\bot}$, as shown by a valuation $v\in\Val_P(\TWO^2_{T^3_2}\star\TWO_\bot)$ with $v(\bot)=(0,1)$, $v(p)=(1,0)\neq (1,1)$ and $v(q)=(1,1)$, which is such that $v(T^3_2(\bot,p,q))=
\widetilde{(T^3_2)}{}_2((0,1),(1,0),(1,1))=(1,1)$.

\item[---] If $\conn=\neg$ then $\der \neg\bot$ holds classically but fails to hold in $\cB_{\neg}\bullet\cB_{\bot}$, as shown by a valuation $v\in\Val_P(\TWO^2_{\neg}\star\TWO_\bot)$ with $v(\bot)=(0,1)$, for which necessarily $v(\neg\bot)=
\widetilde{\neg}{}_2(0,1)=(1,0)\neq (1,1)$.

\item[---] If $\conn=+$ then $\bot + p\der p$ holds classically but fails to hold in $\cB_{+}\bullet\cB_{\bot}$, as shown by a valuation $v\in\Val_P(\TWO^2_+\star\TWO_\bot)$ with $v(\bot)=(0,1)$ and $v(p)=(1,0)\neq (1,1)$, for which necessarily $v(\bot + p)=(0,1)\widetilde{+}{}_2(1,0)=(1,1)$.

\item[---] If $\conn=\lambda p_1p_2p_3.\,p_1\land(p_2\lor p_3)$ then $p\land(\bot\lor q)\der q$ holds classically but fails to hold in $\cB_{\conn}\bullet\cB_{\bot}$, as shown by a valuation $v\in\Val_P(\TWO^2_{\conn}\star\TWO_\bot)$ with $v(\bot)=(0,1)$, $v(p)=(1,1)$ and $v(q)=(1,0)\neq(1,1)$, for which necessarily 
$v(p\land(\bot\lor q))=(1,1)\widetilde{\e}{}_2((0,1)\widetilde{\ou}{}_2(1,0))=(1,1)\widetilde{\e}{}_2(1,1)=(1,1)$.
\end{itemize}

\noindent
As it is clear that $\fib{\cB_{\conn\!}}{\cB_{\bot}}\subseteq \cB_{\conn\bot}$, in all cases considered above, we conclude that $\fib{\cB_{\conn\!}}{\cB_{\bot}}\subsetneq \cB_{\conn\bot}$, for~$\conn$ a connective from the restricted list [\texttt{L2}]. 

We now note that each of the other connectives in [\texttt{L1}] expresses some connective from [\texttt{L2}] (actually, in all cases, either~$\ou$ or $\bowtie\,:=\lambda p_1p_2p_3.\,p_1\land(p_2\lor p_3)$ may be seen to be a derived connective).

\begin{itemize}
\item[---] If $\conn=T^{n+2}_{n+1}$ with $n\geq 2$ then $\bowtie$ is expressed by $\lambda p_1p_2p_3.\,T_{n+1}^{n+2}(\overline{p_1}^n,p_2,p_3)$.
\item[---] If $\conn=T^{n+4}_{2}$ with $n\in\nats$ then $\ou$ is expressed by $\lambda p_1p_2.\,T_{2}^{n+4}(\overline{p_1}^2,\overline{p_2}^{n+2})$.
\item[---] If $\conn=\,\imp{}{}$ then $\ou$ is expressed by $\lambda p_1p_2.\,\imp{(\imp{p_1}{p_2})}{p_2}$.
\item[---] If $\conn=\,\dcoimp{}{}$ then $\bowtie$ is expressed by $\lambda p_1p_2p_3.\,\dcoimp{p_1}{(\dcoimp{(\dcoimp{p_1}{p_2})}{p_3})}$.
\item[---] If $\conn=\textsc{if}$ then $\ou$ is expressed by $\lambda p_1p_2.\,\textsc{if}(p_1,p_1,p_2)$.
\item[---] If $\conn=\lambda p_1p_2p_3.\,p_1\lor(p_2\land p_3)$ then $\ou$ is expressed by $\lambda p_1p_2.\,p_1\lor(p_2\land p_2)$.
\item[---] If $\conn=\lambda p_1p_2p_3.\,p_1\lor(p_2 + p_3)$ then $\ou$ is expressed by $\lambda p_1p_2.\,p_1\lor(p_1+ p_2)$.
\item[---] If $\conn=\lambda p_1p_2p_3.\,p_1\e(\imp{p_2}{p_3})$ then $\bowtie$ is expressed by $\lambda p_1p_2p_3.\,p_1\e(\imp{(p_1\e(\imp{p_2}{p_3}))}{p_3})$.
\end{itemize} 

\noindent
This means that if $\widetilde{\conn}\in  \cC_{2}^{\Sigma}$ for some connective in the list [\texttt{L1}] given in the statement, then the same is true also for one of the five connectives in the smaller list [\texttt{L2}]. Thus, we can assume without loss of generality that $\conn$ is in $\text{[\texttt{L2}]}$. % \{\ou,T^3_2,\neg,+,\lambda pqr.p\land(q\lor r)\}$. 
{Let $\bt_\conn$ be such that $\TWO_\conn=\TWO^{\bt_\conn}_\Sigma$. 
%Note that 
We are thus under the applicability conditions of Prop.~\ref{liftspecial}$\mathbf{(c)}$ and Prop.~\ref{lift} with $\bt=\bt_\conn\cup \mathsf{id}_{\bot}$.
Hence, from $\cB_{\conn}\,\bullet\, \cB_\bot\neq \cB_{\TWO_{\conn \bot}}$ and $\TWO_{\conn\bot}=\TWO^\bt_{\Sigma\cup\{\bot\}}$, we conclude that 
$\cB_{\Sigma}\,\bullet\, \cB_\bot\neq %\cB^\bt_{\TWO_{\conn \bot}}=
\cB_{\TWO_{\Sigma \bot}}$.
%
%As 
%$\cB_{\conn}\,\bullet\, \cB_\bot$ is not characterized by the $2$-valued matrix $\TWO_{\conn\bot}=\TWO_{\Sigma\cup\{\bot\}}^\bt$ given the translation $\bt(\conn)\in L_\Sigma(P)$, it follows from Corollary~\ref{colapse} 
%
% here ok! but we need to note $\bt$ is injective in every case! however we can use $\bt$ and Proposition~\ref{lifttrans} directly to show the required instances are not valid in the fibring...
}
% that also $\cB_{\Sigma}\,\bullet\, \cB_\bot$ cannot be characterized by $\TWO_{\Sigma\cup\{\bot\}}$ and thus $\fib{\cB_{\Sigma\!}}{\cB_{\bot}}\subsetneq\cB_{\Sigma\cup\{\bot\}}$.
\end{proof}

%{\color{Purple} ALL THE examples with BOT!}

%We have seen in Prop.~\ref{withtop}$\mathbf{(c)}$ that combinations involving the logic of classical bottom $\cB_\bot=\bbbot_\bot$ are in general not as straightforward as one might naively expect. The following batch of examples illustrates the situation. %precisely.
% 
%In the next section we shall see that merging any Boolean/classical connective with $\bot$ always yields a $4$-\nvalued{} logic (Proposition~\ref{withbot}).
%However, there are a variety of interesting subcases that are illustrated in the following examples.
%%but we shall see here in examples that some cases can be simplified.
%Resumo:
%\begin{itemize}
%\item $\biimp$ and $\bot$ (classical)
%\item $\neg$ and $\bot$  ($3$-\nvalued{})
%%That it is $3$-\nvalued{}, follows from the fact that ${\Mt}_\neg$ defined above and $\TWO_\neg$ are saturated,
%%hence, $\cB_{\neg}\bullet \cB_{\bot}=\cL_{{\Mt}_\neg\bullet \TWO_\neg}$, that is, 
%%the extension of ${\Mt}_\neg$ with $\tilde{\bot}=\{0,\frac{1}{2}\}$.
%
%\item $\to$ and $\bot$ ($4$-valued)
%
% \item $\not\to$ and $\bot$ ($4$-\nvalued{})
%
%
%
%\end{itemize}

The following example illustrates, together with Ex.~\ref{coimpbot} and Ex.~\ref{ex:implicabot}, the variety of behaviors that may arise from adding a $0$-place bottom to the logic of a single classical connective, as described by Prop.~\ref{badbottom}:

\begin{example}[Negation and bottom]\label{negbot}
\em
We now consider fibring the logics $\cB_\neg$ and $\cB_\bot$ of classical negation and bottom. 
Recall from Ex.~\ref{twonegs} that $\TWO_\neg$ is not saturated but we can consider instead the $3$-valued saturated matrix $\Mt^3_\neg$. Note also that $\Mt^{\bbbot}_\bot=\TWO_\bot$, and from Prop.~\ref{withtop}  it follows that 
$\cB_\neg\bullet\cB_\bot$ is characterized by the $3$-valued \nmatrix{} $\Mt^3_\neg\star \TWO_\bot=
%=\Mt^3_{\neg\bot}$.
%$\Mt^3_{\neg\bot}=
 \tuple{\{0,\frac{1}{2},1\},\{1\},\widetilde{\cdot}_\star}$
with $\widetilde{\neg}_\star:=\widetilde{\neg}^3$ and $\widetilde{\bot}_\star:=\{0,\frac{1}{2}\}$.

	To see that $\cB_{\neg}\bullet\cB_\bot$ is not deterministically many-valued we point out the fact that $\neg \bot \der \neg \bot$ but $\not\der \neg \bot$, which implies the failure of the cancellation property. It follows, of course, that $\cB_{\neg}\bullet\cB_\bot$ is strictly weaker than $\cB_{\neg\bot}$.
	A~complete calculus for $\cB_{\neg\bot}$ may be obtained by adding to the calculus of $\cB_{\neg}$ a single interaction axiom, namely:
	\begin{center}
\scalebox{1.2}{
\begin{tabular}{ccccccc}
% \hline
% &$\cL_{\ea}$&$\bullet$& $\cL_{\oub}$&\\[.4mm]
% \hline
$\frac{}{\;\; \neg\bot\;\;}$
% &$\frac{\;p\;}{\;\; \neg\neg p\;\;}$&$\frac{\;\neg\neg p\;}{\;\; p\;\;}$&$\frac{\;\;p \;\;\;\;  \neg p\;\;}{q}$\\[.4mm]
%&&&&\\ 
%[.6mm]
 %\hline
\end{tabular}
}
\end{center}
\noindent
{Completeness of the resulting calculus may easily be confirmed using Lemma~\ref{addaxioms}(b). 
Indeed, we know from Prop.~\ref{withtop} that $\TWO_\neg^2\star \TWO_\bot$ also defines the same logic.
%it is also easy to see that it is saturated (since $\Mt^3_\neg\star \TWO_\bot$ is a quotient of $\TWO_\neg^2\star \TWO_\bot$).
Clearly, any~$v$ that validates the above interaction axiom $\neg\bot$ is such that $v(\bot)=(0,0)$, therefore $v$ is actually a valuation over $\TWO_{\neg\bot}^2$. % and completeness follows.
%from the fact that valuations
%$\Val(\TWO_{\neg\bot}^2)=\{v:v\in\Val(\TWO_\neg^2\star \TWO_\bot),v(\bot)=(0,0)\}$.
%Namely, it is straightforward that the valuations on 
}
%$\Mt^3_\neg\star \TWO_\bot$ that satisfy the interaction axiom are precisely those interpret $\bot$ in $0$. Purging
% the $3$-valued \nmatrix{} from $\frac{1}{2}$ we obtain $\TWO_{\neg\bot}$.{\color{red}wrong argument!}
%{\color{red}completeness?}
\hfill$\triangle$
%
%Completeness of the resulting calculus can be easily checked using Lemma~\ref{addrules}. Namely, it is straightforward that the valuations on 
%$\Mt^3_\neg\star \TWO_\bot$ that satisfy the interaction axiom are precisely those interpret $\bot$ in $0$. Purging
% the $3$-valued \nmatrix{} from $\frac{1}{2}$ we obtain $\TWO_{\neg\bot}$.{\color{red}wrong argument!}
%
%Note that if we consider a $k$-place bottom $\bot_k$,
%then $\cB_{\neg}\bullet \bot_{\conn}$ {\color{Gray}has cancellation as without $0$-place operators cancellation coincides with weak-cancellation and is preserved under fibring and }
%extending $\Mt^3_\neg$ with $\widetilde{\conn}=\{0,\frac{1}{2}\}$, by Proposition~\ref{withtop},
%%obvious adaptation of the above matrix
% still gives a complete semantics to this logic, thus providing an example of a logic that is many-valued (not finitely, Proposition~\ref{prop:failingmaya}) but is $3$-\nvalued{}.{\color{red}interesting, but a misfit here, weak cancellation? not finitely-valued?}\hfill$\triangle$
%%(the proof it is not finitely valued can be base on $\Delta^k=\{\bot_1^ip\}$ and $\varphi^k=\neg\bot_1^jp$ for $i\neq j$.)
\end{example}

\section{Putting it all together}
\label{sec:summary}

\subsection{Characterizing the Boolean combinations}

Building on Prop.~\ref{topisnice}, Prop.~\ref{notverydifficult}, Prop.~\ref{prop:failingmaya}, Prop.~\ref{propequiv} and Prop.~\ref{badbottom}, from the previous subsections, we are finally able to identify, in the next theorem, the precise conditions for the recovery of a fragment of classical logic by fibring disjoint classical components. The facts about Boolean clones highlighted in Rem.~\ref{rem:clones} turn out to be essential in proving the result, which takes indeed full advantage of the fact that every very significant connective expresses some connective in $ [\texttt{L0}]$.

%The last part still? needs POST's LATTICE! Allows us to finely separate the positive from negative cases with $\bot$.
%And is a corollary of the following remark plus Lemma 1.
 
%\paragraph*{ \emph{`Lemma 1.'}}~\\

%For the listed cases (the ones that come from the remark and are not expressible by $\biimp$, we have that 
%$\cB_{\conn}\bullet \cB_\bot\subsetneq \cB_{\conn\bot}$.\\

%Proof: 

%and $\sigma:P\longrightarrow  P\cup \{\top\}$ such that
%$\Gamma^\sigma\der_{\conn\bot} C^\sigma$ yet $\Gamma^\sigma\not\der_{\conn\fibS\bot} C^\sigma$,
%thus concluding that $\fib{\cB_{\conn}}{\cB_{\bbot}}\subsetneq \cB_{\conn\bot}$.

%\begin{corollary} %{ \emph{`Lemma 1.'}}
% 
% If $\conn$ expresses any of the connectives in the above lemma then  $\fib{\cB_{\conw}}{\cB_{\bbot}}\subsetneq \cB_{\conw\cup\{\bot\}}$.
% 
%\end{corollary}
%\begin{proof}
%It follows from Propositions~\ref{badbottom} and \ref{colapse}.
%\end{proof}
%%proff
%
%Hence, by applying {LIFTING}, we obtain that  for $\conw$ in the conditions of the statement.
%
%Regarding  LIFTING: do we need the separate lemma? can it become a remark? It seems one uses different basic properties of matricial semantics...
%try to understand what is the most general thing we can say...

%\paragraph*

\begin{theorem} %{\emph{`Theorem 2.'}} 
\label{lem:colapse}
If the signatures $\Sigma_1$ and $\Sigma_2$ are disjoint then $\fib{\cB_{\Sigma_1\!}}{\cB_{\Sigma_2}}={\cB_{\Sigma_1\cup\Sigma_2}}$ if and only if  \underline{either}: \begin{itemize}
 \item[$\mathbf{(a)}$]  $\cC_2^{\Sigma_i}\subseteq\cC_2^\top$ for some $i\in\{1,2\}$, % (generalizes Rautenberg)
 \underline{or}
 \item[$\mathbf{(b)}$]  $\cC_2^{\Sigma_1},\cC_2^{\Sigma_2}\subseteq\cC_{2}^{\land\top\bot}$, 
 \underline{or}
 %$\conn_i$ is either bottom-like or a conjunction-projection, for $i=1,2$. 
 \item[$\mathbf{(c)}$] $\cC_2^{\Sigma_i}\subseteq\cC_2^{\biimpscript{}{}}$ and $\bot\in\Sigma_j$ is the only non-top like connective in $\Sigma_j$, for $i\neq j$ with $i,j\in\{1,2\}$.
 
 % $\conn_i=\bot$ ($0$-place bottom) and $\conn_{3-i}$ is  expressed {\color{red}  being expressed seems short for the IFF, that is, if it is not expressed it may still not express a base of its clone!} by $\eq$ (clone of equivalence) 
% \item[$\mathbf{(b)}$]  $\conn_1$ and $\conn_2$ are both conjunction-projections, \underline{or}
% \item[$\mathbf{(c)}$]   $\conn_1$ is a conjunction-projection and $\conn_2$ is bottom-like, \underline{or}
% \item[$\mathbf{(d)}$]  both $\conn_1$ and $\conn_2$ are bottom-like.
\end{itemize}
\end{theorem}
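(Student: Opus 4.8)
I would prove the two directions separately, the forward (``if'') direction being essentially bookkeeping over the positive results already in place. The plan is to dispatch each clause by the matching proposition: clause~$\mathbf{(a)}$ is Prop.~\ref{topisnice} (after using commutativity of fibring to place the top-like fragment second), and clause~$\mathbf{(b)}$ is Prop.~\ref{notverydifficult} verbatim. For clause~$\mathbf{(c)}$, assuming $\cC_2^{\Sigma_1}\subseteq\cC_{2}^{\biimpscript{}{}}$ and $\Sigma_2=\{\bot\}\cup T$ with $T$ a set of top-like connectives, I would first absorb $T$ using Prop.~\ref{topisnice} and associativity/commutativity of $\bullet$, rewriting $\fib{\cB_{\Sigma_1}}{\cB_{\Sigma_2}}=\fib{\cB_{\Sigma_1\cup T}}{\cB_\bot}$; since $\top$ is bi-implication-definable (as $\biimp{p}{p}$), every top-like function lies in $\cC_{2}^{\biimpscript{}{}}$, whence $\cC_2^{\Sigma_1\cup T}\subseteq\cC_{2}^{\biimpscript{}{}}$, and Prop.~\ref{propequiv} closes the case with $\fib{\cB_{\Sigma_1\cup T}}{\cB_\bot}=\cB_{\Sigma_1\cup\Sigma_2}$.

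For the converse I would argue the contrapositive: supposing none of $\mathbf{(a)}$--$\mathbf{(c)}$ holds, I exhibit a classically valid entailment lost under fibring. Failure of $\mathbf{(b)}$ yields a very significant component, say $\Sigma_1$; because $\cC_{2}^{\land\top\bot}$ is a clone and (cf.~Rem.~\ref{rem:clones}) its members are exactly the bottom-like connectives and projection-conjunctions, $\Sigma_1$ must already \emph{contain} a basic very significant connective $\conn_1$, necessarily of arity $\geq1$. Failure of $\mathbf{(a)}$ forces $\Sigma_2$ also to contain some non-top-like connective. The principal case is when $\Sigma_2$ contains a \emph{basic} non-top-like connective $\conn_2$ of arity $\geq1$ (equivalently, distinct from $\bot$): here I would re-run the construction of Prop.~\ref{prop:failingmaya}, but over the saturated matrices $\TWO_{\Sigma_1}^\omega$ and $\TWO_{\Sigma_2}^\omega$ characterizing the two fragments and with this $\conn_1,\conn_2$. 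The facts~(a)--(e) used there depend only on the Boolean functions of $\conn_1,\conn_2$ and on saturation of the components, and the nestings $\theta^n(p)$ stay syntactically distinct precisely because $\conn_2$ is basic (the subtle point that fails for a derived $\lambda p.\bot$, whose nestings collapse). Thus the $\{\varphi_t\}$ remain pairwise non-equivalent, $\fib{\cB_{\Sigma_1}}{\cB_{\Sigma_2}}$ is not locally tabular, and since $\cB_{\Sigma_1\cup\Sigma_2}$ is, the inclusion is strict.

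The complementary case is when $\Sigma_2$ contains no basic non-top-like connective of arity $\geq1$; then every basic connective of $\Sigma_2$ is top-like or a nullary $\bot$, so $\Sigma_2=\{\bot_1,\dots,\bot_m\}\cup T$ with $T$ top-like and $m\geq1$. Absorbing $T$ into $\Sigma_1$ via Prop.~\ref{topisnice} reduces the problem to $\fib{\cB_{\Sigma_1'}}{\cB_{\bot_1\cdots\bot_m}}$ with $\Sigma_1'=\Sigma_1\cup T$ still very significant. If $m\geq2$, condition~$\mathbf{(c)}$ fails structurally, and I would prove strictness directly as in Ex.~\ref{3+botS}: distinct copies of $\bot$ may receive distinct undesignated values in the $\Sigma_1'$-coordinate of the strict product, so a very significant $\conn_1$ separates them and the classically valid $\conn_1(p_1,\dots,p_{k-1},\bot_1)\der\conn_1(p_1,\dots,p_{k-1},\bot_2)$ (with a non-projective last slot) fails. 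If $m=1$, then $\Sigma_2=\{\bot\}\cup T$ has the shape demanded by $\mathbf{(c)}$ for $j=2$, so the failure of $\mathbf{(c)}$ must be $\cC_2^{\Sigma_1}\not\subseteq\cC_{2}^{\biimpscript{}{}}$; invoking the refinement of Post's lattice behind Rem.~\ref{rem:clones}---that $[\texttt{L1}]$ is exactly the part of $[\texttt{L0}]$ lying outside $\cC_{2}^{\biimpscript{}{}}$, and that a very significant clone escaping $\cC_{2}^{\biimpscript{}{}}$ realizes an $[\texttt{L1}]$ connective---I would conclude $\widetilde{\conn}\in\cC_2^{\Sigma_1'}$ for some $\conn\in[\texttt{L1}]$ and apply Prop.~\ref{badbottom} to obtain $\fib{\cB_{\Sigma_1'}}{\cB_\bot}\subsetneq\cB_{\Sigma_1'\cup\{\bot\}}=\cB_{\Sigma_1\cup\Sigma_2}$.

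The main obstacle is exactly the clone-theoretic input from Post's lattice that licenses promoting the single-connective Propositions~\ref{prop:failingmaya} and~\ref{badbottom} to arbitrary signatures: that a very significant clone always contains a \emph{basic} very significant connective, that it realizes an $[\texttt{L0}]$ connective, and---in the sharper form needed when $m=1$---that if it is not contained in $\cC_{2}^{\biimpscript{}{}}$ it realizes an $[\texttt{L1}]$ connective. A secondary care point is keeping the bookkeeping honest across the reductions (disjointness of signatures and the hypothesis $\bot\notin\Sigma$ when invoking Props.~\ref{propequiv} and~\ref{badbottom}), together with isolating the multiple-bottom subcase, which is genuinely outside the scope of the single-bottom Prop.~\ref{badbottom} and so requires the separate valuation of Ex.~\ref{3+botS}.
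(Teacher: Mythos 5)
Your proposal is correct in outline and reaches the theorem by the same case decomposition as the paper: the forward direction is essentially the paper's verbatim (Prop.~\ref{topisnice} for~$\mathbf{(a)}$, Prop.~\ref{notverydifficult} for~$\mathbf{(b)}$, and for~$\mathbf{(c)}$ the same regrouping $\fib{\cB_{\Sigma_1}}{\cB_{\Sigma_2}}=(\cB_{\Sigma_1}\bullet\cB_{\Sigma_2\setminus\{\bot\}})\bullet\cB_\bot$ closed by Prop.~\ref{propequiv}). Your converse, however, runs on genuinely different machinery. The paper keeps its negative results at the level of single (possibly derived) connectives and transfers them to arbitrary signatures through the translation apparatus (Prop.~\ref{halflift}, with Prop.~\ref{liftspecial} doing the corresponding work inside the proof of Prop.~\ref{badbottom}); you instead observe that a clone non-containment is always witnessed by a \emph{basic} connective of the signature, and then re-run the valuation constructions of Prop.~\ref{prop:failingmaya} directly over the saturated product $\TWO^\omega_{\Sigma_1}\star\TWO^\omega_{\Sigma_2}$. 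This is sound: facts (a)--(e) there are classical (non)consequences that persist in the full fragments, $\omega$-saturation supplies the component valuations, and Lemma~\ref{mergingvals} merges them exactly as before, so non-local-tabularity of the full fibring follows and gives strictness. What your route buys is that it sidesteps the delicate question of when \emph{failures} of consequence transfer along a translation (Prop.~\ref{halflift} only goes one way, and its converse needs the special hypotheses of Prop.~\ref{liftspecial}); what it costs is redoing, rather than citing, the constructions. A second divergence: the paper disposes of multiple nullary bottoms only after establishing $\cC_2^{\Sigma_1}\subseteq\cC_2^{\biimpscript{}{}}$, via expressibility of $\mathsf{+^3}$ together with Ex.~\ref{3+botS} and the translation transfer, whereas your ordering of subcases requires (and you supply) a direct argument for $m\geq 2$ with an arbitrary very significant $\Sigma_1'$. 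Both routes lean on the same Post-lattice input behind Rem.~\ref{rem:clones}, including the sharper fact that a very significant clone outside $\cC_2^{\biimpscript{}{}}$ realizes an [\texttt{L1}] connective, so neither is more elementary on the clone-theoretic side.

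One slip to repair in your $m\geq 2$ subcase: the proposed witness $\conn_1(p_1,\dots,p_{k-1},\bot_1)\der\conn_1(p_1,\dots,p_{k-1},\bot_2)$ does not fail when the chosen non-projective slot is fictitious. For instance, $\conn_1=\lambda p_1p_2.\,\neg p_1$ is very significant and its second slot is non-projective, yet $\widetilde{\conn_1}$ ignores that slot, so in the strict product both sides of the entailment always receive the same $\Sigma_1'$-coordinate and the consequence \emph{holds} in the fibring. The fix is immediate: choose a non-projective slot on which $\widetilde{\conn_1}$ actually depends. Such a slot must exist for any very significant $\conn_1$, since a Boolean function depending only on its projective slots is either constantly false or the conjunction of those slots, i.e., bottom-like or a projection-conjunction, contradicting very significance; with a genuinely dependent slot your coordinatewise construction of the separating valuation over $\TWO^\omega_{\Sigma_1'}\star\TWO_{\bot_1\cdots\bot_m}$ goes through as you describe.
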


\begin{proof}
If {\bf(a)} is the case, then $\fib{\cB_{\Sigma_1\!}}{\cB_{\Sigma_2}}={\cB_{\Sigma_1\cup\Sigma_2}}$ follows from Prop.~\ref{topisnice}.
If {\bf(b)} is the case, then $\fib{\cB_{\Sigma_1\!}}{\cB_{\Sigma_2}}={\cB_{\Sigma_1\cup\Sigma_2}}$ follows from Prop.~\ref{notverydifficult}.
If {\bf(c)} is the case, then assume without loss of generality that $i=1$ and $j=2$. First observe that $\cB_{\Sigma_2}=\cB_{\Sigma_2\setminus\{\bot\}}\bullet\cB_\bot$, and also $\cB_{(\Sigma_1\cup\Sigma_2)\setminus\{\bot\}}=\cB_{\Sigma_1}\bullet\cB_{\Sigma_2\setminus\{\bot\}}$, both facts being justified by Prop.~\ref{topisnice} as 
$\cC_2^{\Sigma_2\setminus\{\bot\}}\subseteq\cC_2^\top$.
 Thus, $\fib{\cB_{\Sigma_1\!}}{\cB_{\Sigma_2}}={\cB_{\Sigma_1\!}}\bullet({\fib{\cB_{\Sigma_2\setminus\{\bot\}\!}}{\cB_\bot}})=(\cB_{\Sigma_1\!}\bullet \cB_{\Sigma_2\setminus\{\bot\}\!})\bullet\cB_\bot=\cB_{(\Sigma_1\cup\Sigma_2)\setminus\{\bot\}}\bullet\cB_\bot={\cB_{\Sigma_1\cup\Sigma_2}}$, this last step being a consequence of Prop.~\ref{propequiv} since $\cC_2^{\top}\subseteq\cC_2^{\biimpscript{}{}}$, and therefore $\cC_2^{(\Sigma_1\cup\Sigma_2)\setminus\{\bot\}}\subseteq\cC_2^{\biimpscript{}{}}$.\smallskip

Conversely, assume that $\fib{\cB_{\Sigma_1\!}}{\cB_{\Sigma_2}}={\cB_{\Sigma_1\cup\Sigma_2}}$ and that neither {\bf(a)} nor {\bf(b)} are the case. This means that one of the logics, say $\cB_{\Sigma_1}$, expresses a very significant connective, while the other logic, $\cB_{\Sigma_2}$, expresses some non-top-like connective~$\conn$. From Prop.~\ref{prop:failingmaya} and Prop.~\ref{halflift}
it follows that $\conn=\bot$. 
Thus, it follows from Prop.~\ref{badbottom} (and Rem.~\ref{rem:clones}) and Prop.~\ref{halflift} that $\cC_2^{\Sigma_1}\subseteq\cC_2^{\biimpscript{}{}}$. However, Ex.~\ref{3+botS} and Prop.~\ref{halflift} show that there cannot be two syntactically distinct $0$-place bottom connectives, since $\lambda p_1p_2p_3.\,p_1+p_2+p_3$ is expressed by every
very significant connective in $\cC_2^{\biimpscript{}{}}$. Hence, we conclude that $\bot\in\Sigma_2$ is the only non-top-like connective in~$\Sigma_2$.
\end{proof}

The following result formulates the precise conditions under which full classical logic may be recovered by fibring disjoint fragments of it. Again we take advantage of Post's lattice, highlighted in Rem.~\ref{rem:clones}, namely using the identification of the Boolean clones which are maximal with respect to $\top$.

\begin{corollary} %{\emph{`Theorem 3.'}}
\label{fragments-that-work}
Let $\Sigma_1$ and $\Sigma_2$ be disjoint signatures such that $\cC_2^{\Sigma_1},\cC_2^{\Sigma_2}\neq\cC_{2}$ but $\cC_2^{\Sigma_1\cup\Sigma_2}=\cC_{2}$. 
%
%\noindent
Then, $\fib{\cB_{\Sigma_1\!}}{\cB_{\Sigma_2}}={\cB_{\Sigma_1\cup\Sigma_2}}$ if and only if $\cC_{2}^{\Sigma_i}\in \{\mathcal{D},\mathcal{T}_0^\infty\}\cup \{\mathcal{T}_0^{n+1}:n\in \nats\}$ and $\cC_{2}^{\Sigma_j}= \mathcal{UP}_1$, {for $i\neq j$ with $i,j\in \{1,2\}$}.
%or \emph{vice versa}.
 %is not classical, i.e., $\cB_{{\conw}_1}\fibS \cB_{\conw_2}\neq \cB_{{\conw}}$.
% We promised to solve the problem but only the second really does it, we should weaken the claims above..in the conclusion we promise the rest...
\end{corollary}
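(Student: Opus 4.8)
The plan is to derive this entirely from Theorem~\ref{lem:colapse}, keeping the standing hypotheses $\cC_2^{\Sigma_1},\cC_2^{\Sigma_2}\neq\cC_2$ and $\cC_2^{\Sigma_1\cup\Sigma_2}=\cC_2$ in force and reading off the relevant facts about Post's lattice from Rem.~\ref{rem:clones}. For the \emph{if} direction I would assume without loss of generality that $\cC_2^{\Sigma_j}=\mathcal{UP}_1=\cC_2^\top$; then condition~$\mathbf{(a)}$ of Theorem~\ref{lem:colapse} holds immediately, yielding $\fib{\cB_{\Sigma_1}}{\cB_{\Sigma_2}}=\cB_{\Sigma_1\cup\Sigma_2}$. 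Note that the precise identity of $\cC_2^{\Sigma_i}$ plays no role in this direction.

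For the \emph{only if} direction I would start from $\fib{\cB_{\Sigma_1}}{\cB_{\Sigma_2}}=\cB_{\Sigma_1\cup\Sigma_2}$ and apply Theorem~\ref{lem:colapse}, so that one of $\mathbf{(a)}$, $\mathbf{(b)}$, $\mathbf{(c)}$ must hold; the crux is to discard $\mathbf{(b)}$ and $\mathbf{(c)}$ by exploiting functional completeness. Were $\mathbf{(b)}$ the case, then $\cC_2^{\Sigma_1\cup\Sigma_2}\subseteq\cC_2^{\land\top\bot}\subsetneq\cC_2$, contradicting $\cC_2^{\Sigma_1\cup\Sigma_2}=\cC_2$. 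Were $\mathbf{(c)}$ the case, say with $\cC_2^{\Sigma_1}\subseteq\cC_2^{\biimpscript{}{}}$ and every connective of $\Sigma_2$ other than $\bot$ top-like, then---using that the top-like connectives are constant functions and that $\biimp{}{}$, $\top$ and $\bot$ are all affine---I get $\cC_2^{\Sigma_1\cup\Sigma_2}\subseteq\cC_2^{\biimpscript{}{}\top\bot}=\mathcal{A}\subsetneq\cC_2$, again a contradiction. Hence only $\mathbf{(a)}$ survives.

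It then remains, assuming $\mathbf{(a)}$ with (say) $\cC_2^{\Sigma_2}\subseteq\cC_2^\top=\mathcal{UP}_1$, to pin down the two clones exactly. From $\cC_2=\cC_2^{\Sigma_1\cup\Sigma_2}\subseteq\cC_2^{\Sigma_1}\vee\mathcal{UP}_1$ I obtain $\cC_2^{\Sigma_1}\vee\mathcal{UP}_1=\cC_2$ with $\cC_2^{\Sigma_1}\neq\cC_2$; by the classification in Rem.~\ref{rem:clones} of the incomplete clones whose join with $\mathcal{UP}_1$ equals $\cC_2$, this forces $\cC_2^{\Sigma_1}\in\{\mathcal{D},\mathcal{T}_0^\infty\}\cup\{\mathcal{T}_0^{n+1}:n\in\nats\}$. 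Finally I would upgrade $\cC_2^{\Sigma_2}\subseteq\mathcal{UP}_1$ to an equality: since the top-like connectives are exactly the constant-$1$ functions, $\mathcal{UP}_1$ consists only of projections and constant-$1$ functions, and hence its sole proper subclone is the projection clone $\cC_2^\varnothing$; but $\cC_2^{\Sigma_2}=\cC_2^\varnothing$ would give $\cC_2^{\Sigma_1\cup\Sigma_2}=\cC_2^{\Sigma_1}\neq\cC_2$, against completeness, so $\cC_2^{\Sigma_2}=\mathcal{UP}_1$, as claimed.

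The main obstacle is not logical but combinatorial: the argument hinges on correct bookkeeping inside Post's lattice---that $\cC_2^{\land\top\bot}$ and $\mathcal{A}$ are proper clones, that the clones listed in Rem.~\ref{rem:clones} are \emph{exactly} the incomplete ones absorbed into $\cC_2$ by $\mathcal{UP}_1$, and that $\mathcal{UP}_1$ covers only $\cC_2^\varnothing$. Granting these lattice facts, the transfer through Theorem~\ref{lem:colapse} is routine.
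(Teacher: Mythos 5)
Your proposal is correct and follows essentially the same route as the paper's own proof: invoke Theorem~\ref{lem:colapse}, rule out cases~$\mathbf{(b)}$ and~$\mathbf{(c)}$ via functional completeness of $\Sigma_1\cup\Sigma_2$, and then use the Post-lattice facts from Rem.~\ref{rem:clones} (that the only proper subclone of $\mathcal{UP}_1$ is $\cC_2^\varnothing$, and that $\mathcal{D}$, $\mathcal{T}_0^\infty$, $\mathcal{T}_0^{n+1}$ are precisely the incomplete clones whose join with $\mathcal{UP}_1$ is $\cC_2$) to pin down the two clones. The only difference is cosmetic: you spell out the exclusion of case~$\mathbf{(c)}$ via the affine clone $\cC_2^{\biimpscript{}{}\top\bot}=\mathcal{A}$, which the paper leaves implicit, and you identify $\cC_2^{\Sigma_1}$ before upgrading $\cC_2^{\Sigma_2}\subseteq\mathcal{UP}_1$ to equality, whereas the paper proceeds in the opposite order.
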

\begin{proof}
If $\fib{\cB_{\Sigma_1\!}}{\cB_{\Sigma_2}}={\cB_{\Sigma_1\cup\Sigma_2}}$, then one of the cases into which Thm.~\ref{lem:colapse} is divided must apply. However, cases~({\bf b}) and~({\bf c}) are not possible because in those cases we would have $\cC_2^{\Sigma_1\cup\Sigma_2}\neq\cC_{2}$. Thus, ({\bf a}) must be the case, and $\cC_2^{\Sigma_i}\subseteq\cC_2^\top$, and actually $\cC_2^{\Sigma_i}=\cC_2^\top=\mathcal{UP}_1$ simply because the other possibility would be $\cC_2^{\Sigma_i}=\cC_2^\varnothing$ (the only clone properly contained in $\cC_2^\top$ is $\cC_2^\varnothing$).
%$\cC_2^{\Sigma_1},\cC_2^{\Sigma_2}\neq\cC_{2}$ but $\cC_2^{\Sigma_1\cup\Sigma_2}=\cC_{2}$. 
Hence, $\cC_{2}^{\Sigma_j}\in \{\mathcal{D},\mathcal{T}_0^\infty\}\cup \{\mathcal{T}_0^{n+1}:n\in \nats\}$, as these are precisely the clones which are maximal with respect to $\mathcal{UP}_1$ (see Rem.~\ref{rem:clones}).
\end{proof}

%Note that Example~\ref{coimptop} provides a particular application of the above corollary.

\subsection{Summing it up}

In Section~\ref{sec:merging} we have analyzed several examples of combinations of classical connectives produced through fibring (namely, by merging the corresponding axiomatizations), including their characterizations through (logical) (N)matrices, as well as the interaction principles needed for the corresponding fragment of classical logic to be recovered. 
It is worth taking a more abstract look at these examples and the results that structure them.

A first batch of examples that was considered concerned the cohabitation, in the same logic, of two copies of the same Boolean connective. 
Already there one can find all sorts of interesting phenomena arising.  As shown, the addition (through fibring) to the logic of classical conjunction of another copy of classical conjunction, with the same behavior, %given by a translated version of the same characteristic rules, 
makes these connectives collapse into one another ($\cB_\e\bullet\cB_\&=\cB_{\e\&}$, Ex.~\ref{ex:conj}). 
On the other hand, the analogous collapse does not occur if we combine, say, two copies of negation ($\cB_\neg\bullet\cB_\sim\neq\cB_{\neg \sim}$, Ex.~\ref{twonegs}), or two copies of disjunction ($\cB_\ou\bullet\cB_{||}\neq\cB_{\ou ||}$, Ex.~\ref{ex:disj}). As we have pointed out, the fibring of two copies of the logic of classical negation does not have a finite-valued characterization, yet is $5$-Nvalued, and the fibring of two copies of the logic of classical disjunction does not even have a finite-valued non-deterministic semantics.

Another batch of examples we have entertained involved the combination of two distinct Boolean connectives. Again, if such combination is produced via fibring, aiming at a common minimal conservative extension of the logics of the connectives given as input, several different phenomena may be observed.  In most interesting cases (such as conjunction plus disjunction $\cB_\e\bullet\cB_\ou\neq\cB_{\e\ou}$, Ex.~\ref{ex:conjdisj}) the combined logic turns out to be subclassical and not characterizable by a finite-valued Nmatrix, and this is also the case in situations (such as disjunction plus negation: $\cB_\ou\bullet\cB_\neg\neq\cB_{\ou\neg}$, Ex.~\ref{ex:disjneg}) in which one could have expected the resulting logic to be functionally complete. However, there are cases (such as coimplication plus top: $\cB_{\dcoimpscript{}{}}\bullet\cB_\top=\cB_{\dcoimpscript{}{} \top}$, Ex.~\ref{coimptop}) in which one actually does obtain full classical logic without the need to impose any sort of additional interaction principles involving the two connectives being combined.

A~particular class of examples that deserved separate attention above involved the combination of the logic of some standard classical connectives with the logic of bottom-like connectives. To a bystander unaware of the results in the present paper, the semantic behaviour observed in this last batch of examples might seem erratic. For instance, while combining the logics of negation and of bottom gives rise to a 3-Nvalued logic ($\cB_\neg\bullet\cB_\bot\neq\cB_{\neg\bot}$, Ex.~\ref{negbot}), and combining the logics of complication and of bottom gives rise to a $4$-Nvalued logic ($\cB_{\dcoimpscript{}{}}\bullet\cB_\bot\neq\cB_{\dcoimpscript{}{}\bot}$, Ex.~\ref{coimpbot}), adding a bottom to the logic of implication results in a deterministically $4$-valued logic ($\cB_{\impscript{}{}}\bullet\cB_\bot\neq\cB_{\impscript{}{}\bot}$, Ex.~\ref{ex:implicabot}). Other curious examples include the addition of a bottom to the logic of bi-implication, which outputs the corresponding fragment of classical logic without the addition of interaction principles ($\cB_{\biimpscript{}{}}\bullet\cB_\bot=\cB_{\biimpscript{}{}\bot}$, Ex.~\ref{equivbot}), and the alternative addition of a 1-place bottom-like connective to the same logic of bi-implication ($\cB_{\biimpscript{}{}}\bullet \cB_{\bot^{\!1}}\neq\cB_{\biimpscript{}{}\bot^{\!1}}$, Ex.~\ref{equivbotun}), which results subclassical, instead.  We have also considered an example in which the logic of a ternary odd-counter (a ternary connective that is true iff exactly one or three of its arguments is true) is fibred with the logic containing two copies of the classical bottom, and the resulting logic turned out to be $8$-Nvalued ($\cB_{\mathsf{+^3}}\bullet \cB_{\bot_1\bot_2}\neq\cB_{\mathsf{+^3}\bot_1\bot_2}$  , Ex.~\ref{3+botS}).

The above mentioned seemingly capricious diet of examples was employed both in motivating and in illustrating the results obtained in the present paper.   Substantially advancing beyond the results of the investigation done at our earlier paper \cite{ccal:jmar:smar:merging}, we have in the preceding subsection at last identified, in Thm.~\ref{lem:colapse} and Cor.~\ref{fragments-that-work}, the precise conditions for the recovery of a fragment of classical logic (for any arbitrary signature, with a 2-valued interpretation in terms of logical matrices) through the fibring of disjoint Boolean components.  It is worth mentioning, nonetheless, that some intermediate results obtained while establishing the foundations for these main results have helped identifying some sufficient conditions for a logic (not) to be finitely-valued (Prop.~\ref{prop:failingmaya}), and in several cases we directly showed that our illustrations had (or did not have) a non-deterministic finite-valued characterization.

\section{What lies ahead}%
\label{sec:conclusion}
%\jm{Recover perhaps some other themes from the beginnning, among which the `collapse problem' and `interaction principles'. Moreover, perhaps it would be useful to add some words on how the choice of \textit{proof formalism} (such as natural deduction or sequents, as opposed to Hilbert calculi) and how the choice of \textit{abstract formalism} (such as multiple-conclusion) affects the issues discussed in this paper? S: yes}

\noindent 
In this paper we have fully uncovered the conditions under which merging the Hilbert calculi of disjoint fragments of classical logic still leads to a fragment of classical logic, or potentially to full classical logic, without the need to introduce further inference rules regulating the interaction between the connectives from each of the fragments. It comes as no surprise that this is an extremely rare event, but there are a few non-trivial and perhaps unexpected exceptions, fully identified at Thm.~\ref{lem:colapse} and Cor.~\ref{fragments-that-work}. The proofs of these results, which we believe to be entirely novel, rely in an essential way on the ingenious classification of two-valued clones by Post~\cite{Post41}. 
Analogous results for fragments of other important logics are thus expected to be far from straightforward. 
It is worth noting that as a byproduct of Prop.~\ref{notverydifficult} and Prop.~\ref{prop:failingmaya}, we have also fully characterized the circumstances under which collapses of classical connectives are produced via fibring, namely, when we are dealing with two copies of a Boolean connective that is not very significant.

Some of the results and the general techniques used in this paper are, nonetheless, applicable well beyond classical logic. Overall, the present investigation may be seen as an application of the recent semantic characterization of disjoint fibring in~\cite{smar:ccal:17}, which uses in a fundamental way the advantages of the non-deterministic environment permitted by \nmatrices{}. 
The myriad of interesting subclassical logics % exotic non-classical logics
that are obtained in all the cases in which the combination of classical fragments fails to be classical, as illustrated in most of the examples,
% in Section~\ref{sec:illust}, 
 are an immediate byproduct of this semantic technology, and that allows the results hereby obtained to extend in a non-trivial way the preliminary results in~\cite{ccal:jmar:smar:merging}. 

A more comprehensive understanding of fibred logics, even beyond the disjoint case, is an obvious avenue for future research. But several other narrower alleys have been opened by the work reported in this paper. For a start, despite having done so for all the examples analyzed, we have not been able to obtain a general categorization of the cases when the logic combining two fragments of classical logic fails to be finitely-valued yet still happens to be finitely-\nvalued{}. It seems that a deeper understanding of finite-\nvalued{}ness is still lacking, parallel to the results of~\cite{finval} with respect to finite-valuedness. We have also not managed to prove in a systematic way the completeness of the calculi obtained by the addition of new interaction rules directly from the \nmatrices{} characterizing the fibring of the underlying fragments of classical logic (note that the resulting calculi are known to be complete, as a result of the techniques introduced by Rautenberg in his notable paper~\cite{Rautenberg1981}). We left these completeness proofs open in a few of the examples, as the notion of a valuation respecting an inference rule turns out to be less innocent than it might seem. In order to systematically tackle this problem, it seems that one should try to employ the technique of `rexpansions', from~\cite{rexpansions}, which advocates first expanding the \nmatrix{} at hand in order to be able to split conflicting behaviours in the evaluation of connectives, that may then be simply refined (purged from an undesired value) when one needs to impose an additional rule on it. The completeness proofs we included in our examples are basic instances of the rexpansion technique. Another, more general but related, path to pursue is targeting a deeper understanding of the algebraic properties of \nmatrices{}. A good example of the perplexities brought about by such a seemingly innocent extension of the notion of logical matrix concerns the definition of derived connectives by abbreviation, which amounts to a straightforward matter for operations on matrices but which brings unsuspected difficulties in \nmatrices{}.%
\footnote{The problem is that, while it is true that any logical matrix~$\Mt$ and any translation~$\bt:\Xi\longrightarrow  L_\Sigma(P)$ allow one to interpret unambiguously any derived connective, so that $\Val_P(\Mt^\bt)=\{v\circ\bt:v\in\Val_P(\Mt)\}$, the latter equality is not true, in general, if $\Mt$ is an \nmatrix{}.  For an example of that, suppose the $1$-place derived connective ${\sim}\in\Xi$ is introduced through~$\bt$ as $\lambda p_1.\,p_1\impscript{}{}\botop$, where~$\impscript{}{}$ is the Boolean implication and~$\botop$ is an unrestrained $0$-place connective. In that case, the induced \nmatrix{} $\Mt^\bt$ would take ${\sim}_{\Mt^\bt}=\{0,1\}$, while the original \nmatrix{}~$\Mt$ can only allow ${\sim}_{\Mt}$ to be affirmation connective or the negation connective.}
In particular, a better understanding of more general applicability conditions for our Prop.~\ref{liftspecial} can very well depend on such a fundamental study of \nmatrices{}. Finally, it is important to get a better grip on the role of saturation in the process of fibring logics, and its interplay with strict products of \nmatrices{}. As we have seen, it is sometimes sufficient to require $k$-saturation for finite~$k$; we believe though that other milder forms of saturation may play a key role in obtaining simpler (in particular, denumerable) semantics for combined logics.

In closing, it is worth noting that the essential role played by the saturation requirement in order to explain the semantics of the combined logics seems to suggest that the emergence of interaction principles is connected with a lack of expressiveness of the 
%usual notion of logic, 
standard Tarskian framework for the study of logics (as hinted also in~\cite{metafib}),
and that 
%the present analysis would be thoroughly different 
the outcome of the present investigation would be entirely different
if we were to adopt multiple-conclusion logics, after~\cite{SS:MCL:78}.

\bibliographystyle{plain}  
%\bibliography{biblio.bib}

 \end{document}